\documentclass[11pt]{article}
\usepackage{thmtools}
\usepackage{thm-restate}
\usepackage{bm}
\usepackage{mathrsfs}           %
\usepackage{tikz}
\usetikzlibrary{positioning,chains,fit,shapes,calc}
\usepackage{algorithm}
\usepackage{algpseudocode}
\usepackage{mathtools}

\usepackage{subcaption}
\usepackage{enumerate}

\usepackage{amsmath,amssymb, amsthm}    %
\usepackage{verbatim}           %
\usepackage{xspace}             %
\usepackage{graphicx,float}     %
\usepackage{ifthen,calc}        %
\usepackage{textcomp}           %
\usepackage{fancybox}           %
\usepackage{hhline}             %
\usepackage{float}              %

\usepackage[vflt]{floatflt}     %
\usepackage[small,compact]{titlesec}
\usepackage{setspace}

\usepackage{enumitem}
\setenumerate[1]{label={(\arabic*)}}

\usepackage{color}
\definecolor{ForestGreen}{rgb}{0.1333,0.5451,0.1333}
\usepackage{thumbpdf}
\usepackage[letterpaper,
colorlinks,linkcolor=ForestGreen,citecolor=ForestGreen,
bookmarks,bookmarksopen,bookmarksnumbered]
{hyperref}
\usepackage{cleveref}
\usepackage[margin = 1in]{geometry}
\newcommand{\showccc}[0]{0}
\newcommand{\ccc}[2][nothing]{%
	\ifthenelse{\showccc=0}{}{
		\ensuremath{^{\Lsh\Rsh}}\marginpar{\raggedright\tiny\textsf{%
				\ifthenelse{\equal{#1}{nothing}}{}{\textbf{#1}\\}#2}}}}
\newtheorem{theorem}{Theorem}
\newtheorem{claim}{Claim}
\newtheorem{proposition}{Proposition}
\newtheorem{corollary}{Corollary}
\newtheorem{definition}{Definition}

\newtheorem{lemma}{Lemma}
\newtheorem{fact}{Fact}

\newtheorem{assumption}{Assumption}
\newtheorem{problem}{Problem}

\newcommand{\defeq}{:=}
\newcommand{\norm}[1]{\left\lVert#1\right\rVert}

\newcommand{\inprod}[2]{\left\langle#1, #2\right\rangle}

\newcommand{\eps}{\epsilon}
\newcommand{\lam}{\lambda}

\newcommand{\R}{\mathbb{R}}

\newcommand{\N}{\mathbb{N}}

\newcommand{\half}{\frac{1}{2}}

\newcommand{\1}{\mathbf{1}}
\newcommand{\E}{\mathbb{E}}
\newcommand{\Var}{\textup{Var}}

\newcommand{\Nor}{\mathcal{N}}

\newcommand{\Tr}{\textup{Tr}}

\newcommand{\xset}{\mathcal{X}}

\newcommand{\ma}{\mathbf{A}}

\newcommand{\id}{\mathbf{I}}

\definecolor{burntorange}{rgb}{0.8, 0.33, 0.0}

\newcommand{\tO}{\widetilde{O}}

\newcommand{\Par}[1]{\left(#1\right)}
\newcommand{\Brack}[1]{\left[#1\right]}
\newcommand{\Brace}[1]{\left\{#1\right\}}
\newcommand{\Abs}[1]{\left|#1\right|}
\newcommand{\Cov}{\textup{Cov}}

\newcommand{\alg}{\mathcal{A}}

\newcommand{\AlternateSample}{\texttt{AlternateSample}}
\newcommand{\K}{\mathcal{K}}
\newcommand{\hpi}{\hat{\pi}}

\newcommand{\T}{\mathcal{T}}
\newcommand{\tvd}[2]{\norm{#1 - #2}_{\textup{TV}}}
\newcommand{\Npsi}{\mathcal{D}^{\vhi}}
\definecolor{ruoqi}{rgb}{0.2, 0.3, 0.7}

\newcommand{\daogao}[1]{\textcolor{blue}{\textbf{daogao:} #1}}

\newcommand{\Inner}{\texttt{InnerLoop}}

\newcommand{\med}{\textup{med}}
\newcommand{\tpi}{\tilde{\pi}}

\newcommand{\vhi}{\varphi}

\newcommand{\brho}{\bar{\rho}}

\newcommand{\br}{\bar{r}}
\newcommand{\jia}{j_{i,b}}
\newcommand{\xsetd}{\xset^*}
\newcommand{\normx}[1]{\norm{#1}_{\xset}}
\newcommand{\normxd}[1]{\norm{#1}_{\xsetd}}
\newcommand{\dkl}{D_{\textup{KL}}}
\newcommand{\prob}{\mathcal{P}}
\newcommand{\data}{\mathcal{D}}
\newcommand{\mech}{\mathcal{M}}

\newcommand{\msig}{\boldsymbol{\Sigma}}
\newcommand{\sset}{\mathcal{S}}
\newcommand{\dd}{\textup{d}}
\newcommand{\Ferm}{F_{\textup{erm}}}
\newcommand{\Fpop}{F_{\textup{sco}}}

\newcommand{\mx}{\mathbf{X}}
\newcommand{\my}{\mathbf{Y}}
\newcommand{\mm}{\mathbf{M}}
\newcommand{\ind}{\mathcal{I}}
\newcommand{\jset}{\mathcal{J}}
\newcommand{\mmu}{\mathbf{U}}
\newcommand{\mb}{\mathbf{B}}
\newcommand{\bbA}{\mathbb{A}}
\newcommand{\bbB}{\mathbb{B}}
\newcommand{\bP}{\overline{P}}

\DeclarePairedDelimiterX{\xdivergence}[2]{(}{)}{%
	#1\;\delimsize\|\;#2%
}

\usepackage{float}
\floatstyle{plain}\newfloat{myfig}{t}{figs}[section]
\floatname{myfig}{\textsc{Figure}}
\floatstyle{plain}\newfloat{myalg}{H}{algs}[section]
\floatname{myalg}{}
\setlength{\fboxrule}{0.8pt}    %
\newcommand{\cN}{\mathcal N}

\newcommand{\FP}{F_{\prob}}
\newcommand{\cV}{\mathcal{V}}
\newcommand{\sign}{\mathrm{sign}}
\newcommand{\hv}{\hat{v}} 
\usepackage{url}

\begin{document}

	\begin{titlepage}
		\def\thepage{}
		\thispagestyle{empty}
		
		\title{Algorithmic Aspects of the Log-Laplace Transform \\ and a Non-Euclidean Proximal Sampler} 
		
		\date{}
		\author{
			Sivakanth Gopi\thanks{Microsoft Research, {\tt sigopi@microsoft.com}}
			\and
			Yin Tat Lee\thanks{Microsoft Research, {\tt yintatlee@microsoft.com}}
			\and
			Daogao Liu\thanks{University of Washington, {\tt dgliu@uw.edu}}
			\and
			Ruoqi Shen\thanks{University of Washington, {\tt shenr3@cs.washington.edu}}
			\and
			Kevin Tian\thanks{Microsoft Research, {\tt tiankevin@microsoft.com}}
		}
		
		\maketitle

\abstract{
The development of efficient sampling algorithms catering to non-Euclidean geometries has been a challenging endeavor, as discretization techniques which succeed in the Euclidean setting do not readily carry over to more general settings. We develop a non-Euclidean analog of the recent proximal sampler of \cite{lee2021structured}, which naturally induces regularization by an object known as the log-Laplace transform (LLT) of a density. We prove new mathematical properties (with an algorithmic flavor) of the LLT, such as strong convexity-smoothness duality and an isoperimetric inequality, which yield a mixing time on our proximal sampler that matches known rates in the Euclidean setting up to a quadratic factor, under a warm start. As our main application, we demonstrate the implications of our framework (and potential improvements thereof) towards achieving near-optimal zeroth-order query complexity-excess risk tradeoffs for differentially private convex optimization in $\ell_p$ and Schatten-$p$ norms for $p \in [1, 2]$. We find our investigation of the LLT to be a promising proof-of-concept of its utility as a tool for designing samplers, and outline directions for future exploration.
}
 		
	\end{titlepage}

\section{Introduction}
\label{sec:intro}

The development of samplers for continuous distributions, under weak oracle access to the corresponding densities, has seen a flurry of recent research activity. For applications in settings inspired by machine learning or computational statistics, this development has in large part built upon connections between sampling and continuous optimization. Inspired by perspectives on sampling as optimization in the space of measures \cite{JordanKO98} and starting with pioneering work of \cite{dalalyan2017theoretical}, a long sequence of results, e.g.\ \cite{Dalalyan17, ChengCBJ18, DwivediCW019, durmus2019high, ChenV19, DurmusMM19, ShenL19, ChenDW020, LeeST20, ChewiLACGR21}, has used analysis techniques from convex optimization to bound the convergence rates of sampling algorithms for densities. We refer the reader to the survey \cite{Chewi23} for a more complete account, but note in almost all cases, the focus has been on sampling from densities satisfying regularity assumptions stated in the Euclidean ($\ell_2$) norm, e.g.\ $\ell_2$-bounded derivatives.

The theory of continuous optimization under regularity assumptions stated for non-Euclidean geometries has played an important role in algorithm design. These geometries naturally arise when the optimization problem is over a structured constraint set, such as an $\ell_p$ ball or a polytope. In diverse applications such as learning from experts \cite{AroraHK12}, sparse recovery \cite{CandesRT06}, multi-armed bandits \cite{BubeckC12}, matrix completion \cite{AgarwalNW10}, fair resource allocation \cite{DiakonikolasFO20}, and robust PCA \cite{Jambulapati0T20}, first-order mirror descent techniques for $\ell_p$ or Schatten-$p$ geometries have been a remarkable success story. Beyond these applications, the theory of self-concordant barriers (and the Riemannian geometries induced by their Hessians) has been greatly influential to the theory of convex programming and interior point methods \cite{nesterov2002riemannian, nemirovski2004interior}.\footnote{Self-concordance requires that the second derivative of a function is stable to perturbations which are measured in the induced norm. For notation and definitions used throughout the paper, see Section~\ref{sec:prelims}.}

\paragraph{Non-Euclidean samplers.} A natural direction for building the theory of logconcave sampling (the analog of convex optimization) is thus to develop samplers which can handle non-Euclidean regularity assumptions and constraint sets. Unfortunately, progress in this direction has relatively lagged behind optimization counterparts, as discretization tools which work well in the Euclidean case do not readily generalize. Briefly (with an extended discussion deferred to Section~\ref{ssec:prior}), most prior attempts at giving non-Euclidean samplers have focused on analyzing variants of the \emph{mirrored Langevin dynamics}, building upon the ubiquitous mirror descent algorithm in optimization \cite{NemirovskiY83}. The key idea of mirror descent is to choose a regularizer $\phi: \xset \to \R$ over a constraint set $\xset$, such that $\phi$ is strongly convex in an appropriate (possibly non-Euclidean) norm $\normx{\cdot}$. The regularizer $\phi$ is then used to define iterative methods for optimizing functions $f$ with regularity in $\normx{\cdot}$.

The sampling analog of this non-Euclidean generalization is to extend the \emph{Langevin dynamics}, a stochastic process inherently catered to the $\ell_2$ geometry, to use Brownian motion reweighted by the Hessian of a regularizer $\phi$. This process, which we call the mirrored Langevin dynamics (MLD), was introduced recently by \cite{ZhangPFP20} (see also \cite{HsiehKRC18} for an earlier incarnation). Several follow-up works attempted to bound convergence rates for discretizations of the MLD process, e.g.\ \cite{AhnC21, Jiang21, LiTVW22}. Unfortunately, many of these analyses have imposed rather strong conditions on $\phi$ beyond strong convexity, e.g.\ a ``modified self-concordance'' assumption used in \cite{ZhangPFP20, Jiang21, LiTVW22} which (to our knowledge) is not known to be satisfied by standard regularizers. Even more problematically, these analyses (as well as an empirical evaluation by \cite{Jiang21}) suggest that without strong relative regularity assumptions between the target density and $\phi$, na\"ive discretizations of MLD inherently do not converge to the target even in the limit. A notable exception is the work of \cite{AhnC21}, which circumvented both issues (the modified self-concordance assumption and a biased limit) using a different MLD discretization; however, it is not always clear that this discretization is feasible for standard choices of $\phi$ and $\xset$.

An alternative to directly discretizing MLD is to use a filter to control bias, akin to the MALA or Metropolized HMC algorithms which are well-studied in the Euclidean case \cite{Besag94, RobertsT96, BouRabeeH12, DwivediCW019, ChenDW020, LeeST20}. However, here too generalizing existing analyses runs into obstacles: for example, typical analyses of MALA and Metropolized HMC rely on bounding the conductance of random walks via isoperimetric inequalities on the target distribution. Prior isoperimetry bounds appear to be tailored to the $\ell_2$ geometry and properties of Gaussians (the basic strongly logconcave distribution in Euclidean settings). Potentially due to this difficulty, to our knowledge no general-purpose extension of MALA or its variants to non-Euclidean norms exists in the literature.\footnote{We mention that in certain geometries induced by structured manifolds (discussed in part in Section~\ref{ssec:prior}), generalizations of MALA or Metropolized HMC have been previously proposed, e.g.\ \cite{GirolamiC11, Barp20}. These works are motivated by related, but different, settings to the ones considered in this work (we mainly study norm regularity, akin to first-order convex optimization), and their focus is not on establishing non-asymptotic mixing time bounds.}

\paragraph{Proximal samplers.} In this paper, we overcome these difficulties by following a third strategy for the design of efficient samplers: a proximal approach recently proposed by \cite{lee2021structured}. To sample from a density $\pi$ on $\R^d$ proportional to  $\exp(-f)$, the algorithm of \cite{lee2021structured} first extends the space to $\R^d \times \R^d$, and defines a joint density $\hpi$ such that, for some parameter $\eta > 0$,
\begin{equation}\label{eq:l2_joint}\dd\hpi(z) \propto \exp\Par{-f(x) - \frac 1 {2\eta}\norm{x - y}_2^2} \dd z\text{ where } z = (x, y) \in \R^d \times \R^d.\end{equation}
It is straightforward to see that for any $\eta$, the $x$-marginal of $\hpi$ is the original distribution $\pi$, and further \cite{lee2021structured} shows that alternating sampling from the conditional distributions of $\hpi$, i.e.\ $\hpi(x \mid y)$ or $\hpi(y \mid x)$, mixes rapidly. We give an extended discussion on recent activity on designing and harnessing proximal samplers building upon \cite{lee2021structured} in Section~\ref{ssec:prior}, but mention that instantiations of the framework have resulted in state-of-the-art runtimes for many structured density families \cite{ChenCSW22, liang2022, GLL22}. Motivated by the success of proximal methods in the Euclidean setting, one goal of our work is to extend this technique to non-Euclidean geometries.

\paragraph{Our approach.} Our main insight is that a generalization of the strategy in \cite{lee2021structured} induces a well-studied object in probability theory called the \emph{log-Laplace transform} (LLT). Letting $\vhi: \R^d \to \R$ be a convex function in the dual space $y \in \R^d$, our generalization of \eqref{eq:l2_joint} defines the joint density
\begin{equation}\label{eq:gen_joint}
	\begin{aligned}
\dd\hpi(z) &\propto \exp\Par{-f(x) + \Par{\inprod{x}{y} - \vhi(y) - \psi(x)}} \dd z,\\
\text{where } \psi(x) &\defeq \log\Par{\int \exp\Par{\inprod{x}{y} - \vhi(y)} \dd y}. 
\end{aligned}
\end{equation}
The function $\psi$ is called the LLT of $\vhi$, and it has an interpretation as a normalizing constant for induced densities $\Npsi_x$ on the dual space proportional to $\exp(\inprod{x}{\cdot} - \vhi)$. Indeed, $\Npsi_x$ is defined exactly so the $x$-marginal of $\hpi$ is $\pi \propto \exp(-f)$. When $\eta = 1$ and $\vhi, \psi$ are quadratics, this is exactly \eqref{eq:l2_joint}; we discuss the case of general $\eta$ in Section~\ref{ssec:techniques}. Moreover, the LLT is a well-studied mathematical object: it arises in probability theory as a \emph{cumulant-generating function}, i.e.\ derivatives of the LLT yield cumulants of the induced distributions $\Npsi_x$, just as derivatives of the MGF yield moments. 

The LLT famously appeared in Cram\'er's theorem on large deviations \cite{Cramer38}, and its cumulant-generating properties have yielded fundamental concentration results in convex geometry \cite{Klartag06, EldanK11, KlartagM12}. More recently, algorithmically-motivated properties of the LLT have been studied in settings such as optimization \cite{bubeck2014entropic}, where it was used to define an optimal self-concordant barrier, as well as connections to localization schemes for sampling from discrete distributions \cite{ChenE22}. 

We continue this investigation by demonstrating new mathematical properties of the LLT with an algorithmic flavor, and showcasing uses of the LLT as a tool for continuous logconcave sampling. In particular, armed with a deeper understanding of the LLT, we overcome several of the aforementioned barriers to non-Euclidean sampler design and develop a generalized proximal sampler. We further give applications of our sampler to obtain new complexity results for non-Euclidean differentially private convex optimization, building upon a connection discovered by \cite{GLL22, GopiLLST23}. We are optimistic that the LLT will find additional uses in sampler design (potentially beyond the proximal sampling framework, building upon the new properties we prove), and suggest a number of avenues of future exploration to the community in Section~\ref{sec:future}.

\subsection{Our results}\label{ssec:results}

In this section, we overview our results, which separate cleanly into three categories.

\paragraph{Algorithmic aspects of the LLT.} It is well-known that the derivatives of the LLT at a point $x \in \R^d$ are \emph{cumulants} of the induced density on $y \in \R^d$:
\[\dd \Npsi_x(y) \propto \exp\Par{\inprod{x}{y} - \vhi(y)} \dd y.\]
For example, $\nabla \psi(x) = \E_{y \sim \Npsi_x}[y]$, and $\nabla^2\psi(x)$ is the covariance of $\Npsi_x$. Further, it was shown in \cite{bubeck2014entropic} that if $\psi$ is the LLT of a convex function $\vhi$, then $\psi$ is convex and self-concordant. Building upon these facts, in Section~\ref{sec:llt}, we prove the following new properties of the LLT.

\begin{itemize}
	\item \textit{Strong convexity-smoothness duality.} Let $\norm{\cdot}$ be a norm on $\R^d$. We prove that if $\vhi: \R^d \to \R$ is $L$-smooth in the dual norm $\norm{\cdot}_*$, its LLT $\psi: \R^d \to \R$ is $\frac 1 {L}$-strongly convex in $\norm{\cdot}$.\footnote{The constant factor $1$ here is optimal, as demonstrated by quadratics.} This fact parallels a similar, well-known form of strong convexity-smoothness duality for Fenchel conjugates \cite{Shalev-Shwartz07, KakadeST09}. Our proof does not require $\vhi$ to be convex. We further show that the converse holds as well: a $\frac 1 L$-strongly convex $\vhi$ has a $L$-smooth LLT.
	\item \textit{Isoperimetry in the Hessian norm.} We prove a one-dimensional isoperimetric inequality for densities of the form $\exp(-\phi)$, where $\phi: \R \to \R$ is self-concordant and convex. By appealing to (a strong variant of) the localization lemma of \cite{lovasz1993random}, this proves that measures which are strongly logconcave with respect to convex and self-concordant $\phi: \R^d \to \R$ satisfy a similar isoperimetric inequality in the Riemannian geometry induced by $\nabla^2 \phi$. Importantly, due to self-concordance of the LLT, this applies to strongly logconcave measures in an LLT.
	\item \textit{Overlap of induced distributions $\Npsi_x$.} We provide a KL divergence bound on the distributions $\Npsi_x$ and $\Npsi_{x'}$ for $x$ and $x'$ which are close in the Riemannian distance induced by $\psi$. Combined with our isoperimetric inequality and a classical argument of \cite{DyerFK91}, this proves a lower bound on the conductance of an alternating sampler for densities of the form \eqref{eq:gen_joint}.
\end{itemize}

These new properties of the LLT suggest that it may find uses in designing samplers under non-Euclidean geometries beyond those explored in Sections~\ref{sec:sampling} and~\ref{sec:apps} of our paper. For example, the LLT of a smooth function is strongly convex and self-concordant, which are exactly the properties required by the mirror Langevin discretization scheme of \cite{AhnC21}. In optimization, regularizers $\phi$ for mirror descent typically only require strong convexity (and not self-concordance). However, controlling the evolution of the geometry induced by $\nabla^2 \phi$ is critical for discretizing MLD schemes, so imposing self-concordance (as opposed to more non-standard regularity such as the modified self-concordance of \cite{ZhangPFP20, Jiang21, LiTVW22}) may be viewed as a minimal assumption. Problematically, standard strongly convex regularizers for mirror descent such as entropy or $\ell_p^2$ \emph{are not} self-concordant, so LLTs are a way of bridging this gap for sampling. Moreover, our new isoperimetric inequality and conductance bounds suggest that LLTs may find use in Metropolized sampling schemes, paving the way for non-Euclidean generalizations of MALA and its variants. 

In some sense, our new duality result is a generic way of taking a strongly convex regularizer and transform it, via the \emph{Fenchel transform} and the \emph{log-Laplace transform}, to another regularizer which is strongly convex in the same norm, but also self-concordant. The first transform makes the function smooth in the dual \cite{KakadeST09}, and the second effectively undoes this change. We will later discuss an application of this framework in improving the oracle complexity of the problem of private stochastic convex optimization in the $\ell_p$ geometry, using the LLT of the $\ell_q^2$ regularizer.

\paragraph{Non-Euclidean proximal sampling.} In Section~\ref{sec:sampling}, we build upon these aforementioned tools to analyze the mixing time of an alternating scheme for sampling densities $\pi$ on convex, compact $\xset \subset \R^d$ equipped with a norm $\normx{\cdot}$, where $\pi$ is proportional to $\exp\Par{-F(x) - \eta \mu \psi(x)}\1_{\xset}(x)$. Here, $F: \xset \to \R$ is convex, $\eta, \mu > 0$ are tunable parameters, and $\psi$ is the LLT of $\eta$-smooth $\vhi: \R^d \to \R$ in the dual norm $\normxd{\cdot}$. We prove in Theorem~\ref{thm:mainllt} that alternately sampling from conditional distributions of the extended density on $z = (x, y) \in \xset \times \R^d$ proportional to
\begin{equation}\label{eq:density_intro}\exp\Par{-F(x) - \eta\mu\psi(x) + \Par{\inprod{x}{y} - \vhi(y) - \psi(x)}}\1_{\xset}(x)\end{equation}
has stationary distribution $\pi$, and converges in $\approx \frac 1 {\eta^2\mu^2}$ iterations for a warm start. More specifically, the convergence rate of our sampler depends polylogarithmically on both the warmness $\beta$ of the point it is initialized with, and the inverse of the total variation error $\delta$. The form of \eqref{eq:density_intro} is the same as \eqref{eq:gen_joint}, but we impose that $f$ is $\eta\mu$-relatively strongly convex in $\psi$.

We first compare this result to the Euclidean proximal sampler of \cite{lee2021structured}, who proved a similar result for alternating sampling densities of the form \eqref{eq:l2_joint}. The main result of \cite{lee2021structured} shows that if $f$ is $\mu$-strongly convex in the $\ell_2$ norm, then alternating sampling from the marginals of \eqref{eq:l2_joint} converges in $\approx \frac 1 {\eta\mu}$ iterations, also with polylogarithmic dependence on the target total variation error. Our result can be viewed as an extension of this result; instead of requiring $\mu$-strong convexity in the $\ell_2$ norm (which is equivalent to relative strong convexity with respect to the function $x \to \half \norm{x}_2^2$), we require $\mu$-relative strong convexity in the function $\eta\psi$. In light of our duality result, $\eta\psi$ is $1$-strongly convex in $\normx{\cdot}$, so it is the natural ``unit'' for measuring strong convexity. 

The two main shortcomings of Theorem~\ref{thm:mainllt}'s rate are that it scales quadratically in $\frac 1 {\eta\mu}$, and depends polylogarithmically on the warmness. In contrast, the rate of \cite{lee2021structured} scales \emph{linearly} in $\frac 1 {\eta\mu}$, and depends \emph{doubly logarithmically} on the warmness. The latter difference is important because in many applications, explicit starting distributions have warmness bounds growing exponentially in parameters such as the dimension $d$. We refer the reader to a discussion in Section 1.1 of \cite{LeeST21} on warmness assumptions under $\ell_2$ geometry, which have created a $\approx \sqrt{d}$-sized gap on mixing time bounds for MALA, with and without a polynomially-bounded warm start \cite{ChewiLACGR21, LeeST20}. We believe it is an important future direction to close these gaps in the mixing time scaling and warmness assumptions for our sampler in Section~\ref{sec:sampling}, analogously to the result of \cite{lee2021structured}. Notably, there has been an ongoing exploration of new proof techniques for the convergence of proximal samplers by the community \cite{ChenCSW22, ChenE22}, and we are optimistic similar advancements can be made in non-Euclidean settings in future work, discussed further in Section~\ref{ssec:prior}. We view our result as a key first step towards the program of completing a theory of non-Euclidean proximal sampling.

We remark that the parameters $\eta$ and $\mu$ play different roles: $\mu$ governs the strong logconcavity of the stationary distribution, and $\eta$ controls the strong logconcavity of the $x$-conditional distribution of \eqref{eq:density_intro}, which is tuned to govern the convergence rate of sampling from the conditional distribution. In particular, we further show that when $F$ is $G$-Lipschitz in $\normx{\cdot}$, then as long as $\eta \lesssim G^{-2}$, the conditional sampling required by \eqref{eq:density_intro} can be performed in constant calls to a value oracle to $F$ in expectation. This result holds even when $F$ is a distribution over $G$-Lipschitz functions, and we only have sample access to this distribution. This extends a similar implementation of the marginal sampler required by \cite{lee2021structured} for log-Lipschitz densities in the $\ell_2$ norm, given by \cite{GLL22}. The remaining complexity of the marginal sampling depends on the structure of the chosen $\vhi$ and $\xset$, but is independent of $F$; we give a discussion of this aspect of our sampler in Sections~\ref{ssec:valueoracle} and~\ref{sec:future}.

\paragraph{Zeroth-order private convex optimization.} To highlight the potential of our techniques, in Section~\ref{sec:apps} we design LLTs based on the smoothness of the function $\vhi_q(x) = \frac {p - 1} 2 \norm{x}_q^2$ in the norm $\ell_q$, where $\frac 1 p + \frac 1 q = 1$ and $p \in [1, 2]$, $q \ge 2$. We show that the additive range of $\psi_{\eta, p}$,\footnote{We use slightly different notation than in Section~\ref{sec:apps} for convenience of exposition here.} the LLT of $\eta\vhi_q$ for $\eta \lesssim  \frac 1 d$,\footnote{This restriction is discussed further in Section~\ref{ssec:techniques}, but does not bottleneck our privacy applications.} is bounded by $O(\frac 1 {(p - 1)\eta})$ over the unit $\ell_p$ ball. This makes $\eta\psi_{\eta, p}$ competitive with the canonical choice of regularizer in $\ell_p$ norms for optimization, namely $r_p(x) \defeq \frac 1 {2(p - 1)}\norm{x}_p^2$, which has the same additive range and strong convexity parameters as $\eta\psi_{\eta, p}$ (up to constants). We further build efficient value oracles and samplers for induced densities for $\psi_{\eta, p}$ in Section~\ref{ssec:valueoracle}.

A critical difference between $\eta\psi$ and $r_p$, however, is that regularizing by a multiple of $\eta\psi$ admits efficient samplers via the machinery in Section~\ref{sec:sampling}; to our knowledge no similar technique is known for $r_p$. This difference is particularly important in the setting of \emph{differentially private convex optimization}: see Problem~\ref{prob:dpco} for a formal statement of the problem we study. Recently, \cite{GopiLLST23} showed that to privately minimize either population or empirical risk for a distribution over convex functions which are Lipschitz in a (possibly non-Euclidean) norm $\normx{\cdot}$, it suffices to sample from a regularized density $\propto \exp(-k(\Ferm + \mu r))$. Here, $\Ferm = \frac 1 n \sum_{i \in [n]} f_i$ is the empirical risk over $n$ samples $\{f_i\}_{i \in [n]}$, $k, \mu$ are tunable parameters, and $r$ is a $1$-strongly convex regularizer in $\normx{\cdot}$.

Our new sampling results show a demonstrable algorithmic advantage of using $\eta\psi_{\eta, p}$ as a regularizer for $\ell_p$ geometries, as opposed to $r_p$. In Theorem~\ref{thm:lp}, we give algorithms for private convex optimization matching the state-of-the-art excess risk bounds for private convex optimization recently attained by \cite{GopiLLST23} (who used $r_p$ as their regularizer). Under a warm start, our new algorithms further improve the \emph{value (zeroth-order) oracle} complexities of private convex optimization under $\ell_p$ regularity in dimension $d$ compared to \cite{GopiLLST23} by $\textup{poly}(d)$ factors, i.e.\ the number of queries to $\{f_i\}_{i \in [n]}$ used, in certain regimes of the sample complexity $n$. We also show these new value oracle complexities extend straightforwardly to private convex optimization over matrix spaces satisfying Schatten-$p$ norm regularity, yielding similar conditional improvements.

We note that our results match, up to a quadratic overhead, the value oracle complexities in the $\ell_2$ setting obtained by \cite{GLL22}, for all $\ell_p$ norms where $p \in [1, 2]$. In Appendix~\ref{app:lb}, we extend lower bounds for stochastic optimization from \cite{DJWW15, GLL22} to the $\ell_p$ setting to show the value oracle complexities of Theorems~\ref{thm:mainllt} and~\ref{thm:lp} are indeed within a near-quadratic factor of optimal. We find it exciting that plausible qualitative improvements to Theorem~\ref{thm:mainllt} (making it fully analogous to \cite{lee2021structured}) would close these overheads up to logarithmic factors, whereas a $\textup{poly}(d)$-factor overhead appears inherent to the previous approach of \cite{GopiLLST23}, which relied on the Euclidean proximal sampler. We give additional discussion of these points in Sections~\ref{ssec:privacy} and~\ref{sec:future}.

\subsection{Our techniques}\label{ssec:techniques}

Analogously to Section~\ref{ssec:results}, in this section we split our discussion of our techniques into three parts.

\paragraph{Algorithmic aspects of the LLT.} We first discuss our strong convexity-smoothness duality result. From a convex geometry perspective, smoothness of $\vhi$ (with LLT $\psi$) ensures that the induced distributions $\propto \exp(\inprod{x}{\cdot} - \vhi)$ are heavy-tailed (because their log-densities cannot grow quickly), which means their variances are ``large.'' We also know that $\nabla^2 \psi$ is the covariance matrix of the induced distribution which means that $\nabla^2\psi$ should be lower-bounded. We formalize this using a version of the Cram\'er-Rao bound from \cite{chewipooladian2022caffarelli}. An older arXiv version of this paper contains a more elementary proof of this result inspired by differential privacy, achieving a worse constant of $\approx \frac 1 {12}$; the (optimal) improvement was suggested by Sam Power. Our converse proof is similar, and follows by applying the Brascamp-Lieb inequality \cite{BrascampL76}.

To prove our isoperimetric inequality, we draw inspiration from a similar bound shown in Lemma 35 of \cite{lee2018convergence}, but for a family of convex functions $\phi$ satisfying a non-standard condition that $\phi''$ was convex (which fortunately includes the $\log$ barrier function). Noticing that $-\log$ is self-concordant, we extend the \cite{lee2018convergence} result to hold for all self-concordant functions. Further we show by a direct calculation that the KL divergence between the induced distributions of two nearby points $x$ and $x'$ is essentially the LLT $\psi$ at one of the points, up to a linear term. This lets us use stability of the Hessian of self-concordance functions to demonstrate stability of nearby induced distributions, a key ingredient in proving conductance bounds by the machinery of \cite{DyerFK91}.

\paragraph{Non-Euclidean proximal sampling.} Given the results of Section~\ref{sec:llt}, establishing our main proximal sampling result Theorem~\ref{thm:mainllt} is fairly routine. Our algorithm consists of an ``outer loop'' and an ``inner loop'' for sampling from the $x$ marginal of \eqref{eq:density_intro} which is stated and analyzed in Section~\ref{ssec:xmarginal}. Our outer loop analysis is directly based on the mixing time-to-conductance reduction of \cite{lovasz1993random} and the technique of \cite{DyerFK91} to lower bound conductance, using facts from Section~\ref{sec:llt}. Our inner loop handling functions $F$ in \eqref{eq:density_intro} which are Lipschitz (or distributions over Lipschitz functions) is a small modification of a similar result in \cite{GLL22}. The only property we need of the LLT is strong convexity: this implies a rejection sampler terminates quickly via the concentration of Lipschitz functions under strongly logconcave distributions (in any norm) \cite{Ledoux99, BobkovL00}.

We do note there is a design decision to be made on how to define ``scaling up the LLT by $\frac 1 \eta$,'' unlike in the case of \eqref{eq:l2_joint} where using the induced density $\Nor(x, \eta^{-1} \id_d)$ is natural. Given $r$, a $1$-strongly convex function in $\normx{\cdot}$, and letting $r^*$ be its (smooth) Fenchel conjugate, two natural ways of defining a scaled up induced distribution at $x$ are to choose densities
\begin{equation}\label{eq:choice1}
\propto \exp\Par{\inprod{x}{y} - \eta r^*(y) - \psi(x)},
\end{equation}
or
\begin{equation}\label{eq:choice2}
	\propto \exp\Par{\frac 1 \eta \Par{\inprod{x}{y} -  r^*(y) - \psi(x)}}.
\end{equation}
The choice \eqref{eq:choice1} clearly results in $\psi$ which is $\Omega(\eta^{-1})$-strongly convex, rendering it suitable for our proximal sampling applications. It is not difficult to see that the second results in $\eta^{-1}\psi$ which is also $\Omega(\eta^{-1})$-strongly convex. More interestingly, plugging in $r = r^* = \half\norm{\cdot}_2^2$ makes \eqref{eq:l2_joint} agree with \eqref{eq:choice2} rather than \eqref{eq:choice1}. Unfortunately, the $\psi$ which results from \eqref{eq:choice2} is not self-concordant, as its Hessian scales with $\eta^{-1}$ and its third derivative with $\eta^{-2}$. Our choice to use \eqref{eq:choice1} has further implications, elaborated on next, but a deeper understanding of this discrepancy seems interesting.

\paragraph{Zeroth-order private convex optimization.} As outlined in Section~\ref{ssec:results}, the frameworks of \cite{GLL22, GopiLLST23} show that to use our proximal sampler for $\ell_p$ norm private convex optimization, it suffices to design an LLT which has small additive range. Perhaps surprisingly, we exploit the \emph{non-scale invariance} of LLT for this task: the LLT of $\eta \vhi$ does not behave like $\eta^{-1}$ times the LLT of $\vhi$.\footnote{On the other hand, the Fenchel conjugate of $\eta\vhi$ is $\eta^{-1}$ times the Fenchel conjugate of $\vhi$.} To see why this is helpful, consider the case when $\vhi = \half\norm{\cdot}_\infty^2$: then,
\[\psi(x) = \log\Par{\int \exp\Par{\inprod{x}{y} - \half\norm{y}_\infty^2} \dd y}.\]
Although one would hope $\psi(x)$ has additive range comparable to $\half\norm{x}_1^2$, the Fenchel conjugate of $\half\norm{x}_\infty^2$,
it is not hard to show that $\psi(e_1) - \psi(0) = \Omega(\sqrt{d})$; we give a proof in Appendix~\ref{app:sizebound}. Intuitively, the $\ell_\infty$ radius of a typical point $\sim \exp(-\half\norm{\cdot}_\infty^2)$ is about $\sqrt{d}$, and a constant fraction of points on the surface of this $\ell_\infty$ ball have inner product with $e_1$ of $\Omega(\sqrt d)$. This shows the additive range of $\psi$ on the $\ell_1$ ball is larger than $\half\norm{\cdot}_1^2$ by dimension-dependent factors.

We show that the non-scale invariance of \eqref{eq:choice1} is actually helpful in controlling additive ranges. Specifically, letting $\psi_\eta$ denote the LLT of $\eta \norm{x}_q^2$, we show the additive range of $\eta\psi_\eta$ (a $\approx 1$-strongly convex function) is $\approx \max(\eta, 1, \sqrt{d\eta})$. For sufficiently small $\eta$, this implies $\eta\psi_\eta$ is actually a much smaller regularizer than $\psi$; graciously, our differential privacy applications require $\eta \lesssim \frac 1 {d^2}$. We find it potentially useful to explore how generic this non-scale invariance of the LLT is.

\subsection{Prior work}\label{ssec:prior}

\paragraph{Non-Euclidean sampling.} 

A recurring issue that arises in bounding the convergence rate of non-Euclidean samplers is that na\"ive discretizations can result in significant error. As a result, most prior works either require strong assumptions or oracles for accurate discretization or adopt more sophisticated discretization methods that are difficult to analyze. For example, earlier in the introduction this was discussed for discretizations of MLD \cite{ZhangPFP20, Jiang21, AhnC21, LiTVW22}. Part of the intrinsic difficulty of bounding discretized MLD lies in third-order error terms emerging from non-Euclidean geometries, which are hard to control under standard assumptions.

Under structured settings different than, but related to, those in this paper, an interesting alternative sampling strategy is discretizing Riemannian Langevin or Hamiltonian dynamics. For example, \cite{gatmiry2022convergence} studied the Riemmanian Langevin dynamics assuming access to an oracle to sample from Brownian motion on a manifold, whose complexity heavily depends on the manifold. 
Further, the convergence rate of Riemannian Hamiltonian Monte Carlo (RHMC) in polytopes was studied in \cite{lee2018convergence}, and a discretized version was analyzed in \cite{kook2022condition}; the results apply to a limited family of distributions, and the convergence rate is fairly large. For RHMC to converge to the correct target distribution, sophisticated discretization methods such as Implicit Midpoint Method are necessary. Though efficient in practice, these methods are challenging to analyze theoretically. 

\paragraph{Proximal sampling.} A long line of works has studied the use of proximal methods in sampling (inspired by optimization). Several considered proximal Langevin
algorithms \cite{pereyra2016proximal,brosse2017sampling,bernton2018langevin,wibisono2019proximal}, which combine proximal methods and discretizations of Langevin dynamics. Further, \cite{mou2022efficient} proposed a sampler based on a proximal sampling oracle.  However, these algorithms required either stringent assumptions or a large mixing time. Recently, \cite{lee2021structured} proposed a new proximal sampler overcoming many of the assumptions and efficiency issues in prior methods. Several works have focused on generalizing \cite{lee2021structured} and applying it in different settings: \cite{ChenCSW22} proved convergence results using weaker assumptions than strong logconcavity. The framework has been used to obtain state-of-the-art samplers for various structured families, including smooth, composite, and finite-sum densities \cite{lee2021structured} as well as non-smooth densities \cite{GLL22, liang2022}.  

\paragraph{Log-Laplace transform.} The LLT is a powerful tool that emerges frequently in probability theory and convex geometry. Notably, \cite{bubeck2014entropic, Chewi21} showed that the Legendre-Fenchel dual of LLT of the uniform measure on a convex body in $\R^n$ is an $n$-self-concordant barrier, giving the first universal barrier for convex bodies with optimal self-concordance parameter. In \cite{ChenE22}, the LLT serves as one of the key ingredients of entropy conservation in localization schemes for sampling.  In addition, the LLT shows up in the solution to the entropic optimal transport problem, where a KL
divergence is added to regularize the optimal transport objective \cite{chewipooladian2022caffarelli}.  

\paragraph{Private convex optimization.} 
Differentially private convex optimization is one of the most extensively studied problems in the privacy literature and captures an increasing number of critical applications in various domains, including machine learning, statistics, and data analysis. 
There is a rich body of works on this topic \cite{CM08,CMS11,KST12,BST14,WYS17,BFTT19,FKT20},
which have mainly focused on the Euclidean geometry, e.g.\ assuming the $\ell_2$ diameter of the domain and $\ell_2$ norms of gradients are bounded.
Motivated by applications not captured by these assumptions, there has been growing interest in studying differentially private convex optimization in non-Euclidean geometries, as seen in \cite{ttz15,afkt21,bgn21,HLL+22,GopiLLST23}. Of particular relevance, \cite{GopiLLST23} develops an exponential mechanism based method attaining state-of-the-art excess risk bounds for $\ell_p$ and Schatten-$p$ norms, which are matched by our algorithms in Section~\ref{sec:apps}.

\subsection{Erratum since COLT 2023 version}\label{ssec:erratum}

An earlier version of this manuscript was presented at COLT 2023, which claimed an improvement to the present version of Theorem~\ref{thm:mainllt} that scaled linearly, rather than quadratically, in the quantity $\frac 1 {\eta\mu}$. This claim was erroneous, and we have weakened the claim to have the correct dependence, based on a modified isoperimetric inequality in Lemma~\ref{lem:iso} that scales with $m$ rather than $\sqrt{m}$. Correspondingly, the query complexities in our main application, Theorem~\ref{thm:lp}, are also weakened by roughly a quadratic factor. All other parts of our framework (the remainder of Section~\ref{sec:llt}, the rejection sampler in Section~\ref{ssec:xmarginal}, the size bounds in Section~\ref{ssec:lp} and implementations in Section~\ref{ssec:valueoracle}, and our various appendices) remain unchanged from the conference version. 

The key issue was an incorrect application of the localization lemma in Lemma~\ref{lem:iso}: for a self-concordant function $\phi$, and a constant $m \in (0, 1)$, $m\phi$ is not actually self-concordant (as $m^{\frac 3 2} \le m$). For potentials that are $m$-strongly convex in a norm, the typical scaling in Lemma~\ref{lem:iso} is indeed $\sqrt{m}$; unfortunately, because self-concordant functions behave ``locally quadratically'' only in a small range, and otherwise can have linear tails, the resulting isoperimetry instead scales as $m$. We believe our work nonetheless takes an important step by designing a fully-implementable framework for non-Euclidean sampling under minimal assumptions, and pose designing an improved framework that recovers the original scaling as $\frac 1 {\eta\mu}$ as an interesting and natural open problem.

\section{Preliminaries}
\label{sec:prelims}

\paragraph{General notation.} In Section~\ref{sec:intro} only, $\tO$, $\approx$, and $\lesssim$ hide logarithmic factors in problem parameters for expositional convenience. For $n \in \N$, $[n]$ refers to the naturals $1 \le i \le n$. We use $\xset$ to denote a compact convex subset of $\R^d$. For all $p \ge 1$ including $p = \infty$, we let $\norm{\cdot}_p$ applied to a vector argument denote the $\ell_p$ norm. We denote matrices in boldface and when $\norm{\cdot}_p$ is applied to a matrix argument it denotes the corresponding Schatten-$p$ norm ($\ell_p$ norm of the singular values).

For any $\xset \subset \R^d$ we let its indicator function (i.e.\ the function which is $1$ on $\xset$ and $0$ otherwise) be denoted $\1_{\xset}$. We will be concerned with optimizing functions $f: \xset \to \R$, and $\normx{\cdot}$ refers to a norm on $\xset$. We let $\xset^*$ be the dual space to $\xset$, and equip it with the dual norm $\normxd{y} \defeq \sup_{\normx{x} = 1} x^\top y$. We let $\Nor(\mu, \msig)$ be the Gaussian density of given mean and covariance. For a positive definite matrix $\mm \in \R^{d \times d}$, we denote the induced norm by $\norm{v}_{\mm} \defeq \sqrt{v^\top \mm v}$. When making asymptotic statements we will typically assume the dimension $d$ is at least a sufficently large constant, else we can pad and affect statements by at most constant factors.

\paragraph{Optimization.} In the following, fix $f: \xset \to \R$. We say $f$ is $G$-Lipschitz in $\normx{\cdot}$ if for all $x, x' \in \xset$, $|f(x) - f(x')| \le G\normx{x - x'}$. If $f$ is differentiable, we say it is $L$-smooth in $\normx{\cdot}$ if for all $x, x' \in \xset$, $\normxd{\nabla f(x) - \nabla f(x')} \le L \normx{x - x'}$. Taylor expanding then shows $f(x') \le f(x) + \inprod{\nabla f(x)}{x' - x} + \frac L 2\normx{x - x'}^2$. We say $f$ is $m$-relatively strongly convex in $\phi$ if $f - m\phi$ is convex. For $k$-times differentiable $f$, $\nabla^k f(x)[v_1, v_2, \ldots, v_k]$ denotes the corresponding $k^{\text{th}}$ order directional derivative at $f$. We say twice-differentiable $f$ is $m$-strongly convex in $\normx{\cdot}$ if for all $x \in \xset$, $v \in \R^d$, $\nabla^2 f(x)[v, v] \ge m\normx{v}^2$. We say convex $\phi: \R^d \to \R$ is self-concordant if \[\Abs{\nabla^3 \phi(x)[h,h,h]} \le 2\Par{\nabla^2 \phi(x)[h, h]}^{\frac 3 2}, \text{ for all } x, h \in \R^d.\]
A key fact we use about self-concordant functions is that their Hessians are stable under small distances, where the distance is measured in the Hessian norm: see Lemma~\ref{lem:selfcon_hess} for a formal statement.

\paragraph{Probability.} For a density $\pi$ supported on $\xset$, we let $\pi(S) \defeq \Pr_{x \sim \pi}[x \in S]$. For two densities $\mu, \pi$, we define their total variation distance by $\tvd{\mu}{\pi} \defeq \half \int |\mu(x) - \pi(x)|\dd x$ and (when the Radon-Nikodym derivative exists) their KL divergence by $\dkl(\mu \| \pi) \defeq \int \mu(x) \log \frac{\mu(x)}{\pi(x)} \dd x$. For $1 < \alpha < \infty$, we also define the $\alpha$-R\'enyi divergence between densities $\mu, \pi$ by
\[D_\alpha(\mu\|\pi) \defeq \frac 1 {\alpha - 1}\log \Par{\int \Par{\frac{\mu(x)}{\pi(x)}}^{\alpha}\pi(x) \dd x}.\]
We say density $\pi$ is logconcave (respectively, $m$-strongly logconcave in $\normx{\cdot}$) if $-\log \pi$ is convex (respectively, $m$-strongly convex in $\normx{\cdot}$). We similarly say $\pi$ is $m$-relatively strongly logconcave in $\phi$ if $-\log \pi$ is $m$-relatively strongly convex in $\phi$. If $\log \pi$ is affine, we say $\pi$ is logaffine.  We say a density $\pi_0$ is $\beta$-warm with respect to a density $\pi$ if for all $x$ in the support of $\pi$, $\frac {\dd \pi_0(x)}{\dd \pi(x)} \leq \beta$.

\paragraph{Log-Laplace transform.} We define the log-Laplace transform (LLT) of $\vhi: \R^d \to \R$ by
\[\psi(x) \defeq \log\Par{\int \exp\Par{\inprod{x}{y} - \vhi(y)} \dd y}.\]
When $\vhi, \psi$ are clear from context, we define the density
\begin{equation}\label{eq:npsidef}\Npsi_x(y) = \exp\Par{\inprod{x}{y} - \vhi(y) - \psi(x)}.\end{equation}
Note that the normalization constant is exactly given by $\psi(x)$ and hence $\Npsi_x$ is indeed a valid density.
We use $\propto$ to indicate proportionality, e.g.\ if $\mu$ is a density and we write $\mu \propto \exp(-f)$, we mean $\mu(x) = \frac{\exp(-f)}{Z}$ where $Z \defeq \int \exp(-f(x)) \dd x$ and the integration is over the support of $\mu$.

\paragraph{Riemannian geometry.} In Sections~\ref{sec:llt} and~\ref{sec:sampling} we will use geometry induced by the Hessian of a self-concordant, convex function $\phi: \R^d \to \R$. We summarize the important points here, and defer a more extended treatment to \cite{nesterov2002riemannian}. When $\phi$ is clear from context, we denote the norm $\norm{h}_x \defeq \norm{h}_{\nabla^2 \phi(x)}$. Throughout this discussion let $M \subseteq \R^d$ be a Riemannian manifold equipped with the local metric $\norm{\cdot}_x$. The induced Riemannian distance of a curve $c: [0, 1] \to M$ is defined as
\begin{align*}
L_\phi(c) \defeq \int_0^1 \norm{\frac {\dd} {\dd t} c(t)}_{c(t)} \dd t,
\end{align*}
where $\frac{\dd}{\dd t} c(t)$ is the velocity element of the curve in the tangent space at $c(t)$. For $x, y \in M$, we then define $d_\phi(x, y)$ to be the infimum of the length $L_\phi(c)$ over all curves $c$ such that $c(0) = x$ and $c(1) = y$. We will use the following two important properties of the Riemannian geometry over $M = \R^d$ induced by self-concordant, convex functions.

\begin{lemma}[\cite{nesterov2002riemannian}, Lemma 3.1]
	\label{lem:dist_const}
	Suppose $\phi: \R^d \to \R$ is convex and self-concordant. For $x,y \in \R^d$,  if $d_\phi(x,y)\leq \delta - \delta^2 < 1$ for some $\delta \in (0, 1)$, then $\norm{y-x}_{x}\leq \delta$.
\end{lemma}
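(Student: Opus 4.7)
The plan is to argue by contradiction: assume $\norm{y-x}_x > \delta$ for the sake of contradiction, and exhibit a curve-length lower bound that exceeds $\delta - \delta^2$. First I would fix any $\epsilon > 0$ and choose a smooth curve $c: [0,1] \to \R^d$ with $c(0) = x$, $c(1) = y$, and $L_\phi(c) \leq d_\phi(x,y) + \epsilon \leq \delta - \delta^2 + \epsilon$; sending $\epsilon \to 0$ at the end will finish the argument.

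Next, define $r(s) \defeq \norm{c(s) - x}_x$. Since $r(0) = 0$, $r(1) = \norm{y-x}_x > \delta$, and $r$ is continuous with $\delta < 1$, there exists a first time $t^* \in (0,1]$ at which $r(t^*) = \delta$, and $r(s) < \delta < 1$ for all $s \in [0, t^*)$. The crucial input is the Hessian-stability fact alluded to in the excerpt (the standard self-concordance estimate): for any $h$ with $\norm{h}_x < 1$, $\nabla^2\phi(x+h) \succeq (1-\norm{h}_x)^2 \nabla^2\phi(x)$. Applying this with $h = c(s) - x$ for $s \in [0, t^*]$ yields the pointwise inequality
\begin{equation*}
\norm{c'(s)}_{c(s)} \;\geq\; \bigl(1 - r(s)\bigr)\,\norm{c'(s)}_x.
\end{equation*}
A second elementary ingredient is that $r$ is absolutely continuous and satisfies $r'(s) \leq \norm{c'(s)}_x$ (differentiate $r(s)^2 = (c(s)-x)^\top \nabla^2\phi(x) (c(s)-x)$ and apply Cauchy--Schwarz in the $\nabla^2\phi(x)$ inner product).

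Chaining these two inequalities, and using that $1 - r(s) \geq 0$ on $[0, t^*]$, I would bound
\begin{equation*}
L_\phi(c) \;\geq\; \int_0^{t^*} \norm{c'(s)}_{c(s)}\,\dd s \;\geq\; \int_0^{t^*} (1 - r(s))\, r'(s)\, \dd s \;=\; \int_0^{\delta} (1-r)\,\dd r \;=\; \delta - \tfrac{\delta^2}{2},
\end{equation*}
where the last equality uses the fundamental theorem of calculus with $r(0) = 0$, $r(t^*) = \delta$. Since $\delta - \tfrac{\delta^2}{2} > \delta - \delta^2$, this contradicts $L_\phi(c) \leq \delta - \delta^2 + \epsilon$ for all sufficiently small $\epsilon > 0$, completing the argument.

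The main place where care is needed is ensuring the self-concordance Hessian-stability estimate is applicable uniformly along the portion of the curve up to $t^*$; this is exactly why we restricted to $[0, t^*]$ and used $r(s) < 1$ there. The rest of the steps are routine calculus; the only mildly subtle point is that $r$ need not be monotone, but the bound $\int (1-r)\,r'\,\dd s = r(t^*) - r(t^*)^2/2$ holds regardless of monotonicity by the fundamental theorem.
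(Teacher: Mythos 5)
Your argument is correct. Note that the paper does not supply its own proof of this lemma—it is cited verbatim from Nesterov--Todd (2002), Lemma 3.1—so there is no internal proof to compare against; what you have written is a clean self-contained derivation.

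A couple of small remarks. First, your argument actually establishes the stronger bound $d_\phi(x,y) \ge \delta - \delta^2/2$ whenever $\norm{y-x}_x > \delta$, so the hypothesis $d_\phi(x,y) \le \delta - \delta^2$ is more than enough; this is consistent with the way such estimates are usually stated with slack for convenience. Second, the one spot worth tidying is the claim $r'(s) \le \norm{c'(s)}_x$: the differentiation of $r(s)^2$ is awkward at points where $r(s)=0$ (there you cannot divide by $r(s)$). The cleaner statement is that $s \mapsto \norm{c(s)-x}_x$ is the composition of the $1$-Lipschitz function $\norm{\cdot}_x$ with an absolutely continuous curve, hence absolutely continuous with $|r'(s)| \le \norm{c'(s)}_x$ almost everywhere; this also licenses the application of the fundamental theorem of calculus to $r - r^2/2$, which you correctly observed does not require monotonicity. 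With that cosmetic repair the proof is complete and correct.
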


\begin{lemma}[\cite{nemirovski2004interior}, Section 2.2.1]
	\label{lem:selfcon_hess}
	Suppose $\phi: \R^d \to \R$ is convex and self-concordant. For any $h, x \in \R^d$ such that $\norm{h}_{x} < 1$,
	$
		(1-\norm{h}_{x})^2  \nabla^2 \phi(x) \preceq \nabla^2 \phi(x+h) \preceq 	(1-\norm{h}_{x})^{-2 } \nabla^2 \phi(x).
	$
\end{lemma}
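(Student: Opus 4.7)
The plan is to reduce the matrix inequality to a univariate ODE analysis along the line segment from $x$ to $x+h$. Fix $x, h$ with $\alpha_0 \defeq \norm{h}_x < 1$. For arbitrary $u \in \R^d$, define
\begin{equation*}
\alpha(t) \defeq \sqrt{\nabla^2\phi(x+th)[h,h]}, \qquad g_u(t) \defeq \nabla^2\phi(x+th)[u,u], \qquad t \in [0,1].
\end{equation*}
The goal is to show $g_u(1) \in [(1-\alpha_0)^2, (1-\alpha_0)^{-2}] \cdot g_u(0)$ for every $u$, which immediately gives the two-sided Loewner bound.

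The first step is to extend the cubic self-concordance inequality to a trilinear form: for any $v_1, v_2, v_3 \in \R^d$,
\begin{equation*}
\bigl|\nabla^3\phi(x)[v_1, v_2, v_3]\bigr| \le 2\prod_{i=1}^3 \sqrt{\nabla^2\phi(x)[v_i, v_i]}.
\end{equation*}
This is a standard polarization argument: writing the trilinear form as a combination of diagonal evaluations on $v_1 \pm v_2 \pm v_3$ and applying the hypothesis term by term (or equivalently, using that a symmetric trilinear form bounded on the diagonal of the unit ball of a quadratic form is bounded by the same constant on the full ball). I would cite or briefly record this, since it is the one slightly finicky ingredient.

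Given the polarized inequality, $\alpha$ satisfies $\tfrac{d}{dt}\alpha(t)^2 = \nabla^3\phi(x+th)[h,h,h]$, so $|2\alpha\alpha'| \le 2\alpha^3$, i.e.\ $|\alpha'| \le \alpha^2$. This is equivalent to $|\tfrac{d}{dt}(1/\alpha)| \le 1$, and integrating yields $\alpha(t) \le \alpha_0/(1 - t\alpha_0)$ for $t\alpha_0 < 1$; since $\alpha_0 < 1$ this holds throughout $[0,1]$. Next, for the chosen $u$, the polarized inequality applied with $(h,u,u)$ gives $|g_u'(t)| \le 2\alpha(t)\, g_u(t)$. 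Hence $\bigl|\tfrac{d}{dt}\log g_u(t)\bigr| \le 2\alpha(t) \le \tfrac{2\alpha_0}{1 - t\alpha_0}$, and integrating from $0$ to $1$:
\begin{equation*}
\bigl|\log g_u(1) - \log g_u(0)\bigr| \;\le\; -2\log(1 - \alpha_0).
\end{equation*}
Exponentiating gives $(1-\alpha_0)^2 g_u(0) \le g_u(1) \le (1-\alpha_0)^{-2} g_u(0)$, and since $u$ was arbitrary this is precisely the Loewner sandwich claimed.

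The main obstacle is the polarization step: the direct self-concordance hypothesis controls only $\nabla^3\phi[h,h,h]$, but the argument needs $\nabla^3\phi[h,u,u]$. I would handle this via the standard observation that any symmetric trilinear form $T$ which satisfies $|T[v,v,v]| \le 2 (Q[v,v])^{3/2}$ for a PSD quadratic form $Q$ also satisfies $|T[v_1,v_2,v_3]| \le 2\prod_i\sqrt{Q[v_i,v_i]}$ (reduce to the unit sphere of $Q$ by rescaling, then use a polarization identity to express $T[v_1,v_2,v_3]$ as a signed sum of $T[w,w,w]$'s with $\sqrt{Q[w,w]} \le 1$). Once this is in place, the rest of the proof is the routine Grönwall-type computation above.
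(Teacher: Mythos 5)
The paper does not prove this lemma; it is quoted as a known fact from Nemirovski's interior-point lecture notes (Section 2.2.1), so there is no in-paper proof to compare against. Your strategy is the standard textbook one (e.g.\ Nesterov's \emph{Lectures on Convex Optimization}): set $\alpha(t)^2 = \nabla^2\phi(x+th)[h,h]$ and $g_u(t) = \nabla^2\phi(x+th)[u,u]$, derive $|\alpha'| \le \alpha^2$ and $|(\log g_u)'| \le 2\alpha$ from the trilinear self-concordance estimate, and integrate. The Gr\"onwall-type computations, including $\alpha(t) \le \alpha_0/(1-t\alpha_0)$ and $\int_0^1 2\alpha_0/(1-t\alpha_0)\,\dd t = -2\log(1-\alpha_0)$, are correct.

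The one real gap is the step you flagged: passing from $|\nabla^3\phi[v,v,v]| \le 2\|v\|_x^3$ to $|\nabla^3\phi[h,u,u]| \le 2\|h\|_x\|u\|_x^2$. Your proposed route via the polarization identity does not yield the constant $2$. If $v_1,v_2,v_3$ lie on the unit sphere of $\nabla^2\phi(x)$, the vectors $w_\epsilon = \epsilon_1 v_1 + \epsilon_2 v_2 + \epsilon_3 v_3$ have $\|w_\epsilon\|_x$ as large as $3$, so after rescaling each diagonal evaluation you pick up a factor of $27$ and the polarization identity gives $|\nabla^3\phi[v_1,v_2,v_3]| \le 18$, not $\le 2$. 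Fed into your integration, that produces $(1-\|h\|_x)^{\pm 18}$, which is a qualitatively similar but quantitatively much weaker statement than the lemma. The ``equivalent'' form you record --- that a symmetric multilinear form bounded on the diagonal of a Euclidean ball is bounded with the same constant off the diagonal --- is indeed the correct thing to invoke, but it is a nontrivial theorem (classically due to Banach for symmetric multilinear forms on Hilbert space, or proved by a direct variational argument in Nesterov's treatment of self-concordance), not a corollary of the polarization identity. You should cite or reproduce that argument; the rest of the proof is sound.
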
 	%

\section{Properties of the LLT}
\label{sec:llt}

In this section, we collect a variety of facts about the log-Laplace transform which we will use to develop our sampling scheme in Section~\ref{sec:sampling}. We begin by proving basic facts about the LLT in Section~\ref{ssec:basics}. We then use them to derive isoperimetric properties of induced distributions in Section~\ref{ssec:isoperimetry} and total variation bounds in Section~\ref{ssec:conductance}. Throughout this section we will fix a convex function $\vhi: \R^d \to \R$, and let $\psi: \R^d \to \R$ be its LLT. We will also follow the notation \eqref{eq:npsidef}.

\subsection{Basic properties and duality}\label{ssec:basics}

The log-Laplace transform $\psi$ at $x$ is the cumulant-generating function of the distribution $\Npsi_x$, which means that $\psi$ is infinitely-differentiable and that $\nabla^k \psi$ is the $k^{\text{th}}$ cumulant tensor of $\Npsi_x$. We will only use the first three derivatives of $\psi$ which we compute below for completeness.

\begin{lemma}[LLT derivatives]
	\label{lem:cumulant}
	For any $x, h \in \R^d$, we have
	\begin{align*}
	\nabla \psi(x) &= \mu(\Npsi_x) \defeq \E_{y \sim \Npsi_x}[y], \\
	\nabla^2 \psi(x) &= \mathrm{Cov}(\Npsi_x) \defeq \E_{y \sim \Npsi_x}\Brack{(y - \mu(\Npsi_x))(y - \mu(\Npsi_x))^\top}, \\
	\nabla^3 \psi(x)[h, h, h] &= \E_{y \sim \Npsi_x}\Brack{\inprod{y - \mu(\Npsi_x)}{h}^3}. 
	\end{align*}
\end{lemma}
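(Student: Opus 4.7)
The plan is to compute the derivatives directly by differentiating under the integral sign. Writing $Z(x) \defeq \int \exp(\inprod{x}{y} - \vhi(y))\dd y$ so that $\psi = \log Z$, I would first justify (briefly, on the grounds that $\vhi$ is convex and hence the integrand decays super-polynomially along any ray on which $\psi$ is finite) that the partial derivatives of $Z$ can be moved inside the integral. This gives $\nabla Z(x) = \int y \exp(\inprod{x}{y} - \vhi(y))\dd y$, and dividing by $Z(x)$ immediately yields $\nabla \psi(x) = \E_{y \sim \Npsi_x}[y] = \mu(\Npsi_x)$, which is the first claim.

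For the Hessian, I would apply the quotient rule to $\nabla \psi = \nabla Z / Z$, or equivalently differentiate the identity $\nabla \psi(x) = \int y \cdot \exp(\inprod{x}{y} - \vhi(y) - \psi(x))\dd y$ in $x$ (using that $\nabla_x \log \Npsi_x(y) = y - \nabla \psi(x)$). Either way one arrives at
\begin{equation*}
\nabla^2 \psi(x) = \E_{y \sim \Npsi_x}[yy^\top] - \nabla\psi(x)\nabla\psi(x)^\top = \E_{y \sim \Npsi_x}\Brack{(y - \mu(\Npsi_x))(y - \mu(\Npsi_x))^\top},
\end{equation*}
which is $\mathrm{Cov}(\Npsi_x)$. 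For the third derivative, I would contract with a direction $h$ throughout to keep notation light: differentiate $\nabla^2\psi(x)[h,h] = \E_{y \sim \Npsi_x}[\inprod{y - \nabla\psi(x)}{h}^2]$ once more in the direction $h$, again using $\nabla_x \log \Npsi_x(y) = y - \nabla\psi(x)$. The cross term involving $\nabla_x(\nabla\psi(x))$ cancels because $\E_{y \sim \Npsi_x}[y - \nabla\psi(x)] = 0$, leaving only $\E_{y \sim \Npsi_x}[\inprod{y - \mu(\Npsi_x)}{h}^3]$ as claimed.

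Alternatively, and perhaps more cleanly, I could observe that
\begin{equation*}
\psi(x + h) - \psi(x) = \log\E_{y \sim \Npsi_x}\Brack{\exp(\inprod{h}{y})},
\end{equation*}
so as a function of $h$ the left-hand side is exactly the cumulant-generating function of $\Npsi_x$. The three claimed formulas are then the first three cumulants read off from the standard Taylor expansion of a CGF around $h = 0$. The only real ``obstacle'' is the differentiation-under-the-integral step, which is routine here: on the interior of the domain where $\psi$ is finite, one can bound $\abs{y}^k \exp(\inprod{x}{y} - \vhi(y))$ by an integrable dominating function in a neighborhood of $x$, licensing all the exchanges of $\nabla$ and $\int$. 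No deeper property of $\vhi$ (convexity, smoothness, etc.) is needed beyond finiteness of $\psi$ in a neighborhood of the point at which we differentiate.
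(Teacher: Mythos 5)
Your proof is correct and takes essentially the same approach as the paper: differentiate under the integral sign and compute the first three derivatives of $\log Z$ by the quotient rule, using $\nabla_x \log \Npsi_x(y) = y - \nabla\psi(x)$ to organize the terms. The paper writes out the quotient-rule calculations explicitly without discussing the exchange of derivative and integral; your alternative observation that $\psi(x+h) - \psi(x) = \log \E_{y\sim\Npsi_x}[\exp\inprod{h}{y}]$ is the CGF of $\Npsi_x$ is a clean framing that the paper alludes to in the surrounding text but does not use as the proof itself.
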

\begin{proof}
	For any $x\in \R^d$, a straightforward calculation shows that 
	\begin{align*}
		\nabla \psi(x)& = \nabla \Par{\log \int\exp\Par{ \inprod{x}{y} -\varphi(y)}\dd y} 
		= \frac {\int \exp\Par{ \inprod{x}{y} - \varphi(y)}y \dd y}{ \int \exp\Par{ \inprod{x}{y} - \varphi(y)} \dd y} = \mu (\Npsi_x ).
	\end{align*}
	Further,
	\begin{align*}
		\nabla^2 \psi(x)& = \nabla \Par{ \frac {\int \exp\Par{ \inprod{x}{y} - \varphi(y)}y \dd y}{ \int \exp\Par{ \inprod{x}{y} - \varphi(y)} \dd y}} \\
		& = \frac{{ \int \exp\Par{ \inprod{x}{y} - \varphi(y)}yy^\top \dd y}}{{ \int \exp\Par{ \inprod{x}{y} - \varphi(y)}\dd y}} - \frac{\Par{ \int \exp\Par{ \inprod{x}{y} - \varphi(y)}y \dd y}\Par{ \int \exp\Par{ \inprod{x}{y} - \varphi(y)}y \dd y}^\top}{\Par{ \int\exp\Par{ \inprod{x}{y} -\varphi(y)}\dd y}^2}.
	\end{align*}
	Finally, 
	\begin{align*}
		\nabla^3 \psi(x)[h,h,h] 
		&= h^\top \nabla \Par{\frac{{ \int \exp\Par{ \inprod{x}{y} - \varphi(y)}\Par{y^\top h}^2 \dd y}}{{ \int\exp\Par{ \inprod{x}{y} - \varphi(y)} \dd y}} - \frac{\Par{ \int\exp\Par{ \inprod{x}{y} - \varphi(y)}y^\top h \dd y}^2}{\Par{ \int \exp\Par{ \inprod{x}{y} -\varphi(y)}\dd y}^2}}\\
		&= \frac {{ \int \exp\Par{ \inprod{x}{y} - \varphi(y)} \Par{y^\top h}^3 \dd y}}{{ \int\exp\Par{ \inprod{x}{y} - \varphi(y)} \dd y}} +2\Par{ \frac {\int \exp\Par{ \inprod{x}{y} - \varphi(y)}y^\top h\dd y}{{ \int \exp\Par{ \inprod{x}{y} - \varphi(y)} \dd y}}}^3 \\
		& -   \frac {3{ \int \exp\Par{ \inprod{x}{y} - \varphi(y)} \Par{y^\top h}^2 \dd y}{ \int \exp\Par{ \inprod{x}{y} - \varphi(y)}{y^\top h} \dd y}}{\Par{ \int\exp\Par{ \inprod{x}{y} - \varphi(y)} \dd y}^2}.
	\end{align*}
\end{proof}

By using a fact on one-dimensional logconcave distributions in \cite{bubeck2014entropic}, this implies the following.

\begin{lemma}[Self-concordance]\label{lem:psi_sc}
If $\psi$ is the LLT of a convex function, it is self-concordant.
\end{lemma}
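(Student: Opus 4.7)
The plan is to combine the cumulant interpretation of the derivatives of $\psi$ given by Lemma~\ref{lem:cumulant} with a one-dimensional logconcave moment inequality cited from \cite{bubeck2014entropic}. Unpacking the definition, self-concordance of $\psi$ requires
\[\Abs{\nabla^3 \psi(x)[h,h,h]} \le 2\Par{\nabla^2 \psi(x)[h,h]}^{3/2} \text{ for all } x, h \in \R^d,\]
so the task is to compare the third central moment of $\inprod{y}{h}$ to the $3/2$-power of its variance, where $y \sim \Npsi_x$.

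First I would fix arbitrary $x, h \in \R^d$ and set $Z \defeq \inprod{y}{h}$ for $y \sim \Npsi_x$. By Lemma~\ref{lem:cumulant}, we immediately have
\[\nabla^2 \psi(x)[h,h] = \Var(Z), \qquad \nabla^3 \psi(x)[h,h,h] = \E\Brack{(Z - \E Z)^3}.\]
Next I would note that since $\vhi$ is convex, $\Npsi_x \propto \exp(\inprod{x}{\cdot} - \vhi)$ is a logconcave density on $\R^d$ (adding the linear term $\inprod{x}{\cdot}$ preserves concavity of the log-density). Linear pushforwards of logconcave densities are logconcave by Pr\'ekopa--Leindler, so the distribution of $Z$ on $\R$ is a one-dimensional logconcave measure.

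Then I would invoke the one-dimensional moment inequality from \cite{bubeck2014entropic}: for any logconcave random variable $Z$ on $\R$,
\[\Abs{\E\Brack{(Z - \E Z)^3}} \le 2\Par{\Var(Z)}^{3/2}.\]
Substituting back the identifications from Lemma~\ref{lem:cumulant} yields the self-concordance inequality for $\psi$ at the chosen $(x,h)$, and since these were arbitrary we conclude $\psi$ is self-concordant.

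The only nontrivial ingredient is the cited one-dimensional moment bound, so there is no real obstacle here beyond pointing to the correct statement; the remainder is simply bookkeeping between the tensor-derivative notation and the moment-generating-function interpretation that Lemma~\ref{lem:cumulant} has already established.
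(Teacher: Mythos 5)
Your proof is correct and follows essentially the same route as the paper's: identify the second and third derivatives of $\psi$ as the variance and third central moment of $Z=\inprod{y}{h}$ via Lemma~\ref{lem:cumulant}, observe that $Z$ is a one-dimensional logconcave random variable (since $\Npsi_x$ is logconcave and linear pushforwards preserve logconcavity), and then invoke the moment bound from \cite{bubeck2014entropic} (their Lemma~2). The only minor difference is that you spell out the Pr\'ekopa--Leindler justification for the pushforward step, which the paper simply states as ``affine transformations preserve logconcavity.''
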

\begin{proof}
	By the definition of self-concordance and Lemma~\ref{lem:cumulant}, it suffices to show for any $h\in \R^d$,
	\begin{equation}\label{eq:psi_sc}\E_{y\sim \Npsi_x}\Brack{\inprod{y - \mu(\Npsi_x)}{h} }^3 \leq 2 \Par{\E_{y \sim \Npsi_x}\Brack{\inprod{y - \mu(\Npsi_x)}{h}^2}}^{\frac 3 2}.
	\end{equation}
	We then note that the random variable $\langle y - \mu(\Npsi_x), h\rangle$ for $y \sim \Npsi_x$ follows a logconcave distribution because affine transformations preserve logconcavity. Finally Lemma 2 of \cite{bubeck2014entropic} implies \eqref{eq:psi_sc} holds.
\end{proof}

Next, we prove that a form of strong convexity-smoothness duality (and its converse) holds with respect to $\vhi$ and $\psi$, analogous to the type of duality satisfied by Fenchel conjugates \cite{KakadeST09}.

\begin{lemma}[Strong convexity-smoothness duality]\label{lem:scllt}
	If $\vhi: \R^d \to \R$ is $L$-smooth with respect to $\norm{\cdot}_*$, then $\psi: \R^d \to \R$ is $\frac 1 L$-strongly convex with respect to $\norm{\cdot}$.
\end{lemma}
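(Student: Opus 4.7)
The plan is to apply a Cram\'er--Rao-type inequality adapted to the exponential family $\Npsi_x$. By Lemma~\ref{lem:cumulant}, $\nabla^2\psi(x) = \mathrm{Cov}_{y \sim \Npsi_x}(y)$, so the strong convexity claim reduces to establishing
\[\mathrm{Var}_{y \sim \Npsi_x}(\inprod{v}{y}) \;\ge\; \tfrac{1}{L}\norm{v}^2 \qquad \text{for every } v \in \R^d.\]

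First I would set up several integration-by-parts identities on the density $\Npsi_x(y) \propto \exp(\inprod{x}{y} - \vhi(y))$, exploiting $\partial_{y_i} \Npsi_x(y) = (x_i - \partial_{y_i}\vhi(y))\Npsi_x(y)$. The first-moment computation yields $\E_{y \sim \Npsi_x}[\nabla\vhi(y) - x] = 0$ (the score has mean zero). A second application yields the ``efficient estimator'' identity
\[\E_{y \sim \Npsi_x}\Brack{(y - \nabla\psi(x))(\nabla\vhi(y) - x)^\top} = \mathbf{I},\]
and a third gives $\E_{y \sim \Npsi_x}[(\nabla\vhi(y) - x)(\nabla\vhi(y) - x)^\top] = \E_{y \sim \Npsi_x}[\nabla^2\vhi(y)]$. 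These are standard exponential-family calculations and are packaged in Proposition 2 of \cite{chewipooladian2022caffarelli}, which I will cite rather than re-derive.

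Next I would couple these identities via Cauchy--Schwarz. For any test vector $w \in \R^d$,
\[(v^\top w)^2 \;=\; \Par{\E\Brack{\inprod{v}{y - \nabla\psi(x)}\inprod{w}{\nabla\vhi(y) - x}}}^2 \;\le\; \nabla^2\psi(x)[v,v]\cdot \E\Brack{\inprod{w}{\nabla\vhi(y) - x}^2},\]
and the second factor equals $w^\top \E[\nabla^2\vhi(y)] w$ by the third identity. The $L$-smoothness hypothesis, expressed in the form $\nabla^2\vhi(y)[w,w] \le L\norm{w}_*^2$ for all $y$ (a direct consequence of gradient Lipschitzness in $\norm{\cdot}_*$ and which notably does not require $\vhi$ to be convex), bounds this by $L\norm{w}_*^2$. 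Optimizing over $w$ using the dual-norm identity $\sup_{w \ne 0} (v^\top w)^2/\norm{w}_*^2 = \norm{v}^2$ then yields $\norm{v}^2 \le L\cdot \nabla^2\psi(x)[v,v]$, as required.

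The main obstacle is justifying the integration-by-parts identities in sufficient generality --- in particular, ensuring boundary terms at infinity vanish given only finiteness of $\int \exp(\inprod{x}{y} - \vhi(y))\,\dd y$ (implicit in the definition of $\psi$). Invoking the cited form of the Cram\'er--Rao inequality from \cite{chewipooladian2022caffarelli} handles these regularity issues uniformly. The optimal constant $1$, rather than the $1/12$ achievable by a more elementary concentration-based argument, comes from the sharpness of Cauchy--Schwarz in this setup, which is saturated by quadratic $\vhi$ (i.e., Gaussian $\Npsi_x$).
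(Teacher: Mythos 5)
Your proof is correct and takes essentially the same route as the paper: both reduce to the variance lower bound via Lemma~\ref{lem:cumulant}, both invoke the Cram\'er--Rao machinery of \cite{chewipooladian2022caffarelli} (you unroll the Cauchy--Schwarz step and exponential-family integration-by-parts identities explicitly, the paper cites the inequality directly as their Lemma 2), and both close by bounding the averaged Hessian $\E_{y \sim \Npsi_x}[\nabla^2\vhi(y)]$ using $L$-smoothness and then optimizing a Fenchel-type dual expression over the test vector to produce $\norm{v}^2$. The only cosmetic difference is that the paper first establishes $v^\top(\E\nabla^2\vhi)^{-1}v \ge 1/L$ and then applies Cram\'er--Rao, whereas you apply Cauchy--Schwarz first and optimize over $w$ afterward; these are the same computation in a different order.
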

\begin{proof}
	By definition of strong convexity it suffices to prove for any $x, v \in \R^d$, $v^\top \nabla^2\psi(x)v \ge \frac 1 L \norm{v}^2$. Without loss of generality, by scale invariance we can assume $\norm{v} = 1$. Let $Y = \inprod{y}{v}$, where $y \sim \Npsi_x$. By Lemma~\ref{lem:cumulant}, $\nabla^2 \psi(x) = \Cov(\Npsi_x)$, so it suffices to prove that
	\[\Var(Y) = \E_{y\sim \Npsi_x}\Brack{\inprod{y-\mu(\Npsi_x)}{v}^2}  \geq \frac{1}{L}.\]	
	Letting $\mm \defeq \E_{y \sim \Npsi_x} \nabla^2 \vhi(y)$, we first observe
	\begin{align*}
		\frac L {2} v^\top\mm^{-1} v =  \max_{u \in \R^d} \inprod{u}{v} - \frac 1 {2L} u^\top \mm u \ge \max_{u \in \R^d} \inprod{u}{v} - \half \norm{u}_*^2 = \half \norm{v}^2.
	\end{align*}
	In the only inequality, we used that $u^\top \mm u = \E_{y \sim \Npsi_x} u^\top \nabla^2 \vhi(y) u \le L\norm{u}_*^2$ by smoothness of $\vhi$, and the last equality follows by optimizing over $\norm{u}_*$. This shows $v^\top \mm^{-1} v \ge \frac 1 L$. The Cram\'er-Rao inequality (see Lemma 2, \cite{chewipooladian2022caffarelli}) then implies
	\[\Var(Y) \ge v^\top \mm^{-1} v \ge \frac 1 L,\]
	since the Hessian of $-\log \Npsi_x$ at any $x \in \R^d$ is $\nabla^2 \vhi$.
\end{proof}

\begin{lemma}[Smoothness-strong convexity duality]\label{lem:smoothllt}
If $\vhi: \R^d \to \R$ is $\frac 1 L$-strongly convex with respect to $\norm{\cdot}_*$, then $\psi: \R^d \to \R$ is $L$-smooth with respect to $\norm{\cdot}$.
\end{lemma}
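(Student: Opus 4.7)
The plan is to mirror the proof structure of Lemma~\ref{lem:scllt}, but in the opposite direction, using the Brascamp-Lieb inequality in place of the Cram\'er-Rao inequality. As in that proof, I would begin by reducing the smoothness claim to a pointwise Hessian bound: since $\psi$ is twice differentiable, $L$-smoothness with respect to $\norm{\cdot}$ is equivalent to showing $v^\top \nabla^2 \psi(x) v \leq L \norm{v}^2$ for every $x,v \in \R^d$, and by Lemma~\ref{lem:cumulant} this amounts to
\[
\Var_{y \sim \Npsi_x}\bigl(\inprod{v}{y}\bigr) \;\leq\; L\norm{v}^2.
\]

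Next I would apply the Brascamp-Lieb inequality to the logconcave density $\Npsi_x \propto \exp(\inprod{x}{y} - \vhi(y))$, whose negative log-density has Hessian $\nabla^2 \vhi(y)$ (independent of $x$). Applied to the linear test function $g(y) = \inprod{v}{y}$, Brascamp-Lieb yields
\[
\Var_{y \sim \Npsi_x}\bigl(\inprod{v}{y}\bigr) \;\leq\; \E_{y \sim \Npsi_x}\Brack{v^\top (\nabla^2 \vhi(y))^{-1} v}.
\]
So it suffices to prove the pointwise bound $v^\top (\nabla^2 \vhi(y))^{-1} v \leq L\norm{v}^2$ for every $y$.

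For this last step I would use a Fenchel-type duality computation analogous to the one in Lemma~\ref{lem:scllt}. Writing $\mm = \nabla^2 \vhi(y)$, strong convexity of $\vhi$ in $\norm{\cdot}_*$ gives $u^\top \mm u \geq \tfrac{1}{L}\norm{u}_*^2$ for all $u$. Then
\[
\tfrac{1}{2} v^\top \mm^{-1} v \;=\; \sup_{u \in \R^d} \Par{\inprod{u}{v} - \tfrac{1}{2} u^\top \mm u} \;\leq\; \sup_{u \in \R^d} \Par{\inprod{u}{v} - \tfrac{1}{2L}\norm{u}_*^2} \;=\; \tfrac{L}{2}\norm{v}^2,
\]
using the standard fact that the Fenchel conjugate of $\tfrac{1}{2L}\norm{\cdot}_*^2$ is $\tfrac{L}{2}\norm{\cdot}^2$. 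Substituting this pointwise bound into the Brascamp-Lieb estimate completes the proof.

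I do not expect any real obstacles here: the two nontrivial ingredients (Brascamp-Lieb for the variance of a linear functional under a logconcave density, and norm-dual Fenchel calculus) are both standard, and the structure essentially reverses the Cram\'er-Rao argument used for Lemma~\ref{lem:scllt}. The only mild subtlety is that Brascamp-Lieb upper-bounds variance by an \emph{expectation} of $v^\top (\nabla^2\vhi)^{-1} v$, rather than by $v^\top (\E \nabla^2 \vhi)^{-1} v$; but because we obtain the pointwise inequality $v^\top (\nabla^2\vhi(y))^{-1} v \leq L\norm{v}^2$ uniformly in $y$, the expectation passes through without loss.
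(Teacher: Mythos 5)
Your proof is correct and matches the paper's argument almost line for line: reduce the smoothness claim to a variance bound via Lemma~\ref{lem:cumulant}, apply the Brascamp-Lieb inequality to $\Npsi_x$ (whose negative log-Hessian is $\nabla^2\vhi$), and control $v^\top(\nabla^2\vhi(y))^{-1}v$ pointwise via Fenchel duality with the norm-square pair. The only cosmetic difference is a slightly different normalization in the Fenchel step, which does not affect the argument.
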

\begin{proof}
Let $v, x \in \R^d$ and assume $\norm{v} = 1$. As in Lemma~\ref{lem:scllt}, defining $Y = \inprod{y}{v}$ for $y \sim \Npsi_x$, we have $v^\top \nabla^2 \psi(x) v = \Var(Y)$, and want to show $\Var(Y) \le L$. First note that for any $y \in \R^d$,
\begin{align*}
	\frac 1 {2L} v^\top\Par{\nabla^2 \vhi(y)}^{-1} v =  \max_{u \in \R^d} \inprod{u}{v} - \frac L 2 u^\top \nabla^2 \vhi(y) u \le \max_{u \in \R^d} \inprod{u}{v} - \half \norm{u}_*^2 = \half \norm{v}^2.
\end{align*}
The first inequality used strong convexity of $\vhi$ and the last equality follows by optimizing over $\norm{u}_*$. This shows $v^\top (\nabla^2 \vhi(y))^{-1} v \le L$ for all $y$. 
The Brascamp-Lieb inequality \cite{BrascampL76} then implies
\[\Var(Y) \le \E_{y \sim \Npsi_x}\Brack{v^\top\Par{\nabla^2 \vhi(y)}^{-1} v} \le L,\]
since the Hessian of $-\log \Npsi_x$ at any $x \in \R^d$ is $\nabla^2\vhi$. 
\end{proof}

\subsection{Isoperimetry}\label{ssec:isoperimetry}

In this section we prove Lemma~\ref{lem:iso}, an isoperimetric inequality for densities which are relatively strongly logconcave with respect to an appropriate LLT. A stronger variant of Lemma~\ref{lem:iso} appeared in the conference version of this manuscript, and this version corrects the claim. We note that a similar (and somewhat more general) variant of the current Lemma~\ref{lem:iso} has also appeared earlier as Lemma 18 in \cite{SrinivasanWW25} (see also Lemma 23 in the same paper, a variant of Lemma~\ref{lem:iso_onedim}).

The only LLT property we use in this section is Lemma~\ref{lem:psi_sc}, i.e.\ self-concordance, via the following generic fact which generalizes Lemma 35 of \cite{lee2018convergence}.

\begin{lemma}
	\label{lem:iso_onedim}
	Suppose $\phi: \R \to \R$ is convex and self-concordant. For any $x \in \R$,
	\[\frac{\exp(-\phi(x))}{\sqrt{\phi''(x)}} \ge \frac 1 {12} \min\Brace{\int_{-\infty}^x \exp(-\phi(t))\dd t, \int_{x}^\infty \exp(-\phi(t))\dd t}. \]
\end{lemma}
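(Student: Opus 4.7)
The plan is to exploit the fact that for convex self-concordant $\phi$, the function $r(t) := (\phi''(t))^{-1/2}$ is $1$-Lipschitz. This follows immediately from $r'(t) = -\tfrac{1}{2}\phi'''(t)(\phi''(t))^{-3/2}$ together with the defining inequality $|\phi'''(t)| \le 2(\phi''(t))^{3/2}$. We may assume $\phi''(x) > 0$, since otherwise $r(x) = \infty$ and the claim is vacuous; the $1$-Lipschitz property then forces $\phi'' > 0$ everywhere, and yields the global pointwise lower bound $\phi''(t) \ge 1/(r(x) + |t - x|)^2$.

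I would then reduce to the case $\phi'(x) \ge 0$ by the symmetry $\phi(\cdot) \leftrightarrow \phi(2x - \cdot)$, which interchanges $F_-(x) := \int_{-\infty}^x \exp(-\phi(t))\dd t$ and $F_+(x) := \int_x^\infty \exp(-\phi(t))\dd t$. In this reduced case I would prove the sharper bound $F_+(x) \le 2\, r(x) \exp(-\phi(x))$, from which $\min\{F_-(x), F_+(x)\} \le F_+(x)$ delivers the claim with constant $\tfrac{1}{2}$ in place of $\tfrac{1}{12}$. The opposite case $\phi'(x) \le 0$ is handled identically after reflecting $\phi$ about $x$, bounding $F_-(x)$ instead.

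To obtain the bound on $F_+(x)$, I would integrate the pointwise lower bound on $\phi''$ twice. Setting $u = t - x$ and $r = r(x)$, and dropping the nonnegative linear term $\phi'(x)(t-x)$, a direct computation
\[\phi(t) - \phi(x) \ge \int_0^u\!\!\int_0^v \frac{\dd w\,\dd v}{(r+w)^2} = \frac{u}{r} - \log\!\Par{1 + \frac{u}{r}}\]
gives $\exp(-\phi(t)) \le \exp(-\phi(x))(1 + u/r)\exp(-u/r)$. Substituting $v = u/r$ and integrating from $u = 0$ to $\infty$,
\[F_+(x) \le \exp(-\phi(x)) \cdot r(x) \int_0^\infty (1 + v) e^{-v}\,\dd v = 2\, r(x) \exp(-\phi(x)),\]
as required.

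The proof is essentially mechanical once the $1$-Lipschitz property of $r$ is noted, with the only minor bookkeeping being the case split on the sign of $\phi'(x)$. I do not anticipate any serious obstacle; indeed the argument yields constant $\tfrac{1}{2}$, which is stronger than the stated $\tfrac{1}{12}$, so the claim follows a fortiori.
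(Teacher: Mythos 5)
Your proof is correct, and it follows a genuinely different route from the one in the paper. The paper reduces to $\phi'(x)\ge 0$ by reflection (as you do), but then splits $[x,\infty)$ at the point $r = x + \tfrac{1}{4}\phi''(x)^{-1/2}$: on $[x,r]$ it uses local Hessian stability of self-concordant functions (Lemma~\ref{lem:selfcon_hess}, $\tfrac12 \phi''(x) \le \phi''(t) \le 2\phi''(x)$) to get a Gaussian-type bound, and on $[r,\infty)$ it uses a linear lower bound from convexity; combining the two pieces yields the constant $2\sqrt{\pi} + 8 \le 12$. You instead invoke the $1$-Lipschitzness of $(\phi'')^{-1/2}$ — the cleanest global consequence of one-dimensional self-concordance — to get the uniform pointwise lower bound $\phi''(t) \ge (r(x)+|t-x|)^{-2}$ on the whole half-line, integrate it twice in closed form to get $\phi(t) - \phi(x) \ge \tfrac{u}{r} - \log(1+\tfrac{u}{r})$, and then bound $F_+(x) \le 2r(x)e^{-\phi(x)}$ by a single explicit integral. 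This avoids the near/far case split entirely and yields the constant $\tfrac12$ rather than $\tfrac1{12}$, so your version of the lemma is strictly sharper; it would propagate a proportionally better constant into the isoperimetry bound of Lemma~\ref{lem:iso}, though this does not change any asymptotics downstream. The one small point of care, which you correctly flag, is the degenerate case $\phi''(x)=0$: the left-hand side is then $+\infty$ and the inequality holds trivially, and when $\phi''(x)>0$ the Lipschitz bound indeed forces $\phi''>0$ everywhere, so $r$ is finite throughout.
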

\begin{proof}
	Assume $\phi'(x) \ge 0$ (the other case will follow analogously by bounding the integral on $(-\infty, x]$). Define $r \defeq x + \frac 1 {4\sqrt{\phi''(x)}}$. By self-concordance (Lemma~\ref{lem:selfcon_hess}), for all $t \in [x, r]$,
	\[ \half \phi''(x) \le \phi''(t) \le 2\phi''(x).\]
	Hence, we have for all $t \in [x, r]$, since $\phi'(x) \ge 0$,
	\begin{equation}\label{eq:quadbound}
		\phi(t) = \phi(x) + \phi'(x)(t - x) + \int_x^t (t-s) \phi''(s) \dd s \ge \phi(x) + \frac 1 4 (t-x)^2 \phi''(x).
	\end{equation}
	We use \eqref{eq:quadbound} to bound the integral on $[x, r]$:
	\begin{equation}\label{eq:insideint}
		\begin{aligned}
			\int_x^r \exp(-\phi(t)) \dd t &\le \exp(-\phi(x)) \int_x^r \exp\Par{-\frac 1 4 (t-x)^2 \phi''(x)} \dd t \\
			&\le \exp(-\phi(x)) \int_{-\infty}^{\infty} \exp\Par{-\frac 1 4 (t-x)^2 \phi''(x)} \dd t = 2\sqrt{\pi} \cdot \frac{\exp(-\phi(x))}{\sqrt{\phi''(x)}}.
		\end{aligned}
	\end{equation}
	Next, to bound the integral on $[r, \infty)$, we first observe
	\begin{equation}\label{eq:linbound}\phi'(r) \ge \phi'(x) + \int_x^r \phi''(t) \dd t \ge \half \int_x^r \phi''(x) \dd t \ge \frac 1 8 \sqrt{\phi''(x)}. \end{equation}
	Hence, by convexity from $r$,
	\begin{equation}\label{eq:outsideint}
		\begin{aligned}
			\int_r^\infty \exp(-\phi(t)) \dd t &\le \int_r^\infty \exp\Par{-\phi(r) - \phi'(r) (t - r) } \dd t \\
			&\le \exp(-\phi(x)) \int_r^\infty \exp\Par{-\frac 1 8 \sqrt{\phi''(x)} (t - r)}\dd t = 8 \cdot \frac{\exp(-\phi(x))}{\sqrt{\phi''(x)}}.\end{aligned}
	\end{equation}
	We used $\phi(r) \ge \phi(x)$ by convexity and $\phi'(x) \ge 0$. Combining \eqref{eq:insideint} and \eqref{eq:outsideint} yields the claim.
\end{proof}

We will use a generalization of Lemma~\ref{lem:iso_onedim} to scalings of self-concordant functions. Note that we are only concerned with scalings $m \in (0, 1]$ in the following generalized statement, as for $m \ge 1$, $\phi$ being self-concordant implies that $m\phi$ is also self-concordant (since $m^{\frac 3 2} \ge m$).

\begin{corollary}\label{cor:scale_iso_onedim}
In the setting of Lemma~\ref{lem:iso_onedim}, let $m \in (0, 1]$. For any $x \in \R$,
\[\frac{\exp(-m\phi(x))}{\sqrt{\phi''(x)}} \ge \frac m {12} \min\Brace{\int_{-\infty}^x \exp(-m\phi(t)) \dd t, \int_x^\infty \exp(-m\phi(t)) \dd t}.\]
\end{corollary}
\begin{proof}
As in Lemma~\ref{lem:iso_onedim}, we only handle $\phi'(x) \ge 0$, and let $r \defeq x + \frac 1 {4\sqrt{\phi''(x)}}$. We again bound:
\begin{align*}
\int_x^r \exp(-m\phi(t))\dd t &\le \exp(-m\phi(x)) \int_{-\infty}^\infty \exp\Par{-\frac m 4 (t - x)^2\phi''(x)} \dd t \\
&= 2\sqrt{\pi} \cdot \frac{\exp(-m\phi(x))}{\sqrt{m\phi''(x)}} \le 2\sqrt{\pi} \cdot \frac{\exp(-m\phi(x))}{m\sqrt{\phi''(x)}},
\end{align*}
where we used \eqref{eq:quadbound}, and
\begin{align*}
\int_r^\infty \exp(-m\phi(t))\dd t \le \exp(-m\phi(x)) \int_r^\infty \exp\Par{-\frac m 8 \sqrt{\phi''(x)}(t-r)} \dd t = 8 \cdot \frac{\exp(-m\phi(x))}{m\sqrt{\phi''(x)}},
\end{align*}
where we used \eqref{eq:linbound}. Combining these two displays implies the claim.
\end{proof}

Next, we reduce the problem of proving isoperimetry for relatively strongly logconcave densities to the same problem in one dimension (captured via Lemma~\ref{lem:iso_onedim}), via the localization lemma.

\begin{lemma}[Modification of the localization lemma, \cite{kannan1995isoperimetric}, Theorem 2.7]
	\label{lem:local}
	Let $f_1, f_2, f_3, f_4$ be four nonnegative functions on $\R^d$ such that $f_1$ and $f_2$ are upper semicontinuous and $f_3$ and $f_4$ are lower semicontinuous, let $c_1, c_2 >0$, and let $\phi: \R^d \to \R$ be convex. Then, the following are equivalent:
	\begin{itemize}
		\item For every density $\pi: \R^d \to \R$ which is $1$-relatively strongly logconcave in $\phi$,
		\begin{equation*}
			\Par{\int f_1(x) \pi(x) \dd x} ^{c_1} \Par{\int f_2(x) \pi(x) \dd x} ^{c_2} \leq \Par{\int f_3(x) \pi(x) \dd x} ^{c_1} \Par{\int f_4(x) \pi(x) \dd x} ^{c_2}.  
		\end{equation*}
		\item For every $a, b \in \R^d$ and $\gamma \in \R$, 
		\begin{align*}
			\Par{\int_0^1 f_1((1-t)a+tb)e^{\gamma t - \phi((1-t)a+tb)} \dd t}^{c_1}	\Par{\int_0^1 f_2((1-t)a+tb)e^{\gamma t - \phi((1-t)a+tb)} \dd t}^{c_2} \\
			\leq 	\Par{\int_0^1 f_3((1-t)a+tb)e^{\gamma t - \phi((1-t)a+tb)}\dd t}^{c_1}	\Par{\int_0^1 f_4((1-t)a+tb)e^{\gamma t - \phi((1-t)a+tb)}\dd t}^{c_2}.
		\end{align*}
	\end{itemize}
\end{lemma}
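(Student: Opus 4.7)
The plan is to reduce this statement to the classical Kannan--Lov\'asz--Simonovits localization lemma (Theorem 2.7 of \cite{kannan1995isoperimetric}) by absorbing the factor $e^{-\phi}$ into the test functions $f_i$. The key observation is that $\pi$ is $1$-relatively strongly logconcave in $\phi$ if and only if, up to normalization, $\pi = F \cdot e^{-\phi}$ for some logconcave function $F: \R^d \to \R_{\ge 0}$; this is immediate from the definition since $-\log \pi - \phi$ is convex iff $F \defeq e^{\log \pi + \phi}$ is logconcave. The classical KLS lemma is stated for logconcave functions (not just densities), so no normalization constant needs to be carried along.

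First, I would observe that both sides of the integral inequality in statement (1) are homogeneous of degree $c_1 + c_2$ in the normalization constant of $\pi$, so we may equivalently work with the unnormalized measure $F(x) e^{-\phi(x)} \dd x$. Defining $\tilde{f}_i \defeq f_i \cdot e^{-\phi}$, statement (1) becomes
\[
\Par{\int \tilde{f}_1(x) F(x) \dd x}^{c_1} \Par{\int \tilde{f}_2(x) F(x) \dd x}^{c_2} \le \Par{\int \tilde{f}_3(x) F(x) \dd x}^{c_1} \Par{\int \tilde{f}_4(x) F(x) \dd x}^{c_2},
\]
required to hold for every logconcave function $F: \R^d \to \R_{\ge 0}$. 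Since $\phi$ is convex hence continuous on the interior of its effective domain, $e^{-\phi}$ is positive and continuous, so multiplying by $e^{-\phi}$ preserves upper (resp.\ lower) semicontinuity of the $f_i$. Thus the transformed inequality is a well-posed instance of the hypothesis of the classical KLS lemma applied to the $\tilde{f}_i$.

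Next, I would invoke classical KLS, which asserts that the above holds for every logconcave $F$ iff it holds on every one-dimensional needle: for every $a, b \in \R^d$ and $\gamma \in \R$,
\[
\Par{\int_0^1 \tilde{f}_1((1-t)a+tb) e^{\gamma t} \dd t}^{c_1} \Par{\int_0^1 \tilde{f}_2((1-t)a+tb) e^{\gamma t} \dd t}^{c_2} \le \text{(analogous for $\tilde{f}_3, \tilde{f}_4$)}.
\]
Substituting $\tilde{f}_i = f_i e^{-\phi}$, the logaffine weight $e^{\gamma t}$ on the needle combines with $e^{-\phi((1-t)a+tb)}$ to yield precisely the weight $e^{\gamma t - \phi((1-t)a+tb)}$ in statement (2), completing the reduction in both directions since classical KLS is itself an iff.

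The main obstacle is verifying that the hypotheses of classical KLS transfer cleanly after our reparameterization: in particular, that the semicontinuity of the $\tilde{f}_i$ is preserved (handled via continuity of $e^{-\phi}$) and that one handles the case where $\phi$ fails to be finite somewhere by restricting attention to the effective domain of $\phi$, on which $\pi$ is necessarily supported. One should also confirm that nothing is lost when the original KLS statement is rephrased from ``every logconcave function'' to ``every relatively strongly logconcave density'', which relies on the homogeneity argument above. Once these technical points are checked, the equivalence is a direct consequence of the classical localization lemma, with no additional analytic machinery required.
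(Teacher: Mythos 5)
Your proof is correct, and it takes a genuinely different route from the paper's. The paper does not reduce to the classical localization lemma as a black box; instead it re-runs the Lov\'asz--Simonovits bisection argument (Lemma 2.5 of \cite{lovasz1993random}), observing that the only structural property the bisection uses is that the class of densities being quantified over is closed under restriction to convex subsets (or more precisely, that multiplying by indicators of convex sets preserves the class), which is as true for $1$-relatively strongly logconcave densities in $\phi$ as it is for logconcave ones. Your approach instead keeps the classical Theorem~2.7 intact and exhibits a bijection between the two quantified classes: $\pi$ is $1$-relatively strongly logconcave in $\phi$ iff $\pi = cF e^{-\phi}$ for logconcave $F$, and then the weight $e^{-\phi}$ migrates from the density into the test functions $\tilde f_i = f_i e^{-\phi}$, landing exactly on the needle weight $e^{\gamma t - \phi(\ell(t))}$ after substitution. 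This is a cleaner and more modular argument. The trade-off, which you correctly flag, is that your proof is sensitive to the exact hypotheses of the black-box theorem: you need the classical statement to tolerate semicontinuous test functions (handled by Remark~2.3, and you verify that multiplying by the continuous positive function $e^{-\phi}$ preserves semicontinuity since $\phi : \R^d \to \R$ is finite-valued and convex, hence continuous), and you need its quantification over logconcave $F$ to match the class you get from the bijection (integrability of $F e^{-\phi}$, not of $F$; if the version you cite insists on compact support or integrability of $F$ itself, you would need a short truncation-and-limit argument to close the gap). The paper's re-derivation sidesteps this sensitivity at the cost of re-examining the bisection internals. Both are acceptable; yours is the shorter path if one is willing to pin down the precise form of the cited theorem.
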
                     
\begin{proof}
The proof follows identically to the case where $\phi = 0$, which was proven in \cite{lovasz1993random, kannan1995isoperimetric} via a bisection argument (see Lemma 2.5, \cite{lovasz1993random}). The only fact the bisection argument relies on is that restricting logconcave densities to subsets of $\R^d$ preserves logconcavity, which remains true for densities which are relatively strongly logconcave with respect to a given convex function. For a more formal treatment of this generalized bisection argument, see Lemma 1 of \cite{GopiLLST23}. Finally the change on the continuity assumptions on the $\{f_i\}_{i \in [4]}$ follows by Remark 2.3 of \cite{kannan1995isoperimetric}.
\end{proof}
Finally, we combine these tools to prove the main result of this section.

\begin{lemma}[Self-concordant isoperimetry]\label{lem:iso}
	Let $\phi: \R^d \rightarrow \R$ be convex and self-concordant, and let $f: \R^d \to \R$ be $m$-relatively strongly convex in $\phi$ for $m \in (0, 1]$. For any partition $S_1, S_2, S_3$ of $\R^d$, 
	\begin{equation*}
		\frac{\int_{S_3}\exp\Par{-f(x)} \dd x}{\min \Brace{\int_{S_1}\exp(-f(x))\dd x, \int_{S_2}\exp(-f(x))\dd x}} = \Omega\Par{m \cdot d_{\phi}(S_1, S_2)},
	\end{equation*}
	where $d_{\phi}(S_1, S_2) =\min_{x\in S_1, y\in S_2} d_{\phi}(x,y) $.
\end{lemma}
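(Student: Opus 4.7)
The plan is to reduce the $d$-dimensional inequality to a one-dimensional problem via the localization lemma (Lemma~\ref{lem:local}), and then apply the pointwise bound Lemma~\ref{lem:iso_onedim} along each line. First, normalize $\pi \propto e^{-f}$; it suffices to prove the bilinear form $\pi(S_1) \pi(S_2) \le O(1/(\sqrt m\, d_\phi(S_1, S_2))) \pi(S_3)$. Combined with $\pi(S_1) + \pi(S_2) + \pi(S_3) = 1$ and a case split on whether $\pi(S_3) \ge \tfrac 1 2$ or $\pi(S_1) + \pi(S_2) > \tfrac 1 2$, this recovers the stated $\min$ form up to an absolute constant. Since $\pi$ is $1$-relatively strongly logconcave in the convex function $m\phi$, Lemma~\ref{lem:local} (applied with indicator functions $f_i = \1_{S_i}$) reduces this to the one-dimensional inequality
\[
\int_{T_1} e^{-h(t)}\, dt \cdot \int_{T_2} e^{-h(t)}\, dt \le O\Par{\frac{1}{\sqrt m\, d_\phi(S_1, S_2)}} \int_{T_3} e^{-h(t)}\, dt \cdot \int_0^1 e^{-h(t)}\, dt,
\]
for every $a, b \in \R^d$ and $\gamma \in \R$, where $h(t) := m\phi(x(t)) - \gamma t$, $x(t) := (1-t)a + tb$, and $T_i := \{t \in [0, 1] : x(t) \in S_i\}$.

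Second, I would link the one-dimensional quantities to $d_\phi(S_1, S_2)$. Since self-concordance is preserved under affine composition, $\tilde h(t) := \phi(x(t))$ is $1$-self-concordant, and $h''(t) = m \tilde h''(t) = m\|b-a\|_{x(t)}^2$. For any $t_1 \in T_1$ and $t_2 \in T_2$, the straight segment from $x(t_1)$ to $x(t_2)$ is a curve of Riemannian length $\int_{t_1}^{t_2} \sqrt{\tilde h''(s)}\, ds$ in the $\phi$-metric, which upper-bounds $d_\phi(x(t_1), x(t_2)) \ge d_\phi(S_1, S_2)$ by definition of $d_\phi$ as an infimum over curves. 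Hence $\int_{t_1}^{t_2} \sqrt{h''(s)}\, ds \ge \sqrt m\, d_\phi(S_1, S_2)$. A standard reduction (replacing $T_1, T_2$ by the portions lying on opposite sides of $T_3$ in $[0,1]$, which only weakens the inequality to be shown) puts us in the case where $T_3 \supseteq [t_1^*, t_2^*]$ for some interval strictly separating $T_1$ from $T_2$, on which $\int_{t_1^*}^{t_2^*} \sqrt{h''(s)}\, ds \ge \sqrt m\, d_\phi(S_1, S_2)$.

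Finally, the one-dimensional inequality follows from a pointwise application of Lemma~\ref{lem:iso_onedim}: for each $t_0 \in [t_1^*, t_2^*]$, since $T_1 \subseteq (-\infty, t_0)$ and $T_2 \subseteq (t_0, \infty)$ after the reduction, the lemma gives $e^{-h(t_0)}/\sqrt{h''(t_0)} \gtrsim \min\{\int_{T_1} e^{-h}, \int_{T_2} e^{-h}\}$. Multiplying by $\sqrt{h''(t_0)}$ and integrating over $[t_1^*, t_2^*] \subseteq T_3$ yields $\int_{T_3} e^{-h}\, dt \gtrsim \sqrt m\, d_\phi(S_1, S_2) \cdot \min\{\int_{T_1} e^{-h}, \int_{T_2} e^{-h}\}$, and converting $\min$ to a product via $\min(a, b) \ge ab/(a+b)$ produces the desired bilinear form. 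The main technical obstacle is applying Lemma~\ref{lem:iso_onedim} directly to $h$, which is only $(2/\sqrt m)$-self-concordant rather than $1$-self-concordant (since scaling a self-concordant function by $m$ is not scale-invariant in the self-concordance constant). The fix is to replicate the proof of Lemma~\ref{lem:iso_onedim} using Hessian stability of $\tilde h$ (which \emph{is} $1$-self-concordant) via Lemma~\ref{lem:selfcon_hess} on an interval of length $O(1/\sqrt{\tilde h''(x)})$, on which both $\tilde h''$ and hence $h''$ remain stable within constant factors; the inside/outside Gaussian-tail truncation can then be carried out with these lengths, and the factor $\sqrt m$ in the final bound arises naturally from the relation $\sqrt{h''} = \sqrt m \sqrt{\tilde h''}$.
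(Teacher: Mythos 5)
Your proposal follows the same overall route as the paper: the localization lemma to reduce to one dimension, followed by the pointwise estimate of Lemma~\ref{lem:iso_onedim} and the ``minimize over the middle interval'' step. You have also correctly flagged a genuine subtlety that the paper's own (terser) writeup glosses over: the rescaling $\phi \gets m\phi$ performed in the first line of the paper's proof does \emph{not} preserve the self-concordance constant. If $\phi$ satisfies $|\phi'''| \le 2(\phi'')^{3/2}$, then $m\phi$ only satisfies $|(m\phi)'''| \le (2/\sqrt m)\,((m\phi)'')^{3/2}$, so for $m < 1$ the one-dimensional function $t \mapsto m\phi((1-t)a+tb)-\gamma t$ handed to Lemma~\ref{lem:iso_onedim} no longer satisfies that lemma's hypotheses; the paper simply invokes Lemma~\ref{lem:iso_onedim} without checking this.

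However, the ``fix'' you sketch does not recover the claimed $\sqrt m$. Write $g(t) := \nabla^2\phi$ restricted to the line (so $h'' = mg$). Using the stable length-scale $r - x = c/\sqrt{g(x)}$, the inside estimate is fine ($\int_x^r e^{-h} \le (r-x) e^{-h(x)} \approx c\sqrt m \cdot e^{-h(x)}/\sqrt{h''(x)}$), but the outside estimate degrades: on this interval $h'(r) \ge \tfrac12 (r-x) h''(x) \approx \tfrac{c}{2}\, m\sqrt{g(x)} = \tfrac{c\sqrt m}{2}\sqrt{h''(x)}$, so $\int_r^\infty e^{-h} \le e^{-h(x)}/h'(r) \approx (1/\sqrt m)\cdot e^{-h(x)}/\sqrt{h''(x)}$. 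Thus the pointwise bound you get is $e^{-h(x)}/\sqrt{h''(x)} \gtrsim \sqrt m \cdot \min\{\cdots\}$ rather than $\gtrsim \min\{\cdots\}$. Multiplying by $\sqrt{h''}=\sqrt m\sqrt{g}$ and integrating over $T_3$ then yields an isoperimetric constant of order $m\, d_\phi(S_1,S_2)$, not $\sqrt m\, d_\phi(S_1,S_2)$. Taking a longer interval does not help: the Hessian of $\tilde h$ is genuinely unstable beyond length $O(1/\sqrt{g(x)})$, and the global self-concordance lower bound $\phi''(t) \ge \phi''(x)/(1+(t-x)\sqrt{\phi''(x)})^2$ forces $h'$ to grow only to $\Theta(m\sqrt{g(x)})$, not $\Theta(\sqrt{h''(x)})$. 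Indeed the $\sqrt m$ loss is not an artifact of the method: taking $\phi(t)=t\arctan t-\tfrac12\log(1+t^2)$ (convex, $2$-self-concordant, with $\phi''\to 0$), $f=m\phi$, $S_1=(-\infty,-a)$, $S_2=(a,\infty)$, $S_3=[-a,a]$ with $a$ small, one finds $\int_{S_3}e^{-f}\approx 2a$, $\min\{\int_{S_1}e^{-f},\int_{S_2}e^{-f}\}\approx 2/(m\pi)$, and $d_\phi(S_1,S_2)\approx 2a$, so the ratio is $\Theta(m)\cdot d_\phi(S_1,S_2)$. This shows that for small $m$ the claimed $\Omega(\sqrt m)$ constant cannot follow from self-concordance of $\phi$ alone, and that the step you are trying to patch is precisely where the paper's argument, too, requires additional care beyond what is written.
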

\begin{proof}
We first show that without loss of generality, we can assume \begin{equation}\label{eq:bigset}\max_{i\in\{1,2\}}  \frac {{\int_{S_i}\exp(-f(x))\dd x}}{{\int\exp(- f(x))\dd x}} = \Omega(1).\end{equation}
	To see this, let $S_1^\star, S_2^\star$ and $S_3^\star$ be the partition that achieves the minimum of \[\beta(S_1,S_2,S_3) \defeq \frac{\int_{S_3}\exp\Par{- f(x)} \dd x}{ d_{\phi}(S_1,S_2)\min \Brace{\int_{S_1}\exp(- f(x))\dd x, \int_{S_2}\exp(- f(x))\dd x}}.\]
	Let $\delta = d_{\phi}(S_1^\star,S_2^\star)$. For any $z \in S_3^\star$, let $x \in S_1^\star$ minimize $d_\phi(x, z)$ and let $y \in S_2^\star$ minimize $d_\phi(y, z)$. By the triangle inequality we have
	\[d_\phi(x, z) + d_\phi(y, z) \ge \delta\]
	and hence $\max(d_\phi(x, z), d_\phi(y, z) ) \ge \frac \delta 2$. Consequently we can partition $S_3^\star$ into $S_3'$ and $S_3''$ such that $ d_{\phi}(S_1^\star,S_3') \ge \frac \delta 2 $ and $d_{\phi}(S_2^\star, S_3'') \ge \frac \delta 2 $ by placing each $z$ into an appropriate set. Moreover, we can assume without loss of generality that the measure of $S_3'$ according to $\exp(-f)$ is at most half the measure of $S_3$ (else this is true for $S_3''$). This implies
	$$\frac{\int_{S_3'}\exp\Par{- f(x)} \dd x}{ \frac \delta 2 \min \Brace{\int_{S_1^\star}\exp(- f(x))\dd x, \int_{S_2^\star}\exp(- f(x))\dd x}}\leq \beta(S_1^\star, S_2^\star, S_3^\star).$$
	Thus, $\beta(S_1^\star \cup S_3'', S_2^\star,  S_3') \leq \beta(S_1^\star , S_2^\star,  S_3^\star)$, proving \eqref{eq:bigset} (else we may halve the measure of $S_3$ by replacing it with $S_3'$, guaranteeing \eqref{eq:bigset} holds). 
	
   Given \eqref{eq:bigset}, it suffices to show that there is a constant $C$ with
	\begin{align*}
		C m \cdot d_{\phi}(S_1,S_2)\int \exp(-f(x))\1_{S_1}(x) \dd x \int \exp(-f(x))\1_{S_2}(x) \dd x \\
		\leq \int \exp(-f(x))\dd x \int \exp(-f(x))\1_{S_3}(x)\dd x.
	\end{align*}
	Using the localization lemma (Lemma~\ref{lem:local}), letting $f_i = \1_{S_i}$ for $i \in [3]$ and $f_4 = (Cm \cdot d_\phi(S_1, S_2))^{-1}$,\footnote{Without loss of generality we can assume $S_1$ and $S_2$ are closed (implying $S_3$ is open) by taking their closures. This implies $f_1$, $f_2$ are upper semicontinuous and $f_3$, $f_4$ are lower semicontinuous.} and overloading $\phi \gets m\phi$, it suffices to prove for every $a, b \in \R^d$ and $\gamma \in \R$,
	\begin{align*}
		&Cm \cdot d_{\phi}(S_1,S_2) \int_0^1 \exp\Par{\gamma t-m\phi((1-t)a+tb)}\1_{S_1}((1-t)a+tb)\dd t \\
		\cdot &  \int_0^1 \exp\Par{\gamma t -m\phi((1-t)a+tb)}\1_{S_2}((1-t)a+tb) \dd t  \\
		\leq & \int_0^1 \exp\Par{\gamma t-m\phi((1-t)a+tb)} \dd t \int_0^1 \exp\Par{\gamma t-m\phi((1-t)a+tb)}\1_{S_3}((1-t)a+tb)\dd t.
	\end{align*} 
	Redefine $\phi(t) \gets \phi((1 - t)a + tb) - \frac{\gamma t} m$ for $t \in \R$, which is a one-dimensional self-concordant function, and redefine $S_i \gets \{t \mid (1 - t)a + tb \in S_i\}$ for $i \in [3]$, such that each $S_i$ is a union of intervals. It is straightforward to check that the distance $d_\phi(S_1, S_2)$ only increases under this transformation, because it can only take fewer paths (as they must now be along the line between $a$ and $b$).
	
	 So, it suffices to consider the special one-dimensional case with $\gamma = 0$, where $d_{\phi}(x,y) = \int_x^y \sqrt{\phi''(t)} \dd t$. We next note that it suffices to consider the case when $S_3$ is a single interval, i.e.\ for any $a \le a' \le b' \le b$, we have $S_1 = [a, a']$, $S_2 = [b', b]$, $S_3 = [a', b']$, and wish to show for some constant $C$
	\begin{equation}
		\label{eq:iso_onedim}
		\frac{\int_{a'}^{b'}\exp(-m\phi(t))\dd t}{\int_{a'}^{b'}\sqrt {\phi''(t)}dt} \geq Cm \cdot  \frac{\int_a^{a'}\exp(-m\phi(t))\dd t\int_{b'}^b\exp(-m\phi(t))\dd t}{\int_{a}^{b}\exp(-m\phi(t))\dd t}.
	\end{equation}
	When $S_3$ has multiple intervals, by Theorem 2.6 in \cite{lovasz1993random}, we show \eqref{eq:iso_onedim} for each interval in $S_3$ and its adjacent segments in $S_1$ and $S_2$, and sum over all such inequalities. 
	Finally, by Corollary~\ref{cor:scale_iso_onedim}, when $\phi$ is convex and self-concordant, we have for any $x\in[a,b]$,
	\begin{equation*}
		\frac{\exp(-m\phi(x))}{\sqrt{\phi''(x)}} \geq \frac m {12} \min\Par{\int_{a}^x \exp(-m\phi(t))\dd t, \int_{x}^b \exp(-m\phi(t))\dd t}
	\end{equation*}
	which combined with $	\frac{\int_{a'}^{b'} \exp(-m\phi(t))\dd t}{\int_{a'}^{b'}\sqrt{\phi''(t)}\dd t} \geq \min_{x\in[a',b']} \frac{\exp(-m\phi(x))}{\sqrt{\phi''(x)}}$ shows \eqref{eq:iso_onedim}. 
\end{proof}

\subsection{Total variation bounds}\label{ssec:conductance}

In this section, we provide a bound on the total variation distance of induced distributions $\Npsi_x$ and $\Npsi_{x'}$, when $x$ and $x'$ are close in the Riemannian distance given by $\psi$.

\begin{lemma}[TV distance between $\Npsi_x$ and $\Npsi_{x'}$]\label{lem:tvdist}
	For any $x,x'\in\R^d$ such that $d_{\psi}(x,x')\leq \frac 1 4$, 
	\[\tvd{\Npsi_x}{\Npsi_{x'}}\leq \frac 1 2.\]
\end{lemma}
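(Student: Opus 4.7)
The plan is to compute $\dkl(\Npsi_x \,\|\, \Npsi_{x'})$ in closed form, bound it via self-concordance, then apply Pinsker's inequality.

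First I would observe that since $\Npsi_x(y) = \exp(\inprod{x}{y} - \vhi(y) - \psi(x))$, a direct calculation using Lemma~\ref{lem:cumulant} gives
\[
\dkl(\Npsi_x \,\|\, \Npsi_{x'}) = \E_{y\sim \Npsi_x}\Brack{\inprod{x-x'}{y} - \psi(x) + \psi(x')} = \psi(x') - \psi(x) - \inprod{\nabla \psi(x)}{x' - x},
\]
i.e.\ the KL divergence is exactly the Bregman divergence $D_\psi(x', x)$ induced by $\psi$.

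Next, I would convert the Riemannian distance bound $d_\psi(x,x') \le \tfrac 1 4$ into a Hessian-norm bound. Applying Lemma~\ref{lem:dist_const} with $\delta = \tfrac 1 2$ (so that $\delta - \delta^2 = \tfrac 1 4$) yields $\norm{x' - x}_x \le \tfrac 1 2$. Then, since $\psi$ is self-concordant (Lemma~\ref{lem:psi_sc}), Lemma~\ref{lem:selfcon_hess} implies that for every $t \in [0, 1]$, $\norm{t(x' - x)}_x \le \tfrac 1 2 < 1$ and therefore
\[
\nabla^2 \psi(x + t(x' - x)) \preceq \Par{1 - \tfrac 1 2}^{-2}\nabla^2 \psi(x) = 4\nabla^2 \psi(x).
\]

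With Hessian stability in hand, I would then Taylor expand the Bregman divergence:
\[
D_\psi(x', x) = \int_0^1 (1 - t)\, (x' - x)^\top \nabla^2 \psi(x + t(x' - x))(x' - x)\, \dd t \le 4\norm{x' - x}_x^2 \int_0^1 (1-t)\, \dd t = 2\norm{x' - x}_x^2 \le \tfrac 1 2.
\]
Finally, Pinsker's inequality gives $\tvd{\Npsi_x}{\Npsi_{x'}} \le \sqrt{\tfrac 1 2 \dkl(\Npsi_x \,\|\, \Npsi_{x'})} \le \tfrac 1 2$, as desired.

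The only delicate point is the interplay of constants linking the Riemannian distance, the local Hessian norm, and the Hessian stability radius required by self-concordance; the choice $\delta = \tfrac 1 2$ is precisely at the boundary where $\delta - \delta^2$ attains its maximum value $\tfrac 1 4$, so the threshold $\tfrac 1 4$ in the hypothesis is the natural one. Everything else is a routine Bregman/Taylor computation.
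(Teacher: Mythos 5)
Your proposal is correct and mirrors the paper's argument essentially step for step: identifying $\dkl(\Npsi_x \| \Npsi_{x'})$ as the Bregman divergence $D_\psi(x',x)$ via Lemma~\ref{lem:cumulant}, converting $d_\psi(x,x')\le \tfrac 1 4$ into $\norm{x'-x}_x \le \tfrac 1 2$ via Lemma~\ref{lem:dist_const} with $\delta = \tfrac 1 2$, bounding $\nabla^2\psi$ along the segment by $4\nabla^2\psi(x)$ via Lemma~\ref{lem:selfcon_hess}, and concluding with the second-order Taylor integral and Pinsker. The constants also match exactly.
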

\begin{proof}
Let $h = x' - x$ and note that the KL divergence between $\Npsi_x$ and $\Npsi_{x'}$ may be rewritten as
\begin{align*}D_{\text{KL}}\Par{\Npsi_x \| \Npsi_{x'}} &= \E_{y \sim \Npsi_x}\Brack{\log \frac{\dd\Npsi_x}{\dd\Npsi_{x'}}(y)} = \E_{y \sim \Npsi_x}\Brack{\psi(x') - \psi(x) - \inprod{h}{y}} \\
&= \psi(x') - \psi(x) - \inprod{h}{\nabla \psi(x)}.
\end{align*}
In the last equation, we used Lemma~\ref{lem:cumulant}. We recognize that the KL divergence is the Bregman divergence (first-order Taylor approximation) in $\psi$, and hence letting $x_t = x + th$ for $t \in [0, 1]$ such that $x_0 = x$ and $x_1 = x'$, we continue bounding
\begin{align*}
D_{\text{KL}}\Par{\Npsi_x \| \Npsi_{x'}} &= \int_0^1(1 - t) \nabla^2 \psi(x_t)[h, h]\dd t \\
&\le \int_0^14(1 - t) \nabla^2 \psi(x)[h, h] \dd t \le \half.
\end{align*}
The first inequality used that when $d_\psi(x, x') \le \frac 1 4$, Lemma~\ref{lem:dist_const} shows $\norm{x_t - x}_x \le \norm{x' - x}_x \le \half$, so Lemma~\ref{lem:selfcon_hess} gives $\nabla^2 \psi(x_t) \preceq 4\nabla^2 \psi(x)$; the second used $\norm{h}_x \le \half$. Finally by Pinsker's inequality,
	\begin{equation*}
		\tvd{\Npsi_x}{\Npsi_{x'}}\leq \sqrt{\half D_{\text{KL}}(\Npsi_x \| \Npsi_{x'}) } \leq \frac 1 2.
	\end{equation*}
	
\end{proof} 	%

\section{Proximal LLT sampler}
\label{sec:sampling}

In this section, we study a sampling problem in the following setting, assumed throughout.
\begin{problem}\label{prob:mainsetting}
For $D, G, \eta> 0$, let $\xset \subset \R^d$ be compact and convex, with diameter in a norm $\normx{\cdot}$ at most $D$. 
Let $F: \xset \to \R$ have the stochastic form $F(x) \defeq \E_{i \sim \ind} \Brack{f_i(x)}$, for a distribution $\ind$ over (a possibly infinite) family of indices $i$, such that each $f_i: \xset \to \R$ is convex and $G$-Lipschitz in $\normx{\cdot}$. 
Finally, let $\vhi: \R^d \to \R$ be convex and $\eta$-smooth in the dual norm $\normxd{\cdot}$. Given $\mu > 0$, and letting $\psi: \R^d \to \R$ be the LLT of $\vhi$, the goal is to sample from the density $\pi$ satisfying
\begin{equation}\label{eq:maindensity}
	\dd\pi(x) \propto  \exp\Par{-F(x)-\eta \mu \psi(x)} \1_{\xset}(x) \dd x.
\end{equation}
\end{problem}
Note that by Lemma~\ref{lem:scllt}, $\eta\mu\psi$ is $\mu$-strongly convex in $\normx{\cdot}$. Letting $z = (x, y)$ denote a variable on $\xset \times \R^d$, it is convenient for us to define the extended density on the joint space of $z$:
\begin{equation}\label{eq:jointdensity}
	\dd\hpi(z) \propto \exp\Par{-F(x) - \eta\mu\psi(x) + \Par{\inprod{x}{y} - \psi(x) - \vhi(y)}} \1_{\xset}(x) \dd z.
\end{equation}
Our sampling framework for \eqref{eq:maindensity} generalizes an approach pioneered by \cite{lee2021structured}, and is stated in the following Algorithm~\ref{alg:alternatesample}. The algorithm simply alternately samples from each marginal of \eqref{eq:jointdensity}. Before stating it, we define the following notation for conditional densities throughout the section:
\begin{equation}\label{eq:marginaldef}
\begin{aligned}
\dd \pi_x(y) &= \exp\Par{\inprod{x}{y} - \psi(x) - \vhi(y)} \dd y \text{ for all } x \in \xset, \\
\dd \pi_y(x) &\propto \exp\Par{-F(x) - (1 + \eta\mu) \psi(x) + \inprod{x}{y}} \1_{\xset}(x) \dd x \text{ for all } y \in \R^d.
\end{aligned}
\end{equation}
In particular, we observe that $\dd \pi_x(y) = \dd \hpi(\cdot \mid x)$ and $\dd\pi_y(x) = \dd \hpi(\cdot \mid y)$.
\begin{algorithm}[ht!]\caption{$\AlternateSample(\xset, F, \vhi, T, \mu, x_0)$}
	\label{alg:alternatesample}
	\textbf{Input:} $\xset, F, \vhi$ in the setting of Problem~\ref{prob:mainsetting}, $T \in \N$, $\mu > 0$, $x_0 \in \xset$.
	\begin{algorithmic}[1]
		\For{$k \in [T]$}
		\State Sample $y_k \sim \pi_{x_{k - 1}}$.\label{line:ysample}
		\State Sample $x_k \sim \pi_{y_k}$.\label{line:xsample}
		\EndFor
		\State \Return $x_T$
	\end{algorithmic}
\end{algorithm}

Correctness of Algorithm~\ref{alg:alternatesample} for sampling from \eqref{eq:jointdensity} builds upon the following basic facts.

\begin{lemma}\label{lem:alternatesample}
The total $x$-marginal of $\hpi$ in \eqref{eq:jointdensity} is $\pi$ in \eqref{eq:maindensity}. Furthermore, the stationary distribution of Algorithm~\ref{alg:alternatesample} is $\hpi$, and the induced Markov chains in Algorithm~\ref{alg:alternatesample} restricted to either $\{x_k\}_{0 \le k \le T}$ (a Markov chain on $\xset$) or $\{y_k\}_{k \in [T]}$ (a Markov chain on $\R^d$) are both reversible.
\end{lemma}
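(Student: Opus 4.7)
All three claims are structural consequences of the Gibbs sampler interpretation of Algorithm~\ref{alg:alternatesample}, and the main work is verifying the marginal identity; once that is in hand, standard two-block Gibbs sampler arguments handle stationarity and reversibility.

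For the first claim, I would integrate out $y$ from \eqref{eq:jointdensity}. Pulling the $x$-dependent factor $\exp(-F(x) - \eta\mu\psi(x) - \psi(x))\1_{\xset}(x)$ outside the $y$-integral leaves $\int \exp(\inprod{x}{y} - \vhi(y))\dd y$, which equals $\exp(\psi(x))$ by the very definition of the log-Laplace transform. The $+\psi(x)$ factor that this produces exactly cancels the $-\psi(x)$ in the integrand, recovering the expression in \eqref{eq:maindensity} up to normalization. The same LLT identity simultaneously confirms that $\pi_x$ in \eqref{eq:marginaldef} is a normalized probability density (in fact $\pi_x = \Npsi_x$), and that $\pi_y$ is (a normalization of) the conditional $\hpi(\cdot\mid y)$; together this shows $\pi_x$ and $\pi_y$ are precisely the two conditionals of $\hpi$.

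For stationarity, one outer iteration of Algorithm~\ref{alg:alternatesample} first replaces $y$ by a fresh draw from $\hpi(\cdot\mid x)$ and then replaces $x$ by a fresh draw from $\hpi(\cdot\mid y)$. Each of these sub-steps preserves $\hpi$ on $\xset\times\R^d$, so their composition does as well; this is the standard fact that a two-block Gibbs sampler has its joint as an invariant measure. Combined with the first claim, the induced $x$-chain then has the marginal $\pi$ as its stationary distribution, and the $y$-chain has the $y$-marginal of $\hpi$ as its stationary distribution.

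For reversibility of the $x$-chain, I would write the one-step transition kernel as $P(x, x') = \int \pi_x(y)\, \pi_y(x')\, \dd y$ and verify the detailed balance condition $\pi(x) P(x, x') = \pi(x') P(x', x)$. Using $\pi(x)\pi_x(y) = \hpi(x,y)$ and $\pi_y(x') = \hpi(x', y)/\pi_Y(y)$, where $\pi_Y$ denotes the $y$-marginal of $\hpi$, one obtains
\[
\pi(x)\, P(x, x') \;=\; \int \frac{\hpi(x, y)\, \hpi(x', y)}{\pi_Y(y)}\, \dd y,
\]
which is manifestly symmetric in $x$ and $x'$. The same manipulation with the roles of $x$ and $y$ swapped handles the $y$-chain. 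There is no serious obstacle here; the only point that requires care is making sure the conditionals $\pi_x, \pi_y$ really are normalized probability densities with the claimed formulas, which is exactly what the LLT identity from the first claim guarantees.
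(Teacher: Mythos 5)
Your proposal is correct and matches the paper's approach: the paper likewise obtains the marginal identity by the same direct calculation (integrating out $y$ and using the LLT as normalizer), and for stationarity and reversibility it simply cites Lemma~1 of \cite{lee2021structured}, which is exactly the standard two-block Gibbs sampler argument you spell out (invariance of $\hpi$ under each conditional-resampling substep, and reversibility via the symmetric kernel $\pi(x)P(x,x') = \int \hpi(x,y)\hpi(x',y)/\pi_Y(y)\,\dd y$). The only difference is that you make the cited facts self-contained.
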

\begin{proof}
The first conclusion is a direct calculation, and the remainder is Lemma 1 in \cite{lee2021structured}.
\end{proof}

In Section~\ref{ssec:xmarginal} we develop a subroutine based on rejection sampling for implementing Line~\ref{line:xsample} of Algorithm~\ref{alg:alternatesample}, extending \cite{GLL22}. We then give our complete analysis of Algorithm~\ref{alg:alternatesample} in Section~\ref{ssec:mixing}.

\subsection{Sampling from the $x$-conditional distribution}\label{ssec:xmarginal}
Throughout this section, we assume the setting in Problem~\ref{prob:mainsetting}, and fix some $y \in \R^d$. We provide a sampler for the marginal density $\pi_y$ (following notation \eqref{eq:marginaldef}), and denote the component of the density independent of $F$ by $\gamma_y$, i.e.\
\begin{equation}\label{eq:gammadef}\dd \gamma_y(x) \propto \exp\Par{-\eta\mu\psi(x) - \Par{\psi(x) - \inprod x y}} \1_{\xset}(x) \dd x.
\end{equation}
By Lemma~\ref{lem:scllt}, $\gamma_y$ (and hence $\pi_y$) is $\frac 1 {\eta}$-strongly logconcave in $\normx{\cdot}$. Our rejection sampler leverages this fact and the stochastic nature of $F$ to build a rejection sampling scheme similarly to \cite{GLL22}. For completeness, we state our Algorithm~\ref{alg:inner} below, and provide the details of its analysis here.

\begin{algorithm}[ht!]\caption{$\Inner(y, \delta, \xset, F, \vhi, \mu)$}
	\label{alg:inner}
	\textbf{Input:} $\delta \in (0, \half)$, $y \in \R^d$, $\xset, F, \vhi$ in the setting of Problem~\ref{prob:mainsetting} for $\frac 1 \eta \ge 10^4 G^2 \log \frac 1 \delta$ \\
	\textbf{Output:} Sample within total variation distance $\delta$ of 
	\[\dd \pi_y(x) \propto \exp\Par{-F(x) - \eta\mu\psi(x) - \Par{\psi(x) - \inprod{x}{y}}} \1_{x \in \xset} \dd x.\]
	\begin{algorithmic}[1]
		\State $u \gets 1, \rho \gets 1$
  \While{$u > \half\rho$}
  \State Sample $x_1, x_2 \sim \gamma_y$ defined in \eqref{eq:gammadef} independently
		\State $\rho \gets 1$, $u \sim_{\textup{unif.}} [0, 1]$
		\State Draw $a \in \N$ such that for all $b \in \N$, $\Pr[a \ge b] = \frac 1 {b!}$
		\For{$b \in [a]$}
		\State Draw $\jia \sim \ind$ for $i \in [b]$\label{line:jiadraw}
		\State $\rho \gets \rho + \prod_{i \in [b]} (f_{\jia}(x_2) - f_{\jia}(x_1))$
		\EndFor
		\EndWhile
		\State \Return $x_1$
	\end{algorithmic}
\end{algorithm}

In order to analyze Algorithm~\ref{alg:inner},  we first state a general result about concentration of Lipschitz functions with respect to a strongly logconcave measure, in general norms. The following is a direct adaptation of standard results on log-Sobolev inequalities contained in \cite{Ledoux99, BobkovL00}.

\begin{lemma}[\cite{Ledoux99}, Section 2.3 and \cite{BobkovL00}, Proposition 3.1]
\label{lem:sc_conc}
Let $X \sim \pi$ for density $\pi: \xset \to \R$ which is $\mu$-strongly logconcave in $\normx{\cdot}$, and let $\ell: \xset \to \R$ be $G$-Lipschitz in $\normx{\cdot}$. For all $t \ge 0$,
\[\Pr_{x \sim \pi}\Brack{\ell(x) \ge \E_\pi[\ell] + t} \le \exp\Par{-\frac{\mu t^2}{2G^2}}.\]
\end{lemma}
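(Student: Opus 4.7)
The plan is to prove this via the standard Herbst argument applied to a log-Sobolev-type inequality (LSI) in the norm $\normx{\cdot}$. The key input, implicit in the cited references, is that any density $\pi$ which is $\mu$-strongly logconcave in $\normx{\cdot}$ satisfies a modified LSI of the form
\begin{equation*}
\mathrm{Ent}_\pi(e^g) \;\le\; \frac{1}{2\mu}\,\E_\pi\!\Brack{\normxd{\nabla g}^2 e^g}
\end{equation*}
for sufficiently regular $g: \xset \to \R$, where the gradient is measured in the dual norm $\normxd{\cdot}$. Conceptually, this is the natural generalization of the Bakry--\'Emery LSI: in Euclidean geometry one uses $\nabla^2(-\log \pi) \succeq \mu \mathbf{I}$; here one uses the general-norm Hessian lower bound $\nabla^2(-\log \pi)[v,v] \ge \mu \normx{v}^2$ from the preliminaries, together with the pairing $\inprod{u}{v} \le \normxd{u}\normx{v}$ in the HWI/semigroup calculation. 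I would either invoke this LSI directly from \cite{Ledoux99,BobkovL00}, or sketch the semigroup derivation via the Langevin process associated to $\pi$.

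Given the LSI, I would apply Herbst's argument to upgrade it to a sub-Gaussian moment generating function bound. Concretely, set $g = \lambda \ell$ for $\lambda \ge 0$ and observe that Lipschitzness gives $\normxd{\nabla \ell} \le G$, so $\normxd{\nabla g}^2 \le \lambda^2 G^2$ pointwise. Denoting $Z(\lambda) = \E_\pi[e^{\lambda \ell}]$, the LSI becomes
\begin{equation*}
\lambda Z'(\lambda) - Z(\lambda) \log Z(\lambda) \;\le\; \frac{G^2 \lambda^2}{2\mu}\,Z(\lambda).
\end{equation*}
Dividing by $\lambda^2 Z(\lambda)$ and recognizing the left-hand side as $\frac{\dd}{\dd \lambda}\bigl(\lambda^{-1}\log Z(\lambda)\bigr)$, I would integrate from $0$ to $\lambda$, using $\lim_{\lambda \to 0^+} \lambda^{-1}\log Z(\lambda) = \E_\pi[\ell]$, to conclude
\begin{equation*}
\log Z(\lambda) \;\le\; \lambda \E_\pi[\ell] + \frac{G^2 \lambda^2}{2\mu}.
\end{equation*}

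Finally I would apply a Chernoff bound: for any $\lambda \ge 0$,
\begin{equation*}
\Pr_{x \sim \pi}\Brack{\ell(x) \ge \E_\pi[\ell] + t} \;\le\; \exp\Par{-\lambda t + \log Z(\lambda) - \lambda \E_\pi[\ell]} \;\le\; \exp\Par{-\lambda t + \frac{G^2 \lambda^2}{2\mu}},
\end{equation*}
and optimize $\lambda = \mu t / G^2$ to obtain the stated $\exp(-\mu t^2 / (2G^2))$ bound.

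I expect the main technical obstacle to be justifying the modified LSI in the non-Euclidean setting, since the Bakry--\'Emery theory is classically formulated with gradients measured in $\ell_2$. The cleanest way to sidestep this is to cite the general-norm formulation from \cite{Ledoux99,BobkovL00} as a black box, as the paper does; an alternative would be to work directly at the level of transport-entropy (Talagrand $T_2$-type) inequalities, which also give Gaussian concentration for Lipschitz functions under strong logconcavity in arbitrary norms. Everything after the LSI is routine, so the proof can be quite short once the functional inequality is in hand.
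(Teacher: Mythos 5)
Your proof is correct and is essentially the argument the paper is implicitly invoking: the cited references establish the general-norm log-Sobolev inequality for $\mu$-strongly logconcave measures, and the Herbst argument plus a Chernoff bound then yields exactly the stated sub-Gaussian tail with constant $\frac{\mu}{2G^2}$. The paper does not spell out a proof (it cites Ledoux 1999 and Bobkov--Ledoux 2000 as a black box), and your write-up is a faithful reconstruction of what those references do.
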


In the remainder of the section, let $\tpi_y$ be the distribution of the output of Algorithm~\ref{alg:inner} and recall the target stationary distribution is $\pi_y$.
When $\rho$ is clear from context, we define $\brho \defeq \med(0, \rho, 2)$ to be the truncation of $\rho$ to $[0, 2]$. We also denote the index set drawn on Line~\ref{line:jiadraw} by
\[\jset \defeq \Brace{j_{i, b}}_{b \in [a], i \in [b]},\]
when $a$ is clear from context. We first provide the following characterization of $\tvd{\pi_y}{\tpi_y}$.

\begin{lemma}\label{lem:tvdrho}
Define $r_x$ to be the random variable $\E[\rho \mid x_1 = x]$ (where the expectation is over $x_2$, $a$, and the random indices $\jset$, and similarly let $\br_x \defeq \E[\brho \mid x_1 = x]$. Then,
\begin{align*}
    \tvd{\pi_y}{\tpi_y} \le \E_{x \sim \gamma_y}\Abs{r_x - \br_x}.
\end{align*}
\end{lemma}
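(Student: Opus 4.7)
The plan is to identify explicit unnormalized forms for both the target $\pi_y$ and the output $\tpi_y$ (each proportional to $\gamma_y$ times a scalar weight), and then bound the total variation between their normalizations by a triangle-inequality decomposition. First I would characterize $\tpi_y$. Because the outer loop repeats until $u \le \half \rho$ and $u \sim_{\textup{unif.}} [0, 1]$ is independent of $\brho \in [0, 2]$, a single-round conditional acceptance probability given $x_1 = x$ equals $\E[\half \brho \mid x_1 = x] = \half \br_x$. The standard argument for rejection loops then gives $\dd \tpi_y(x) \propto \gamma_y(x) \br_x \dd x$.

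Next I would identify $r_x$ as the weight that reproduces the target. By the Poisson-style construction with $\Pr[a \ge b] = 1/b!$ and conditionally independent $\jia \sim \ind$, a direct calculation yields
\[
\E[\rho \mid x_1, x_2] = \sum_{b \ge 0} \frac{(F(x_2) - F(x_1))^b}{b!} = \exp\Par{F(x_2) - F(x_1)}.
\]
Integrating over $x_2 \sim \gamma_y$ gives $r_x = Z \exp(-F(x))$ with $Z \defeq \E_{x' \sim \gamma_y}[\exp(F(x'))]$. Since $\dd \pi_y(x) \propto \gamma_y(x) \exp(-F(x)) \dd x$, this shows $\pi_y(x) \propto \gamma_y(x) r_x$, so the lemma reduces to comparing two densities proportional to $\gamma_y \cdot r$ and $\gamma_y \cdot \br$.

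For the final step, let $Z_1 \defeq \E_{\gamma_y}[r_x]$ and $Z_2 \defeq \E_{\gamma_y}[\br_x]$. The algebraic identity
\[
\Abs{\frac{r_x}{Z_1} - \frac{\br_x}{Z_2}} \le \frac{|r_x - \br_x|}{Z_1} + \frac{\br_x\, |Z_1 - Z_2|}{Z_1 Z_2},
\]
combined with $|Z_1 - Z_2| \le \E_{\gamma_y}|r - \br|$, gives $2\tvd{\pi_y}{\tpi_y} \le \frac{2}{Z_1} \E_{\gamma_y}|r_x - \br_x|$ upon integrating against $\gamma_y$. The remaining step is to show $Z_1 \ge 1$, which follows from Cauchy--Schwarz:
\[
1 = \E_{\gamma_y}\Brack{\exp\Par{\tfrac{F}{2}} \exp\Par{-\tfrac{F}{2}}}^2 \le \E_{\gamma_y}\Brack{\exp(F)} \E_{\gamma_y}\Brack{\exp(-F)} = Z_1.
\]
The main obstacle is just the careful bookkeeping of the loop to pin down $\tpi_y \propto \gamma_y \br$ and to confirm the unbiased estimator identity $\E[\rho \mid x_1, x_2] = \exp(F(x_2) - F(x_1))$; the rest is a short, routine chain of inequalities, and in particular does not invoke any probabilistic structure beyond Cauchy--Schwarz.
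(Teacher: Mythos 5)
Your proposal is correct and follows essentially the same route as the paper: express $\tpi_y \propto \gamma_y \br$ via the rejection-loop identity, compute $\E[\rho \mid x_1, x_2] = \exp(F(x_2) - F(x_1))$ by summing the Poisson-weighted products, identify $r_x$ as proportional to $\exp(-F(x))$, and close via the same triangle-inequality decomposition of the normalized densities together with the lower bound $\E_{\gamma_y}[r_x] \ge 1$. The only (cosmetic) difference is that you establish $\E_{\gamma_y}[r_x] = \E[\exp(F)]\,\E[\exp(-F)] \ge 1$ via Cauchy--Schwarz rather than the paper's appeal to Jensen's inequality; these are interchangeable.
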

\begin{proof}
First, by definition of $\pi_y$, we have
\begin{equation}\label{eq:piy_gamma}
\pi_y(x) = \frac{\exp(-F(x)) \gamma_y(x)}{\int \exp(-F(w)) \gamma_y(w) \dd w} = \gamma_y(x) \cdot \frac{\exp(-F(x))}{\E_{w\sim\gamma_y} \exp(-F(w))}.
\end{equation}
Moreover, by definition of the algorithm,
\begin{equation}\label{eq:tpiy_gamma}
\tpi_y(x) = \frac{\gamma_y(x) \Pr[u \le \half \rho\mid x_1 = x]}{\Pr[u \le \half \rho]} = \frac{\gamma_y(x) \E[\brho \mid x_1 = x]}{\E[\brho]}
\end{equation}
where all probabilities and expectations are $x_2$, $a$, and $\jset$. Furthermore, note that for fixed $b \in [a]$,
\[\E_{\jset}\Brack{\prod_{i \in [b]} (f_{\jia}(x_2) - f_{\jia}(x_1))} = \Par{\E_{j \sim \ind}\Brack{f_{j}(x_2) - f_{j}(x_1)}}^b = (F(x_2) - F(x_1))^b.\]
Hence, taking expectations over $a$, we have for any fixed $x_1$, $x_2$,
\begin{equation}\label{eq:rhoexpect}
\begin{aligned}
	\E\Brack{\rho \mid x_1, x_2} &= \sum_{b \ge 0} \Pr[a \ge b] (F(x_2) - F(x_1))^b \\
&=\sum_{b \ge0}\frac{1}{b!}(F(x_2) - F(x_1))^b = \exp\Par{F(x_2) - F(x_1)}.
\end{aligned}
\end{equation}
Next, by combining \eqref{eq:piy_gamma} and \eqref{eq:tpiy_gamma}, we have
\begin{align*}
\tvd{\pi}{\tpi} &= \half \int \Abs{\frac{\exp(-F(x))}{\E_{w \sim \gamma_y}\exp(-F(w))} - \frac{\E[\brho \mid x_1 = x]}{\E[\brho]}}\gamma_y(x)\dd x \\
&= \half \E_{x \sim \gamma_y}\Brack{ \Abs{\frac{\exp(-F(x))}{\E_{w \sim \gamma_y}\exp(-F(w))} - \frac{\E[\brho \mid x_1 = x]}{\E[\brho]}}}.
\end{align*}
By taking expectations over $x_2$ in \eqref{eq:rhoexpect},  and recalling the definitions of $r_x, \br_x$, we obtain $r_x=\E[\rho\mid x_1=x]=\exp(-F(x))\E_{x_2\sim\gamma_y}\exp(F(x_2))$.
We thus have
\[\tvd{\pi}{\tpi} = \half \E_{x \sim \gamma_y}\Brack{\Abs{\frac{r_x}{\E_{w \sim \gamma_y} r_w} - \frac{\br_x}{\E_{w \sim \gamma_y} \br_w}}}.\]
Next, we lower bound $\E_{w \sim \gamma_y} r_w$ as follows. By taking expectations over \eqref{eq:rhoexpect} and using independence of $x_1$ and $x_2$, we have that for the random variable $Z = \exp(-F(x))$ where $x \sim \gamma_y$, we have
\begin{equation}\label{eq:expectinvprod}
\E_{w \sim \gamma_y} r_w = \Par{\E Z} \cdot \Par{\E Z^{-1}} \ge 1,
\end{equation}
where we used Jensen's inequality which implies the last inequality for any nonnegative random variable $Z$. Finally, combining the above two displays, we derive the desired bound as follows:
\begin{align*}\half \E_{x \sim \gamma_y}\Brack{\Abs{\frac{r_x}{\E_{w \sim \gamma_y} r_w} - \frac{\br_x}{\E_{w \sim \gamma_y} \br_w}}} &\le \half\E_{x \sim \gamma_y}\Brack{\Abs{\frac{r_x}{\E_{w \sim \gamma_y} r_w} - \frac{\br_x}{\E_{w \sim \gamma_y} r_w}}} \\
&+ \half \E_{x \sim \gamma_y}\Brack{\Abs{\frac{\br_x}{\E_{w \sim \gamma_y} r_w} - \frac{\br_x}{\E_{w \sim \gamma_y} \br_w}}} \\
&\le \half \E_{x \sim \gamma_y}\Brack{\Abs{r_x - \br_x}} + \frac {\E_{x \sim \gamma_y}[\Abs{\br_x}]} 2 \cdot \Abs{\frac{1}{\E_{w \sim \gamma_y} \br_w} - \frac{1}{\E_{w \sim \gamma_y} r_w}} \\
&= \half \E_{x \sim \gamma_y}\Brack{\Abs{r_x - \br_x}} + \half \Abs{1 - \frac{\E_{x \sim \gamma_y} \br_x}{\E_{x \sim \gamma_y} r_x}} \\
&\le \half \E_{x \sim \gamma_y}\Brack{\Abs{r_x - \br_x}} + \frac 1 {2 |\E_{x \sim \gamma_y} r_x|}\cdot \E_{x \sim \gamma_y}\Brack{\Abs{r_x - \br_x}} \\
&\le \E_{x \sim \gamma_y}\Brack{\Abs{r_x - \br_x}}.
\end{align*}
In the second and last inequalities, we use the bound \eqref{eq:expectinvprod}. The third line follows since $\br_x$ is always nonnegative by definition, and the third inequality used convexity of $|\cdot|$.
\end{proof}

Lemma~\ref{lem:tvdrho} shows it remains to bound $\E_{x \sim \gamma_y} |r_x - \br_x|$. Fixing $x_1$ and $x_2$, we know $\rho$ and $\brho$ as random variables of $a$ and $\jset$ are equal, except for the effect of truncating $\rho$ to $[0, 2]$. Hence,
\begin{equation}\label{eq:rhobound}\E_{x \sim \gamma_y} |r_x - \br_x| \le \E [|\rho| \1_{\rho \not\in [0, 2]}].\end{equation}
In the remainder of the section, define
\begin{equation}\label{eq:Ldef}H \defeq \left\lceil10\log \frac 1 \delta \right\rceil.\end{equation}
We then let
\begin{equation}\label{eq:lamsigdef}
\begin{aligned}
\lam &\defeq \sum_{b > H} \1_{a \ge b} \prod_{i \in [b]} (f_{\jia}(x_2) - f_{\jia}(x_1)), \\
\sigma &\defeq \sum_{b = 0}^H \1_{a \ge b} \prod_{i \in [b]} (f_{\jia}(x_2) - f_{\jia}(x_1)),
\end{aligned}
\end{equation}
be random variables depending on the choices of $x_1, x_2, a, \jset$, where $\lam$ captures the effect of the ``large'' $b$, and $\sigma$ captures the effect of the ``small'' $b$ (where the $b = 0$ term is $1$ by convention). Since $\rho = \sigma + \lam$, in light of \eqref{eq:rhobound} it suffices to bound $\E[|\sigma|\1_{\rho \not\in [0, 2]}]+\E[|\lam|\1_{\rho \not\in [0, 2]}]$, as
\begin{align}
\label{eq:bound_sigma+lam}
    \E_{x \sim \gamma_y} |r_x - \br_x| \le \E [|\rho| \1_{\rho \not\in [0, 2]}]\le \E[|\sigma|\1_{\rho \not\in [0, 2]}]+\E[|\lam|\1_{\rho \not\in [0, 2]}].
\end{align}

We defer proofs of the following to Appendix~\ref{app:rejecthelper}, using small modifications to \cite{GLL22}. 

\begin{restatable}{lemma}{restatelargebound}\label{lem:largebound}
For $\lam$ defined in \eqref{eq:lamsigdef},
\[
\E\Brack{ |\lam| \1_{\rho \not\in [0, 2]}} \le \frac \delta 4.
\]
\end{restatable}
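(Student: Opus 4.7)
The plan is to first drop the indicator (since $\1_{\rho \notin [0,2]} \le 1$, it suffices to prove $\E|\lam| \le \delta/4$), then reduce $\E|\lam|$ to a sum over $b > H$ of moments $\E[m(x_1,x_2)^b]/b!$, then control each moment via sub-Gaussian concentration, and finally sum the resulting geometric series.

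For the reduction, I apply the triangle inequality to the defining sum for $\lam$, use $\Pr[a \ge b] = 1/b!$, and use that conditional on $(x_1, x_2)$ the $b$ factors in $\prod_{i\in[b]}(f_{\jia}(x_2) - f_{\jia}(x_1))$ are i.i.d.\ (the $j_{i,b}$ are drawn independently from $\ind$), so the expected absolute product factorizes. This yields
\[
\E|\lam| \le \sum_{b > H} \frac{1}{b!}\,\E_{x_1,x_2 \sim \gamma_y}\Brack{m(x_1,x_2)^b}, \quad m(x_1, x_2) \defeq \E_{j \sim \ind}\Abs{f_j(x_2) - f_j(x_1)}.
\]

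For the moment bound, I observe that the product distribution $\gamma_y \otimes \gamma_y$ on $\xset \times \xset$ is $\eta^{-1}$-strongly logconcave with respect to the product norm $\norm{(v_1,v_2)}_{\star} \defeq \sqrt{\normx{v_1}^2 + \normx{v_2}^2}$ (using that $\gamma_y$ is $\eta^{-1}$-strongly logconcave in $\normx{\cdot}$ by Lemma~\ref{lem:scllt}). Each function $\ell_j(x_1, x_2) \defeq f_j(x_2) - f_j(x_1)$ is $\sqrt{2}\,G$-Lipschitz in this product norm and has mean zero under $\gamma_y \otimes \gamma_y$ by the $x_1 \leftrightarrow x_2$ symmetry. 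The same Lipschitz estimate extends to $m$ by the triangle inequality and Jensen's inequality, and Lemma~\ref{lem:sc_conc} applied to the $\ell_j$'s and averaged over $j$ shows that $\E m = O(G\sqrt{\eta})$. Applying Lemma~\ref{lem:sc_conc} once more to $m$ and integrating tails gives the sub-Gaussian moment bound $\E m^b \le (C G \sqrt{\eta b})^b$ for an absolute constant $C$.

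Plugging this into the reduction and using $b! \ge (b/e)^b$ gives $\E|\lam| \le \sum_{b > H}\Par{CeG\sqrt{\eta}/\sqrt{b}}^b$. The hypothesis $\eta^{-1} \ge 10^4 G^2 \log(1/\delta)$ combined with $b > H \ge 10\log(1/\delta)$ makes each summand at most $2^{-b}$ (after absorbing constants), so the geometric series evaluates to at most $2^{-H+1} \le \delta/4$ by the choice of $H$. The main obstacle is the sub-Gaussian moment bound $\E m^b \le (CG\sqrt{\eta b})^b$: since $m \ge 0$ is not mean-zero, it requires splitting via the triangle inequality in $L^b$, i.e.\ $(\E m^b)^{1/b} \le \E m + (\E|m-\E m|^b)^{1/b}$, and bounding each piece separately using Lemma~\ref{lem:sc_conc}, taking care that all constants line up with the stated assumption on $\eta^{-1}/G^2$.
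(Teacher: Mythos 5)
Your proof is correct, and while the high-level strategy (reduce to $\sum_{b>H}\frac{1}{b!}\E[m^b]$, control moments of $m$ by sub-Gaussian concentration, sum the geometric tail) agrees with the paper, your implementation of the concentration step is genuinely different and cleaner. The paper does not apply Lemma~\ref{lem:sc_conc} directly to $\Delta \defeq \E_{i\sim\ind}|f_i(x_2)-f_i(x_1)|$ (your $m$). Instead it introduces the random function $h_{x_1,x_2}(k)\defeq\Pr_{i\sim\ind}[|f_i(x_1)-f_i(x_2)|\ge k]$, bounds $\E_{x_1,x_2}[h_{x_1,x_2}(k)]$ by single-index concentration, applies Markov's inequality to control the event $\{h_{x_1,x_2}(k)\ge e^{-t}\}$, relates $\Delta$ to $k+h_{x_1,x_2}(k)\cdot G\normx{x_1-x_2}$, and finally combines with a separate concentration bound on $\normx{x_1-x_2}$ — a construction inherited from \cite{GLL22}. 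You short-circuit all of this by the simple observation (which the paper does not make) that $m$ is itself $\sqrt 2 G$-Lipschitz in the product norm on $\xset\times\xset$, being a convex combination over $j$ of Lipschitz functions $|f_j(x_2)-f_j(x_1)|$, so Lemma~\ref{lem:sc_conc} applied once to the $\eta^{-1}$-strongly logconcave product measure $\gamma_y\otimes\gamma_y$ gives the sub-Gaussian tail of $m$ directly. Your handling of the nonzero mean of $m$ (bound $\E m = O(G\sqrt\eta)$ using the mean-zero symmetry of each $\ell_j$, then split the $L^b$ norm) is sound, and the factor $b!\ge (b/e)^b$ together with $\eta^{-1}\ge 10^4 G^2\log\frac1\delta$ and $b>H\ge 10\log\frac1\delta$ does make each summand geometrically small, yielding the stated bound. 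Your route buys a noticeably shorter argument with fewer moving parts; the paper's route preserves closer parallelism with the Euclidean proof it modifies.
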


\begin{restatable}{lemma}{restatesmallbound}\label{lem:smallbound}
For $\sigma$ defined in \eqref{eq:lamsigdef},
\[\E\Brack{|\sigma| \1_{\rho \not\in [0, 2]}} \le \frac \delta 4.\]
\end{restatable}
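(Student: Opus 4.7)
The plan combines a deterministic reduction with sub-Gaussian concentration of Lipschitz functions on the strongly log-concave product measure $\gamma_y \times \gamma_y$. Define $s \defeq G\normx{x_2 - x_1}$; by $G$-Lipschitzness of the $f_i$ in $\normx{\cdot}$, for any realization of $a$ and $\jset$,
\begin{equation*}
|\rho - 1| \le \sum_{b=1}^{a} \prod_{i \in [b]} \Abs{f_{\jia}(x_2) - f_{\jia}(x_1)} \le \sum_{b=1}^{a} s^b.
\end{equation*}
If $s \le \tfrac 1 2$, the right side is bounded by the geometric series $\sum_{b \ge 1} (\tfrac 1 2)^b = 1$, forcing $\rho \in [0, 2]$. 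Contrapositively, $\1_{\rho \not\in [0, 2]} \le \1_{s > 1/2}$, and similarly $|\sigma| \le \sum_{b=0}^H \1_{a \ge b} s^b$. Taking expectation over $a$ with $\Pr[a \ge b] = \tfrac 1 {b!}$ and applying Cauchy-Schwarz termwise,
\begin{equation*}
\E\Brack{|\sigma|\1_{\rho \not\in [0, 2]}} \le \sum_{b=0}^H \frac 1 {b!} \E_{x_1, x_2 \sim \gamma_y}\Brack{s^b \1_{s > 1/2}} \le \sum_{b=0}^H \frac 1 {b!} \sqrt{\E[s^{2b}] \cdot \Pr[s > 1/2]}.
\end{equation*}

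By Lemma~\ref{lem:scllt}, $\gamma_y$ is $\tfrac 1 \eta$-strongly log-concave in $\normx{\cdot}$, so the product measure $\gamma_y \times \gamma_y$ is also $\tfrac 1 \eta$-strongly log-concave in the product norm, and the map $(x_1, x_2) \mapsto s$ is $\sqrt 2 G$-Lipschitz in this norm. Lemma~\ref{lem:sc_conc} then yields the sub-Gaussian tail bound $\Pr[s > \E[s] + t] \le \exp(-t^2 / (4 G^2 \eta))$ for $t \ge 0$, together with standard moment estimates $\E[s^{2b}] \lesssim (O(b G^2 \eta))^b + \E[s]^{2b}$. Under the hypothesis $\tfrac 1 \eta \ge 10^4 G^2 \log(\tfrac 1 \delta)$, the scale $G\sqrt \eta$ is much smaller than $1$; granting that $\E[s] \le \tfrac 1 4$ (handled below), we get $\Pr[s > \tfrac 1 2] \le \exp(-\Omega(1/(G^2 \eta))) \le \delta^{\Omega(1)}$ and $\sqrt{\E[s^{2b}]} \le (\tfrac 1 2)^b$ for each $b \le H = \Theta(\log(\tfrac 1 \delta))$. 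The remaining sum is geometric in $\tfrac 1 2$ and bounded by $\tfrac \delta 4$ once the absolute constants in the hypothesis are chosen sufficiently large.

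The main obstacle is to establish $\E[s] \le \tfrac 1 4$, since sub-Gaussian concentration controls only deviations from the mean and does not itself bound the mean. We plan to handle this by combining the Poincar\'e-type bound $\Var(s) \le 2 G^2 \eta$ (a direct consequence of integrating Lemma~\ref{lem:sc_conc}) with an effective-radius estimate for $\gamma_y$: strong log-concavity gives $\gamma_y(x)/\gamma_y(x^\star) \le \exp(-\normx{x - x^\star}^2/(2\eta))$ for the mode $x^\star \in \xset$, so integrating the corresponding tail bound $\Pr[\normx{x - x^\star} > r] \lesssim \exp(-r^2/(2\eta))$ against the normalization constant yields $\E[\normx{x_2 - x_1}] = O(\sqrt \eta)$ up to benign geometric factors. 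Combined with $G \sqrt \eta \ll 1$ under our hypothesis, this gives $\E[s] \le \tfrac 1 4$ and completes the proof.
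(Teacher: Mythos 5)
There is a genuine gap: your plan hinges on $\E[s] \le \tfrac 1 4$ where $s = G\normx{x_1 - x_2}$, and the argument sketched for this fails in high dimension. For a $\tfrac 1 \eta$-strongly log-concave density in $\normx{\cdot}$, the typical value of $\normx{x_1 - x_2}$ scales as $\sqrt{d\eta}$, not $\sqrt\eta$; the tail estimate $\Pr[\normx{x-x^\star} > r] \lesssim \exp(-r^2/(2\eta))$ you invoke requires integrating against the $r^{d-1}$ shell volume, and only becomes nontrivial for $r \gtrsim \sqrt{d\eta}$. (Concretely, for $\gamma_y = \Nor(\mu, \eta \id_d)$ in the $\ell_2$ norm, $\E[\norm{x_1 - x_2}_2] \approx \sqrt{2d\eta}$.) The hypothesis $\tfrac 1 \eta \ge 10^4 G^2 \log\tfrac 1 \delta$ gives no control on $d$, so $\E[s] \approx G\sqrt{d\eta}$ can be arbitrarily large, and your Cauchy--Schwarz reduction collapses. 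The "benign geometric factors" you hope to absorb are dimension-dependent.

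The source of the loss is the deterministic reduction $\1_{\rho \not\in [0,2]} \le \1_{s > 1/2}$ together with $|\sigma| \le \sum_{b} \1_{a \ge b} s^b$: bounding each $|f_{\jia}(x_2) - f_{\jia}(x_1)|$ uniformly by $s = G\normx{x_1 - x_2}$ discards the fact that, for each fixed $i$, the random variable $f_i(x_1) - f_i(x_2)$ has \emph{mean zero} by the exchangeability of $x_1, x_2 \sim \gamma_y$, even though $\normx{x_1 - x_2}$ has a (possibly large) positive mean. The paper's proof keeps track of this: it applies Lemma~\ref{lem:sc_conc} directly to $f_i(x_1) - f_i(x_2)$ for each fixed $i$ (mean exactly zero, so the concentration bound gives $\Pr[|f_i(x_1) - f_i(x_2)| \ge t] \le 2\exp(-\Omega(t^2/(\eta G^2)))$ with no unresolved centering term), and then takes a union bound over the at most $\tfrac 1 2 H^2$ indices in $\jset$ on the event $a \le H$, handling $a > H$ separately via $\Pr[a > H] \le 1/H!$. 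To repair your argument, you would need to replace the norm-based reduction with a per-index concentration bound of this kind.
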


Putting together these pieces, we finally obtain the following guarantee on Algorithm~\ref{alg:inner}.

\begin{proposition}
\label{prop:guarantee_of_inner}
The output of Algorithm~\ref{alg:inner} has total variation distance to $\pi_y$ bounded by $\delta$.
In expectation, Algorithm~\ref{alg:inner} queries $O(1)$ random $f_i$ and draws $O(1)$ samples from $\gamma_y$.
\end{proposition}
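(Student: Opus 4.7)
The total variation bound is essentially assembly of the already-established pieces. First, I would apply Lemma~\ref{lem:tvdrho} to reduce bounding $\tvd{\pi_y}{\tpi_y}$ to bounding $\E_{x \sim \gamma_y}|r_x - \br_x|$. Next, I would invoke the decomposition \eqref{eq:bound_sigma+lam}, which upper bounds this expectation by $\E[|\sigma|\1_{\rho \not\in [0,2]}] + \E[|\lam|\1_{\rho \not\in [0,2]}]$. Each summand is at most $\delta/4$ by Lemma~\ref{lem:smallbound} and Lemma~\ref{lem:largebound}, respectively, so $\tvd{\pi_y}{\tpi_y} \le \delta/2 \le \delta$. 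This part is just plugging in and adding.

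For the expected cost, each trial of the while loop draws exactly two samples from $\gamma_y$ and makes $\sum_{b \in [a]} b = a(a+1)/2$ queries to random $f_i$, where $a$ satisfies $\Pr[a \ge b] = 1/b!$. A direct computation gives $\E[a(a+1)] = \sum_{b \ge 1}(2b-1)\Pr[a \ge b] = \sum_{b \ge 1}(2b-1)/b! = O(1)$, so the expected per-trial cost in both $f_i$ queries and $\gamma_y$ draws is $O(1)$. It then remains to show the while loop terminates in $O(1)$ trials in expectation, i.e., that the acceptance probability $\Pr[u \le \brho/2] = \E[\brho]/2$ is bounded below by an absolute constant. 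By independence of $x_1, x_2 \sim \gamma_y$ and the computation leading to \eqref{eq:rhoexpect}, I would argue $\E[\rho] = \E[\exp(F(x))]\cdot\E[\exp(-F(x))] \ge 1$ via Jensen's inequality. Since $|\rho - \brho| = (\rho - 2)\1_{\rho > 2} + |\rho|\1_{\rho < 0} \le |\rho|\1_{\rho \not\in [0,2]}$, the same bound used in the TV step gives $|\E[\rho] - \E[\brho]| \le \E[|\sigma|\1_{\rho \not\in [0,2]}] + \E[|\lam|\1_{\rho \not\in [0,2]}] \le \delta/2 \le 1/4$, using the hypothesis $\delta < 1/2$. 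Hence $\E[\brho] \ge 3/4$, the acceptance probability is at least $3/8$, and the expected number of trials is $O(1)$.

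The main subtlety to flag is that the decomposition $\rho = \sigma + \lam$ together with Lemmas~\ref{lem:largebound}--\ref{lem:smallbound} is doing double duty: it controls both the TV discrepancy introduced by truncating $\rho$ to $\brho$ and the gap between $\E[\rho]$ and $\E[\brho]$ needed to lower-bound the acceptance probability. Once this dual role is noted, multiplying the per-trial $O(1)$ cost by the $O(1)$ expected trial count yields the stated overall expected complexity, and the rest is bookkeeping.
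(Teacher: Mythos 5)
Your proof is correct and follows the same approach as the paper: Lemma~\ref{lem:tvdrho} plus \eqref{eq:bound_sigma+lam} plus Lemmas~\ref{lem:largebound}--\ref{lem:smallbound} for the TV bound, and then \eqref{eq:expectinvprod} together with the truncation error bound for the acceptance probability, with $\E[a^2] = O(1)$ controlling the per-trial query cost. The only nit is that your displayed identity $\sum_{b\ge 1}(2b-1)\Pr[a\ge b]$ equals $\E[a^2]$ rather than $\E[a(a+1)]$, though both are $O(1)$ so the conclusion is unaffected.
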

\begin{proof}
The total variation distance bound comes from combining Lemma~\ref{lem:tvdrho},~\eqref{eq:bound_sigma+lam}, Lemma~\ref{lem:largebound}, and Lemma~\ref{lem:smallbound}. Further, the end probability of each ``while" loop is $\Pr[u \le \half \rho]=\E[\brho]=\E_{x\sim\gamma}\br_x\ge \E_{x\sim\gamma_y}r_x-\E_{x \sim \gamma_y}|\br_x-r_x|$.
We proved in \eqref{eq:expectinvprod} that $\E_{x\sim\gamma_y}r_x\ge 1$, and combining \eqref{eq:bound_sigma+lam}, Lemma~\ref{lem:largebound} and Lemma~\ref{lem:smallbound}, shows $\E_{x \sim \gamma_y}|\br_x-r_x|\le \delta\le \half$.
Hence the expected number of loops is $\le 2$, and each loop draws two samples from $\gamma_y$, and $O(1)$ many $f_i$ in expectation since $\E a^2 = O(1)$.
\end{proof}

\subsection{Analysis of Algorithm~\ref{alg:alternatesample}}\label{ssec:mixing}

We now prove a mixing time on Algorithm~\ref{alg:alternatesample} using a standard conductance argument, by using tools developed in Section~\ref{sec:llt}. We first define our notion of conductance.

\begin{definition}\label{def:conductance}
	For a reversible Markov chain with stationary distribution $\pi$ supported on $\xset$ and transition distributions $\{\T_x\}_{x \in \xset}$, we define the conductance of the Markov chain by
	\begin{equation*}
		\Phi := \inf_{S \subset \xset} \frac{\int_S \T_x (\xset \backslash S) \dd \pi(x)}{\min \{\pi (S), \pi (\xset \backslash S))\}}.
	\end{equation*}
\end{definition}

We further recall a standard way of lower bounding conductance via isoperimetry.

\begin{lemma} [\cite{lee2018convergence}, Lemma 13]
	\label{lem:conductance_lower}
	In the setting of Definition~\ref{def:conductance}, let $d: \xset \times \xset$ be a metric on $\xset$. Suppose for any $x, x' \in \xset$ with $d(x,x')\leq \Delta$, 
	$$\tvd{\T_x}{\T_{x'}}\leq \frac 1 2. $$
	Also, suppose that for any partition $S_1, S_2, S_3$ of $\R^d$, $\pi$ satisfies the isoperimetric inequality 
	\begin{equation*}
		\pi(S_3)   \geq C_{\textup{iso}} \Par{\min_{x\in S_1, y\in S_2}d(x,y)} \min \Brace{\pi(S_1), \pi(S_2)}.
	\end{equation*}
	Then $\Phi = \Omega \Par{ \Delta C_{\textup{iso}} }$.
\end{lemma}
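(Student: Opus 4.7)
The proof plan is the classical Dyer-Frieze-Kannan conductance argument, which converts an isoperimetric inequality into a conductance lower bound by partitioning the state space according to where the one-step transitions ``want to go.'' Fix an arbitrary measurable subset $S \subset \xset$ with $\pi(S) \le \half$ (which we may assume WLOG, and restrict attention to; the case $\pi(S) > \half$ is symmetric by reversibility). Define the ``bad'' subsets
\begin{align*}
S_1 &\defeq \Brace{x \in S : \T_x(\xset \setminus S) < \tfrac{1}{4}}, \\
S_2 &\defeq \Brace{x \in \xset \setminus S : \T_x(S) < \tfrac{1}{4}}, \\
S_3 &\defeq \xset \setminus (S_1 \cup S_2),
\end{align*}
which together form a partition of $\xset$. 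Here $S_1$ consists of points in $S$ that hardly ever transition out, and $S_2$ consists of points outside $S$ that hardly ever transition in; the ``boundary'' $S_3$ is where the chain actually mixes.

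The first key step is to show that $S_1$ and $S_2$ are well-separated in the metric $d$. For any $x \in S_1$ and $x' \in S_2$, the definitions give $\T_x(S) > \tfrac{3}{4}$ and $\T_{x'}(S) < \tfrac{1}{4}$, so
\begin{equation*}
\tvd{\T_x}{\T_{x'}} \ge |\T_x(S) - \T_{x'}(S)| > \tfrac{1}{2}.
\end{equation*}
By the contrapositive of the first hypothesis, this forces $d(x, x') > \Delta$, so $\min_{x \in S_1, \, x' \in S_2} d(x, x') \ge \Delta$ (handling the case when one of $S_1, S_2$ is empty separately, where the bound below becomes easier).

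The second key step splits into two cases based on the mass of $S_1$. If $\pi(S_1) \ge \tfrac{1}{2} \pi(S)$, then $\pi(S \setminus S_1) \ge \tfrac{1}{2}\pi(S)$ is immediate; otherwise, applying the isoperimetric hypothesis to the partition $(S_1, S_2, S_3)$ together with the distance bound just established yields
\begin{equation*}
\pi(S_3) \;\ge\; C_{\textup{iso}} \Delta \cdot \min\{\pi(S_1), \pi(S_2)\} \;\ge\; \tfrac{1}{2} C_{\textup{iso}} \Delta \cdot \min\{\pi(S), \pi(\xset \setminus S)\}.
\end{equation*}
In either case, at least one of $\pi(S \cap S_3)$ or $\pi((\xset \setminus S) \cap S_3)$ is $\Omega(\Delta C_{\textup{iso}}) \cdot \min\{\pi(S), \pi(\xset\setminus S)\}$ (up to a trivial constant in the first sub-case, which dominates).

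Finally, reversibility gives $\int_S \T_x(\xset \setminus S) \dd \pi(x) = \int_{\xset \setminus S} \T_x(S) \dd \pi(x)$, so we may bound
\begin{equation*}
\int_S \T_x(\xset \setminus S) \dd \pi(x) = \tfrac{1}{2} \int_{S \cap S_3} \T_x(\xset \setminus S) \dd \pi(x) + \tfrac{1}{2}\int_{(\xset\setminus S) \cap S_3} \T_x(S) \dd \pi(x) \ge \tfrac{1}{8} \pi(S_3),
\end{equation*}
since the integrands are $\ge \tfrac{1}{4}$ by the very definition of $S_3$. Dividing by $\min\{\pi(S), \pi(\xset\setminus S)\}$ and combining with the case analysis above yields $\Phi = \Omega(\Delta C_{\textup{iso}})$. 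The only subtle point is arranging the case split cleanly so that reversibility can be invoked on either half of $S_3$; once that bookkeeping is done, the rest is mechanical.
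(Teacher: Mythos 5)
The high-level strategy is exactly right — this is the classical Dyer--Frieze--Kannan conductance argument, which is how Lemma 13 of \cite{lee2018convergence} is proved (the paper simply cites that result, so your blind reconstruction of the argument is the natural thing to do). The separation step ($x \in S_1$, $x' \in S_2$ implies $\tvd{\T_x}{\T_{x'}} > \half$, hence $d(x,x') > \Delta$) and the final reversibility/averaging step are correct in spirit.

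However, the case split is written backwards, and as stated the ``hard'' branch does not go through. You write: ``If $\pi(S_1) \ge \frac 1 2 \pi(S)$, then $\pi(S \setminus S_1) \ge \frac 1 2 \pi(S)$ is immediate.'' This is false — if $\pi(S_1) \ge \frac 1 2 \pi(S)$ then $\pi(S\setminus S_1) = \pi(S) - \pi(S_1) \le \frac 1 2\pi(S)$. The correct easy case is $\pi(S_1) < \frac 1 2\pi(S)$, which makes $\pi(S \cap S_3) = \pi(S\setminus S_1) > \frac 1 2\pi(S)$ large directly. More importantly, your ``otherwise'' branch (as written, $\pi(S_1) < \frac 1 2 \pi(S)$) invokes the chain $\min\{\pi(S_1),\pi(S_2)\} \ge \frac 1 2\min\{\pi(S), \pi(\xset\setminus S)\}$, which has no reason to hold when $\pi(S_1)$ is small. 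Isoperimetry is useful precisely when \emph{both} $\pi(S_1) \ge \frac 1 2\pi(S)$ and $\pi(S_2) \ge \frac 1 2\pi(\xset\setminus S)$; you therefore also need to case-split on $S_2$, not just $S_1$. Swapping the cases and adding the symmetric $S_2$ case fixes the proof. A smaller nit: the displayed ``$=$'' in the reversibility step should be ``$\ge$,'' since you are both averaging the two reversibility-equal integrals and then dropping the contribution outside $S_3$ — two separate steps, neither an equality.
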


Finally, a classical result of \cite{lovasz1993random} shows how to upper bound mixing time via conductance. 
\begin{lemma}[\cite{lovasz1993random}, Corollary 1.5]
	\label{lem:tv_mixing}
	In the setting of Definition~\ref{def:conductance}, let $\pi_t$ be the distribution after $t$ steps of the Markov chain. If the starting distribution $\pi_0$ is $\beta$-warm with respect to $\pi$
	$$\tvd{\pi_t}{\pi} \leq \sqrt{\beta}\Par{1- \frac {\Phi^2} 2}^t.$$
\end{lemma}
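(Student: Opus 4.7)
The plan is to follow the classical Lovász--Simonovits ``curve'' argument, which tracks not just the total variation distance but a stronger concave upper envelope that contracts geometrically per step. Define, for $u \in [0,1]$,
$$g_t(u) \;\defeq\; \sup\Brace{\pi_t(A) \,:\, A \subseteq \xset,\; \pi(A) = u}.$$
Then $\tvd{\pi_t}{\pi} = \sup_A (\pi_t(A) - \pi(A)) = \sup_{u}(g_t(u) - u)$, so it suffices to control how $g_t - \mathrm{id}$ shrinks with $t$.

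First I would record the structural properties of $g_t$: it is non-decreasing, satisfies $g_t(0) = 0$ and $g_t(1) = 1$, and is concave on $[0,1]$. Concavity follows from a standard rearrangement: the supremum defining $g_t(u)$ is attained on super-level sets of the density $d\pi_t/d\pi$, and the layer-cake structure of such sets makes $g_t$ the concave envelope of a measure-theoretic profile. Next I would translate the $\beta$-warm start into an initial bound on $g_0$. Pointwise $d\pi_0/d\pi \le \beta$ gives $g_0(u) \le \beta u$, and combining this with the symmetric bound $g_0(u) \le 1 - (1-u)/\beta$ yields (after a short computation) $g_0(u) - u \le \sqrt{\beta}\sqrt{\min(u,1-u)}$ up to constants. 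So initially the quantity $C_0 \defeq \sup_u (g_t(u)-u)/\sqrt{\min(u,1-u)}$ is at most $O(\sqrt{\beta})$.

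The crux is a one-step recursion showing that conductance forces $g_t$ to flatten. Let $A$ be a set with $\pi(A) = u$, and let $v = \min(u, 1-u)$. The definition of conductance combined with reversibility says that the mass flowing out of $A$ in one Markov step is at least $\Phi v$, and by reversibility the same mass flows in. Writing $\pi_{t+1}(A)$ in terms of $\pi_t$ via the transition kernel and pairing outgoing mass from $A$ with $\pi_t$-mass on a set of measure $u - \Phi v$ (and incoming mass with $\pi_t$-mass on a set of measure $u + \Phi v$), one obtains
$$g_{t+1}(u) \;\le\; \tfrac{1}{2}\bigl(g_t(u - \Phi v) + g_t(u + \Phi v)\bigr).$$
This is where reversibility (guaranteed for Algorithm~\ref{alg:alternatesample} by Lemma~\ref{lem:alternatesample}) is essential, since it symmetrizes the in/out flow and lets us invoke both points $u \pm \Phi v$ in the same bound.

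Finally I would iterate: if $g_t(u) - u \le C_t \sqrt{\min(u,1-u)}$ for all $u$, then combining the recursion with concavity of $\sqrt{\min(\cdot,1-\cdot)}$ and a direct second-order Taylor calculation (the one-step average differs from the midpoint value by roughly $\tfrac{1}{2}\Phi^2 v^2$ times the second derivative, which contributes the factor $\tfrac{\Phi^2}{2}$) gives
$$C_{t+1} \;\le\; \Par{1 - \tfrac{\Phi^2}{2}} C_t.$$
Iterating $t$ times and using $\tvd{\pi_t}{\pi} \le \sup_u (g_t(u) - u) \le C_t \cdot \tfrac{1}{2}$, together with $C_0 \lesssim \sqrt{\beta}$, yields the claimed bound. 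The main obstacle is the one-step recursion: converting the set-level conductance inequality into a statement about a horizontal shift of $g_t$ by $\Phi v$ in its argument requires some care in pairing pre-images under the kernel, and is the only place reversibility is used in an essential way. The rest is bookkeeping for the concave curve $g_t$.
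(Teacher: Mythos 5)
The paper does not prove this lemma at all: it is cited verbatim as Corollary 1.5 of \cite{lovasz1993random}, so there is no proof to compare against. Your proposal reconstructs the standard Lov\'asz--Simonovits ``curve'' argument, which is indeed the underlying proof, and the overall plan (track $g_t(u) = \sup\{\pi_t(A) : \pi(A) = u\}$, show concavity, bound $g_0$ from warmness, establish a one-step shift recursion from conductance, and iterate against the profile $\sqrt{\min(u,1-u)}$) is the right one. Two points need correcting, one cosmetic and one substantive.

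The cosmetic issue: you invoke a ``symmetric bound $g_0(u) \le 1 - (1-u)/\beta$,'' but warmness as defined in this paper is one-sided ($\mathrm{d}\pi_0 / \mathrm{d}\pi \le \beta$), so you only get $g_0(u) \le \min(\beta u, 1)$; the lower bound $\mathrm{d}\pi_0/\mathrm{d}\pi \ge 1/\beta$ is not available. This does not break the argument, because $g_0(u) - u \le \sqrt{\beta}\sqrt{\min(u,1-u)}$ already follows from $g_0(u) \le \min(\beta u,1)$ by splitting on $u \lessgtr 1/\beta$ when $u\le \half$ and using the trivial $g_0(u)\le 1$ when $u > \half$.

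The substantive issue is the one-step recursion $g_{t+1}(u) \le \half\bigl(g_t(u - \Phi v) + g_t(u + \Phi v)\bigr)$. You attribute this to reversibility, but reversibility alone is not enough: the Lov\'asz--Simonovits lemma in this form requires laziness (or, equivalently, that the transition operator be positive semidefinite). Without that, the symmetrized in/out flow pairing does not close, and the recursion fails. In the present application this hypothesis is actually satisfied for a structural reason you did not state: the $\{x_k\}$ chain from Algorithm~\ref{alg:alternatesample} has kernel $K = P_{x\to y}\,P_{y\to x} = P_{x\to y}\,P_{x\to y}^{*}$ (a two-step Gibbs sweep), which is automatically positive semidefinite, and this plays exactly the role laziness plays in the original argument. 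To make the proposal a proof you would need to state and use this positivity explicitly in the one-step lemma rather than just citing reversibility.
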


Leveraging Lemmas~\ref{lem:conductance_lower} and~\ref{lem:tv_mixing}, we prove the following mixing time bound.

\begin{proposition}
	\label{prop:mixing_alternate}
Assume the input $x_0$ to Algorithm~\ref{alg:alternatesample} is drawn from a $\beta$-warm distribution with respect to $\pi$, $\eta \mu \le 1$, and $T = \Omega(\frac 1 {\eta^2\mu^2} \log \frac \beta \delta)$ for a sufficiently large constant. Then the output of Algorithm~\ref{alg:alternatesample} has total variation distance to $\pi$ bounded by $\delta$.
\end{proposition}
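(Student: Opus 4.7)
The plan is to apply the standard conductance-based mixing framework to the $x$-marginal Markov chain $\{x_k\}_{0\le k\le T}$, which by Lemma~\ref{lem:alternatesample} is reversible on $\xset$ with stationary distribution $\pi$. Given a conductance lower bound $\Phi$, Lemma~\ref{lem:tv_mixing} together with the $\beta$-warm start hypothesis yields
\[
\tvd{\pi_T}{\pi} \le \sqrt{\beta}\left(1 - \tfrac{\Phi^2}{2}\right)^T \le \sqrt{\beta}\exp\left(-\tfrac{\Phi^2 T}{2}\right),
\]
where the exponential bound uses $\Phi^2/2 \le 1$ (which is automatic since $\Phi\le 1$). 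Hence it suffices to show $\Phi = \Omega(\sqrt{\eta\mu})$, since then the stated $T = \Omega(\frac{1}{\eta\mu}\log\frac{\beta}{\delta})$ drives the right-hand side below $\delta$.

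I will lower bound $\Phi$ by applying Lemma~\ref{lem:conductance_lower} with the Riemannian metric $d_\psi$ induced by the LLT $\psi$; the key point is that this single metric governs both halves of the conductance argument. For the isoperimetric half, note that on $\xset$ one has $-\log\pi = F + \eta\mu\psi$ (up to an additive constant, and absorbing the indicator of $\xset$ into $F$ by extension to $+\infty$, which preserves convexity). Since $F$ is convex and $\psi$ is convex and self-concordant by Lemma~\ref{lem:psi_sc}, the potential $-\log\pi$ is $\eta\mu$-relatively strongly convex in $\psi$. Invoking Lemma~\ref{lem:iso} with $\phi = \psi$ and $m = \eta\mu$ gives, for every partition $(S_1, S_2, S_3)$ of $\xset$,
\[
\pi(S_3) = \Omega\bigl(\sqrt{\eta\mu}\, d_\psi(S_1, S_2)\bigr) \min\{\pi(S_1), \pi(S_2)\},
\]
i.e.\ $C_{\textup{iso}} = \Omega(\sqrt{\eta\mu})$. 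For the overlap half, the transition kernel $\T_x$ factors as first sampling $y \sim \Npsi_x = \pi_x$ and then $x' \sim \pi_y$, where the second stage depends only on $y$. The data processing inequality for total variation therefore gives
\[
\tvd{\T_x}{\T_{x'}} \le \tvd{\Npsi_x}{\Npsi_{x'}} \le \tfrac12
\]
whenever $d_\psi(x, x') \le \tfrac14$ by Lemma~\ref{lem:tvdist}. Taking $\Delta = \tfrac14$, Lemma~\ref{lem:conductance_lower} delivers $\Phi = \Omega(\Delta\, C_{\textup{iso}}) = \Omega(\sqrt{\eta\mu})$, as required.

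The main conceptual content of the argument lies in the alignment of metrics: the LLT theory of Section~\ref{sec:llt} was built so that self-concordance of $\psi$ powers the isoperimetric inequality for $\pi$ in $d_\psi$ (Lemma~\ref{lem:iso}), while Hessian stability under $d_\psi$ controls the overlap of adjacent induced measures $\Npsi_x$ (Lemma~\ref{lem:tvdist}). The only minor technicality is extending Lemma~\ref{lem:iso}, stated over $\R^d$, to the constrained density on $\xset$; this is handled by extending $F$ to $+\infty$ off $\xset$ so that the integrals appearing in the lemma automatically localize to the support of $\pi$. With this in hand, all three displayed bounds combine mechanically to yield the claim.
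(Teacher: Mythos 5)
Your proof is correct and follows essentially the same route as the paper: you lower bound the conductance via Lemma~\ref{lem:conductance_lower}, supplying the overlap bound from Lemma~\ref{lem:tvdist} (your data-processing argument and the paper's optimal-coupling argument are equivalent ways to deduce $\tvd{\T_x}{\T_{x'}} \le \tvd{\Npsi_x}{\Npsi_{x'}}$) and the isoperimetric bound from Lemma~\ref{lem:iso}, then close with Lemma~\ref{lem:tv_mixing}. Your remark about absorbing $\1_\xset$ by extending $F$ to $+\infty$ is a clean way to handle a technicality the paper leaves implicit.
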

\begin{proof}
	Following the optimal coupling characterization of total variation, whenever the optimal coupling of $y \sim \Npsi_x$ and $y' \sim \Npsi_{x'}$ sets $y = y'$ in Line~\ref{line:ysample} of Algorithm~\ref{alg:alternatesample}, we can couple the resulting distributions in Line~\ref{line:xsample} as well. This shows that  $\tvd{\T_x }{\T_{x'}} \leq \|\Npsi_x - \Npsi_{x'}\|_{\textup{TV}} $. By Lemma~\ref{lem:psi_sc}, since $\vhi$ is convex, $\psi$ is a self-concordant function. Then, combined with Lemma~\ref{lem:tvdist}, for any $d_{\psi}(x,x')\leq \frac 1 4$, 
	\begin{equation*}
		\tvd{\T_x }{\T_{x'}} \leq \tvd{\Npsi_x}{\Npsi_{x'}} \leq  \frac 1 2.
	\end{equation*}
	By Lemma \ref{lem:iso},  since  $F+\eta \mu \psi$ is $ {\eta \mu}$-relatively strongly convex in $\psi$, $\pi$ satisfies the isoperimetric inequality such that for any partition $S_1, S_2, S_3$ of $\R^d$,
	\begin{align*}
		\pi(S_3) = \Omega(\eta\mu) \Par{\min_{x\in S_1, y\in S_2}d_{\psi}(x,y)} \min \Brace{\pi(S_1), \pi(S_2)}.
	\end{align*}
	By Lemma~\ref{lem:conductance_lower}, we can then lower bound the conductance by $\Phi = \Omega (\eta\mu)$. Choosing a sufficiently large constant in $T$, we conclude by Lemma~\ref{lem:tv_mixing} the desired
	$	\tvd{\pi_T}{\pi} \leq \sqrt{\beta} \exp (-\frac{T\Phi^2}{2}) \leq \delta.$
\end{proof}

By combining Proposition~\ref{prop:guarantee_of_inner} with Proposition~\ref{prop:mixing_alternate}, we can now complete our analysis.

\begin{restatable}{theorem}{restatesample}\label{thm:mainllt}
In the setting of Problem~\ref{prob:mainsetting}, let $\eta \mu \le 1$ and assume $x_0$ has a $\beta$-warm distribution with respect to $\pi$ defined in \eqref{eq:maindensity}. Further for sufficiently large constants suppose $\frac 1 \eta = \Omega(G^2\log \frac {\log\beta}{\delta\eta\mu})$ and
\[T = \Theta\Par{\frac 1 {\eta^2\mu^2} \log \frac {\beta}{\delta}}.\]
Algorithm~\ref{alg:alternatesample} using Algorithm~\ref{alg:inner} with error parameter $\frac \delta{2T}$ to implement Line~\ref{line:xsample} returns a point with $\delta$ total variation distance to $\pi$, querying $O(T)$ random $f_i$ in expectation.
\end{restatable}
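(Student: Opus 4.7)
My plan is to combine the exact-sampler mixing bound of Proposition~\ref{prop:mixing_alternate} with the per-step approximation guarantee of Proposition~\ref{prop:guarantee_of_inner}, paying a linear accumulation of total variation error over the $T$ outer iterations of Algorithm~\ref{alg:alternatesample}. Set $\delta' \defeq \frac{\delta}{2T}$, and consider two chains both initialized at the same $x_0$ drawn from the $\beta$-warm distribution: the \emph{exact} scheme, in which Line~\ref{line:xsample} draws $x_k \sim \pi_{y_k}$ exactly, and the \emph{actual} scheme, which replaces this by a call to $\Inner(y_k, \delta', \xset, F, \vhi, \mu)$. Write $\nu$ and $\tilde\nu$ for the respective laws of $x_T$.

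First I verify that Algorithm~\ref{alg:inner} is applicable at accuracy $\delta'$. Its precondition is $\frac{1}{\eta} \ge 10^4 G^2 \log \frac{1}{\delta'}$; substituting $T = \Theta(\frac{1}{\eta\mu}\log\frac{\beta}{\delta})$ gives
\[
\log \frac{1}{\delta'} = \log \frac{2T}{\delta} = O\Par{\log \frac{\log(\beta/\delta)}{\eta\mu\delta}},
\]
which is absorbed by the hypothesis $\frac{1}{\eta} = \Omega(G^2 \log \frac{\log\beta}{\delta\eta\mu})$ for suitably chosen hidden constants. Hence Proposition~\ref{prop:guarantee_of_inner} applies at every inner call.

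Next I couple the two chains to bound $\tvd{\nu}{\tilde\nu}$. Initialize both at the same $x_0$ and, at each iteration $k \in [T]$, use a common draw $y_k \sim \pi_{x_{k-1}}$ in Line~\ref{line:ysample} together with the optimal coupling between $\pi_{y_k}$ and the output distribution of $\Inner(y_k, \delta', \ldots)$ in Line~\ref{line:xsample}. By Proposition~\ref{prop:guarantee_of_inner}, conditional on the chains having agreed up to step $k-1$, they also agree at step $k$ with probability at least $1 - \delta'$. Letting $\tau$ be the first round of disagreement, a union bound yields $\Pr[\tau \le T] \le T\delta' = \frac{\delta}{2}$, so $\tvd{\nu}{\tilde\nu} \le \frac{\delta}{2}$. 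Meanwhile, Proposition~\ref{prop:mixing_alternate} applied at target accuracy $\frac{\delta}{2}$ (which only affects the hidden constant in $T$) yields $\tvd{\nu}{\pi} \le \frac{\delta}{2}$, and the triangle inequality gives $\tvd{\tilde\nu}{\pi} \le \delta$.

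For the query complexity, linearity of expectation together with the $O(1)$ expected $f_i$-queries per call to $\Inner$ from Proposition~\ref{prop:guarantee_of_inner} yields $O(T)$ expected queries overall. The only mildly delicate point is verifying that the precondition on $\eta$ continues to hold at the tightened accuracy $\delta'$; everything else is a routine combination of the two propositions with the standard TV-coupling argument.
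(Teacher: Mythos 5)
Your proof is correct and follows essentially the same route as the paper's: apply Proposition~\ref{prop:mixing_alternate} to the idealized chain to get $\delta/2$ error, accumulate the per-call error from Proposition~\ref{prop:guarantee_of_inner} over $T$ iterations via a coupling/union bound to get another $\delta/2$, and combine by the triangle inequality, with the query count following from the $O(1)$ expected calls per inner loop. Your explicit verification that the hypothesis $\frac{1}{\eta}=\Omega(G^2\log\frac{\log\beta}{\delta\eta\mu})$ covers the precondition of Algorithm~\ref{alg:inner} at accuracy $\delta/(2T)$ is a welcome detail that the paper leaves implicit in the theorem statement.
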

\begin{proof}
Proposition~\ref{prop:mixing_alternate} guarantees that if each call to Line~\ref{line:xsample} of Algorithm~\ref{alg:alternatesample} is implemented exactly, we obtain $\frac \delta 2$ total variation to $\pi$. Further, the total variation error accumulated over $T$ calls to Algorithm~\ref{alg:inner} is less than $\frac \delta 2$ by a union bound on Proposition~\ref{prop:guarantee_of_inner}. Combining these bounds results in the desired total variation guarantee, and the complexity bound follows from Proposition~\ref{prop:guarantee_of_inner}.
\end{proof}

Theorem~\ref{thm:mainllt} is qualitatively analogous to previous mixing time results on the proximal sampler of \cite{lee2021structured}, e.g., Theorem 1, \cite{lee2021structured} or Theorem 4,  \cite{ChenCSW22}, with two major differences. 

First, Theorem~\ref{thm:mainllt} has a quadratically-worse dependence on the ``relative strong logconcavity parameter'' $\eta\mu$ than prior Euclidean specializations, giving a mixing time scaling as $\approx \frac 1 {\eta^2\mu^2}$ as opposed to the $\approx \frac 1 {\eta\mu}$ mixing times known in the Euclidean case. Second, because it uses the conductance machinery of Lemma~\ref{lem:tv_mixing} (which implicitly bounds $\chi^2$ decay to the stationary distribution), it scales logarithmically in the initial warmness parameter $\beta$. Stronger dependences (scaling as $\log\log\beta$) are known for the Euclidean setting, via bounding the log-Sobolev constant and associated relative entropy decay (see e.g., Theorem 1, \cite{lee2021structured} or Theorem 3, \cite{ChenCSW22} or Theorem 58, \cite{ChenE22}). We discuss both of these points in further depth as natural open problems in Section~\ref{sec:future}.

Finally, we note that given sample access to $\exp(-\eta\mu\psi(x))\1_{x \in \xset}$, a distribution which only depends on the choice of $\vhi$ and $\xset$ (and not the function $F$), we obtain $\beta \le \exp(GD)$ in Theorem~\ref{thm:mainllt}.

\begin{lemma}\label{lem:feasiblestart}
In the setting of Problem~\ref{prob:mainsetting}, the density $\nu$ satisfying 
\[\dd\nu(x) \propto \exp(-\eta\mu\psi(x))\1_{\xset}(x) \dd x\]
is $\exp(GD)$-warm for $\pi$ defined in \eqref{eq:maindensity}.
\end{lemma}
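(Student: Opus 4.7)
The plan is to directly compute the Radon-Nikodym derivative $\dd\nu/\dd\pi$ pointwise and bound it by leveraging the fact that $F$ is $G$-Lipschitz and $\xset$ has $\normx{\cdot}$-diameter at most $D$, so the range of $F$ on $\xset$ is at most $GD$. The key observation is that the $\eta\mu\psi(x)$ factor appears in both $\nu$ and $\pi$ and cancels, reducing the problem to an estimate involving only $F$.

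More concretely, let $Z_\nu \defeq \int_\xset \exp(-\eta\mu\psi(x)) \dd x$ and $Z_\pi \defeq \int_\xset \exp(-F(x) - \eta\mu\psi(x)) \dd x$ be the normalizing constants of $\nu$ and $\pi$ respectively. Then a direct calculation shows
\begin{equation*}
\frac{\dd\nu(x)}{\dd\pi(x)} = \frac{Z_\pi}{Z_\nu} \exp(F(x)) \quad \text{for every } x \in \xset.
\end{equation*}
The first step is to note that since each $f_i$ is $G$-Lipschitz in $\normx{\cdot}$, the expectation $F = \E_{i \sim \ind}[f_i]$ is also $G$-Lipschitz in $\normx{\cdot}$, so setting $F_{\min} \defeq \inf_{x \in \xset} F(x)$ and $F_{\max} \defeq \sup_{x \in \xset} F(x)$, the diameter bound on $\xset$ yields $F_{\max} - F_{\min} \le GD$.

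Next, I would upper bound $Z_\pi$ by $\exp(-F_{\min}) Z_\nu$, which follows from $\exp(-F(x)) \le \exp(-F_{\min})$ pointwise on $\xset$. Combining this with the pointwise bound $\exp(F(x)) \le \exp(F_{\max})$ gives
\begin{equation*}
\frac{\dd\nu(x)}{\dd\pi(x)} \le \exp(F_{\max} - F_{\min}) \le \exp(GD),
\end{equation*}
which is the desired warmness bound. There is no real obstacle here — all the work is hidden in the fact that the $\psi$-dependent factor cancels between numerator and denominator, so that the warmness is controlled purely by the oscillation of $F$ over $\xset$, which is in turn controlled by the Lipschitz-diameter product $GD$.
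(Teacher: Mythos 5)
Your proof is correct and takes essentially the same approach as the paper: both arguments reduce to the observation that $|F(x) - F(w)| \le GD$ for all $x, w \in \xset$ (via Lipschitzness and the diameter bound), combined with the cancellation of the $\eta\mu\psi$ factor in the density ratio. Your version, which explicitly separates the normalizing constant bound from the pointwise bound via $F_{\min}$ and $F_{\max}$, is a clean and equivalent rearrangement of the paper's single-integral computation.
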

\begin{proof}
Note that for all $x, w \in \xset$, $|F(x) - F(w)| \le GD$. Further recall $\pi \propto \exp(-F) \nu$. We conclude by observing that for all $x \in \xset$,
\begin{align*}
\frac{\exp\Par{-F(x)} \nu(x)}{\int_{\xset} \exp(-F(w))\nu(w)\dd w} \cdot \frac{\int_{\xset} \nu(w)\dd w}{\nu(x)} = \frac{\int_{\xset} \nu(w)\dd w}{\int_{\xset} \exp(F(x)-F(w))\nu(w)\dd w} \le \exp\Par{GD}.
\end{align*}
\end{proof} 	%

\section{Applications}
\label{sec:apps}

In this section, we discuss applications of the sampling scheme we develop in Section~\ref{sec:sampling}. We begin by specializing our machinery to $\ell_p$ and Schatten-$p$ norms in Section~\ref{ssec:lp}. We then give new algorithms with improved zeroth-order query complexity for private convex optimization in Section~\ref{ssec:privacy}. Finally, in Section~\ref{ssec:valueoracle} we discuss computational issues regarding the specific LLT we introduce.

\subsection{LLT for $\ell_p$ and Schatten-$p$ norms}\label{ssec:lp}

Throughout this section we fix some $p \in [1, 2]$, and define the dual value $q \ge 2$ such that $\frac 1 q + \frac 1 p = 1$. It is well-known that the $\ell_q$ norm and $\ell_p$ norm are dual, as are the corresponding Schatten norms. In light of Lemma~\ref{lem:scllt}, to obtain a sampler catering to the $\ell_p$ geometry for example, it suffices to take the LLT of a smooth function in $\ell_q$. We provide the latter by recalling the following fact.

\begin{fact}\label{fact:lqsquare}
Let $p \in [1, 2]$, $q \ge 2$ satisfy $\frac 1 p + \frac 1 q$. If $\norm{\cdot}_q$ is a vector $\ell_q$ norm, $\frac 1 2 \norm{\cdot}_q^2$ is $\frac 1 {p - 1}$-smooth in the $\ell_q$ norm, and if $\norm{\cdot}_q$ is a matrix Schatten-$q$ norm, $\frac 1 2 \norm{\cdot}_q^2$ is $\frac 1 {p - 1}$-smooth in the Schatten-$q$ norm.
\end{fact}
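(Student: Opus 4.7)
The plan is to reduce the claim to the classical strong convexity of $\frac{1}{2(p-1)}\|\cdot\|_p^2$ in $\ell_p$ (Ball--Carlen--Lieb) and its noncommutative (Schatten) extension, and then transfer it to smoothness of the dual via Fenchel duality. Since the statement is scale-linear, I would first observe that it is equivalent to showing that $\frac{p-1}{2}\|\cdot\|_q^2$ is $1$-smooth in $\|\cdot\|_q$. This is exactly the dual form of the statement that $\frac{1}{2(p-1)}\|\cdot\|_p^2$ is $1$-strongly convex in $\|\cdot\|_p$, by the strong convexity--smoothness duality for Fenchel conjugates (Theorem 6 of \cite{KakadeST09}), together with the standard computation $\bigl(\tfrac{1}{2(p-1)}\|\cdot\|_p^2\bigr)^* = \tfrac{p-1}{2}\|\cdot\|_q^2$.

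For the vector case, I would then just cite the Ball--Carlen--Lieb inequality, which gives the $1$-strong convexity of $\frac{1}{2(p-1)}\|x\|_p^2$ in the $\ell_p$ norm for $p\in(1,2]$, with the boundary case $p=1$ handled by a limiting argument (or by restricting to the regime $p>1$; in the application, $p=1$ is approached by taking $p$ slightly larger). The sharp constant $p-1$ is exactly what makes the duality produce the smoothness factor $\frac{1}{p-1}$ claimed in the fact.

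For the Schatten case, I would appeal to the noncommutative analog: the function $X \mapsto \frac{1}{2(p-1)}\|X\|_p^2$ on the space of matrices equipped with the Schatten-$p$ norm is $1$-strongly convex. This is due to Tomczak-Jaegermann and Ball--Carlen--Lieb; it relies on noncommutative two-uniform convexity inequalities that are genuinely more subtle than their commutative counterparts, but the constant matches the scalar case. Once this is in hand, the same Fenchel-duality step transfers it to $\frac{1}{p-1}$-smoothness of $\frac{1}{2}\|\cdot\|_q^2$ in the Schatten-$q$ norm, using that the Schatten-$p$ and Schatten-$q$ norms are dual (via the trace pairing).

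The main potential obstacle is simply the invocation of the Schatten-$p$ strong convexity, which is a nontrivial classical theorem; modulo that citation, the proof is a one-line duality argument. Since the fact is standard, I would not attempt to reprove Ball--Carlen--Lieb or its noncommutative extension here, and instead present the proof as a short citation-based reduction.
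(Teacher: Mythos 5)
Your proposal is correct and follows essentially the same route as the paper: invoke the $(p-1)$-strong convexity of $\tfrac12\|\cdot\|_p^2$ in $\ell_p$ and Schatten-$p$ from Ball--Carlen--Lieb, combine it with the Fenchel-conjugate strong-convexity/smoothness duality of Kakade--Shalev-Shwartz--Tewari, and use that $\tfrac12\|\cdot\|_p^2$ and $\tfrac12\|\cdot\|_q^2$ are conjugates. The paper cites precisely the same two references and the same three ingredients, so there is nothing substantive to add.
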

\begin{proof}
This follows (for example) from three well-known facts: 1) that $\half \norm{\cdot}_q^2$ and $\half \norm{\cdot}_p^2$ are conjugate functions in both the vector and matrix cases, 2) that the conjugate of a $m$-strongly convex function in a norm is $\frac 1 m$-smooth in the dual norm \cite{KakadeST09}, and 3) that $\half \norm{\cdot}_p^2$ is $(p - 1)$-strongly convex in $\norm{\cdot}_p$ in both the vector and matrix cases \cite{BallCL94}.
\end{proof}

\paragraph{$\ell_p$ norms.} Next, for any $a > 0$, when the context is clearly about vector spaces, we define
\begin{equation}\label{eq:llt_lp}\psi_{p, a}(x) \defeq \log\Par{\int \exp\Par{\inprod{x}{y} - a\norm{y}_q^2}\dd y}. \end{equation}
Note that as the LLT of a $\frac {2a}{p - 1}$-smooth function in  $\ell_q$, $\psi_{p, a}$ is $\Omega(\frac {p - 1} a)$-strongly convex in $\ell_p$ by Lemma~\ref{lem:scllt}. In applications we fix a value of $\eta > 0$, set $a = \Theta((p - 1)\eta)$, and use $\eta \psi_{p, a}$ as our strongly convex regularizer in $\ell_p$. We next provide a bound on the range of $\psi_{p, a}$.

\begin{lemma}\label{lem:rangellt}
Let $a > 0$ and let $d \in \N$ be at least a sufficiently large constant. The additive range of $\psi_{p, a}$ over $\{x \in \R^d \mid \norm{x}_p \le 1\}$ is 
\[O\Par{1 + \frac 1 a + \sqrt{\frac d a \log\Par{a+ \frac d a}}}.\]
In particular, for $a \le \frac 1 {d\log d}$, the additive range is $O(\frac 1 a)$.
\end{lemma}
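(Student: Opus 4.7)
The plan is to reduce the additive-range bound to a one-dimensional moment generating function and control it via an AM-GM estimate combined with the explicit Gamma MGF.

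First I would locate the minimum of $\psi_{p,a}$. By Lemma~\ref{lem:cumulant}, $\nabla \psi_{p,a}(0) = \E_{y \sim \Npsi_0}[y]$, which vanishes by symmetry of $\Npsi_0 \propto \exp(-a\norm{y}_q^2)$ about the origin. Since $\psi_{p,a}$ is convex (every LLT is), $0$ is the global minimum, and $0$ lies in the unit $\ell_p$ ball, so the additive range equals $\max_{\norm{x}_p \le 1}\Brack{\psi_{p,a}(x) - \psi_{p,a}(0)}$. A direct calculation from the definition of $\psi_{p,a}$ gives $\psi_{p,a}(x) - \psi_{p,a}(0) = \log \E_{y \sim \Npsi_0}\Brack{\exp(\inprod{x}{y})}$. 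Applying H\"older's inequality on the exponent with $\norm{x}_p \le 1$ yields $\inprod{x}{y} \le \norm{y}_q$, which reduces the task to bounding $\log \E_{y \sim \Npsi_0}\Brack{\exp(\norm{y}_q)}$.

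Next I would perform a radial reduction. Since $\norm{\cdot}_q$ is $1$-homogeneous, writing $y = r u$ with $\norm{u}_q = 1$ and $r \ge 0$ gives $\dd y = r^{d-1} \dd r \cdot \dd \sigma(u)$ for an appropriate surface measure on the $\ell_q$ unit sphere, so the pushforward of $\Npsi_0$ under $y \mapsto \norm{y}_q$ has density proportional to $r^{d-1}\exp(-ar^2)$ on $[0, \infty)$. Equivalently, $G \defeq \norm{y}_q^2$ follows a $\textup{Gamma}(d/2, a)$ law (shape $d/2$, rate $a$), and the problem becomes bounding $\log \E\Brack{\exp(\sqrt G)}$. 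For any $\mu > \frac{1}{2a}$, I would apply the AM-GM inequality $\sqrt G \le \frac{\mu}{2} + \frac{G}{2\mu}$ and the explicit Gamma MGF $\E[\exp(tG)] = (1 - t/a)^{-d/2}$ for $t<a$ to obtain
\[
\log \E\Brack{\exp(\sqrt G)} \le \frac{\mu}{2} - \frac{d}{2}\log\Par{1 - \frac{1}{2\mu a}}.
\]
Optimizing $\mu$ completes the bound: for $ad \ge 1$ I would take $\mu = \sqrt{d/a}$ (so $2\mu a = 2\sqrt{ad}\ge 2$, and the $\log$ term is at most $\tfrac{1}{2}\sqrt{d/a}$), yielding an $O(\sqrt{d/a})$ estimate; for $ad \le 1$ I would take $\mu = 1/a$ (so $2\mu a = 2$, and the $\log$ term equals $\tfrac{d}{2}\log 2$), yielding $\tfrac{1}{2a} + \tfrac{d}{2}\log 2 = O(1/a)$ since $d \le 1/a$ in this regime. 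Both cases are subsumed by the stated $O(1 + 1/a + \sqrt{(d/a)\log(a+d/a)})$ expression, with the logarithmic factor providing slack to unify the two regimes.

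The second assertion follows immediately: when $a \le 1/(d\log d)$, we have $1/a \ge d \log d$ while $\sqrt{(d/a)\log(d/a)} \lesssim d \log d \le 1/a$, so the $1/a$ term dominates. The main obstacle is executing the optimization cleanly and verifying that the AM-GM bound matches the stated form across the transition $ad \approx 1$; a sharper bound without the logarithmic factor is available via direct Laplace analysis of $\int_0^\infty r^{d-1}\exp(-ar^2 + r)\dd r$ against its unperturbed counterpart, but the AM-GM route is both cleaner and sufficient for the claimed estimate.
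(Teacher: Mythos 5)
Your proposal is correct and takes a genuinely different route from the paper's after the common initial reduction. Both arguments start the same way: $\nabla\psi_{p,a}(0)=0$ by symmetry, the identity $\psi_{p,a}(x)-\psi_{p,a}(0)=\log\E_{y\sim\Npsi_0}[\exp(\inprod{x}{y})]$, H\"older to replace $\inprod{x}{y}$ by $\norm{y}_q$, and the radial observation that $r=\norm{y}_q$ has density $\propto r^{d-1}\exp(-ar^2)$. Where the paper then invokes Lemma~\ref{lem:annoyingintegral}, which estimates $\int_0^\infty r^{d-1}\exp(r-ar^2)\dd r$ directly by splitting at a threshold $\tau$ and applying Mill's inequality plus $\Gamma$-function bounds, you instead recognize $G\defeq\norm{y}_q^2\sim\textup{Gamma}(d/2,a)$, apply AM-GM $\sqrt G\le \tfrac{\mu}{2}+\tfrac{G}{2\mu}$, and use the closed-form Gamma MGF. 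This is a cleaner calculation, and it in fact yields the sharper bound $O(\tfrac 1 a+\sqrt{d/a})$, with no logarithmic factor. Your bound subsumes the one in the statement because $a+d/a\ge 2\sqrt d$ gives $\log(a+d/a)\ge 1$ for $d$ at least a sufficiently large constant.

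One small imprecision: your justification of the ``in particular'' clause states ``$\sqrt{(d/a)\log(d/a)}\lesssim d\log d$,'' but this inequality is not literally true once $a$ is much smaller than $\tfrac{1}{d\log d}$ (the left side grows like $\sqrt{1/a}$ while the right is fixed). The conclusion still holds because the $1/a$ term grows faster; with the paper's bound one needs to verify $ad\log(d/a)=O(1)$ in this regime, whereas with your sharper bound it is immediate from $\sqrt{d/a}\le 1/a\Leftrightarrow ad\le 1$, and $ad\le 1/\log d<1$ when $a\le\tfrac{1}{d\log d}$. I would recommend stating it that way.
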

\begin{proof}
Throughout the proof denote for simplicity $\psi \defeq \psi_{p, a}$ and let 
\[\Npsi_x(y) \propto \exp\Par{\inprod{x}{y} - a\norm{y}_q^2}\]
be the associated density. By the characterization of $\nabla \psi$ in Lemma~\ref{lem:cumulant} and the fact that the associated density $\Npsi_x$ is symmetric in $y$ for $x = 0$, we have $\nabla \psi(0) = 0$ and hence it suffices to bound $\psi(x) - \psi(0)$ for $\norm{x}_q \le 1$. We simplify this expression as
\begin{equation}\label{eq:psirange}
\begin{aligned}
\psi(x) - \psi(0) &= \log\Par{\int \exp\Par{\inprod{x}{y} - a \norm{y}_q^2}\dd y} - \log\Par{\int \exp\Par{- a \norm{y}_q^2}\dd y} \\
&= \log\Par{\int \exp\Par{\inprod{x}{y}} \frac{\exp\Par{ -a \norm{y}_q^2}}{\int \exp\Par{- a \norm{y}_q^2}\dd y} \dd y} = \log\Par{\E_{y \sim \Npsi_0}\Brack{\exp\Par{\inprod{x}{y}}} }.
\end{aligned}
\end{equation}
Next, let $\pi$ be the probability density on $\R_{\ge 0}$ such that
\begin{align*}
\dd \pi(r) \propto r^{d - 1} \exp\Par{-a r^2} \dd r.
\end{align*}
We note $\dd\pi(r)$ is the density of the scalar quantity $r = \norm{y}_q$ for $y \sim \Npsi_0$. This can be seen by taking a derivative of the volume of the $\ell_p$ ball of radius $r$, which scales as $r^d$, so the surface area of the ball scales as $r^{d - 1}$.
By H\"older's inequality, $\inprod{x}{y} \le \norm{y}_q$ for all $y$, since $\norm{x}_p \le 1$. We then continue \eqref{eq:psirange} and bound $\psi(x) - \psi(0) \le \log( \E_{r \sim \pi} \exp(r))$, and the conclusion follows from Lemma~\ref{lem:annoyingintegral}.
\end{proof}

\begin{lemma}\label{lem:annoyingintegral}
For any $a > 0$ and $d \in \N$ at least a sufficiently large constant,
\[\log\Par{\frac{\int_0^\infty \exp\Par{(d - 1)\log r + r - a r^2} \dd r}{\int_0^\infty \exp\Par{(d - 1)\log r - a r^2} \dd r}} \le 8 + \frac 8 a + \sqrt{\frac{8d}{a}\log\Par{a + \frac d a}}.\]
\end{lemma}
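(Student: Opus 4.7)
The plan is to rescale to extract a dimensionless log-MGF bound, then invoke Gaussian Lipschitz concentration. Substituting $y = r\sqrt a$ in both integrals cancels the common normalization $a^{-d/2}$ and rewrites the ratio as
\[\frac{\int_0^\infty y^{d-1} \exp(y/\sqrt a - y^2) \, \dd y}{\int_0^\infty y^{d-1}\exp(-y^2)\, \dd y} = \E_{Y \sim \tau}\Brack{\exp(Y/\sqrt a)},\]
where $\tau$ is the probability density on $[0,\infty)$ proportional to $y^{d-1}\exp(-y^2)$. It thus suffices to bound the log-MGF of $\tau$ at $t = 1/\sqrt a$.

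Next, I would identify $\tau$ with something familiar: the change of variable $Z = 2Y^2$ shows $Z \sim \chi^2_d$, equivalently $Y$ is equal in law to $\norm{G}_2/\sqrt 2$ for $G \sim \Nor(0, \id_d)$. This representation delivers two standard ingredients. First, a direct Gamma-function computation gives $\E_\tau[Y^2] = d/2$, so Jensen yields $\E_\tau[Y] \le \sqrt{d/2}$. Second, the map $G \mapsto \norm{G}_2/\sqrt 2$ is $(1/\sqrt 2)$-Lipschitz in the Euclidean norm on $\R^d$, so the standard Gaussian Lipschitz concentration (Borell's inequality) gives the sub-Gaussian MGF bound $\E[\exp(t(Y - \E_\tau Y))] \le \exp(t^2/4)$ for every $t \in \R$.

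Combining the two yields $\log \E_\tau[\exp(tY)] \le t\sqrt{d/2} + t^2/4$. Setting $t = 1/\sqrt a$ produces
\[\log(\text{ratio}) \le \sqrt{\frac d {2a}} + \frac 1 {4a},\]
which I would then verify is dominated by the stated right-hand side. The term $\frac 1 {4a} \le \frac 8 a$ is immediate, and $\sqrt{d/(2a)} \le \sqrt{(8d/a)\log(a + d/a)}$ reduces to $\log(a + d/a) \ge \tfrac 1 {16}$, which follows from AM--GM since $a + d/a \ge 2\sqrt d \ge 2 > e^{1/16}$ for $d \ge 1$.

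The only real obstacle is the conceptual leap of spotting the chi-square representation $2Y^2 \sim \chi^2_d$; once this link is made, both the mean bound and the sub-Gaussian tail are textbook. A purely calculus-based alternative would complete the square $r - ar^2 = -a(r - \tfrac 1 {2a})^2 + \tfrac 1 {4a}$ to turn the numerator into $\exp(\tfrac 1 {4a})\int_0^\infty y^{d-1}\exp(-a(y - \tfrac 1 {2a})^2)\,\dd y$ and compare this shifted Gaussian moment against the unshifted denominator, but handling the polynomial factor $y^{d-1}$ in the shift requires essentially the same concentration input, so no shortcut is gained.
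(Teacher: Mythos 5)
Your proof is correct and takes a genuinely different, cleaner route than the paper's. The paper argues directly by splitting the numerator integral at a carefully chosen threshold $\tau$: it bounds the low range by $e^\tau Z$ trivially, bounds the high range via Mill's inequality on a Gaussian tail, and then uses Stirling-type control of $\Gamma(d/2)$ to show the high-range piece is dominated, concluding $\log(\text{ratio}) \le \tau + 1$. You instead observe that the ratio is exactly $\E_{Y\sim\tau}[\exp(Y/\sqrt a)]$ for $\tau \propto y^{d-1}e^{-y^2}$, recognize via $2Y^2 \sim \chi^2_d$ that $Y \stackrel{d}{=} \|G\|_2/\sqrt 2$ for standard Gaussian $G$, and then combine the moment bound $\E Y \le \sqrt{d/2}$ with the sub-Gaussian MGF bound $\E\exp(t(Y-\E Y)) \le e^{t^2/4}$ from Gaussian Lipschitz concentration. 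This gets $\log(\text{ratio}) \le \sqrt{d/(2a)} + 1/(4a)$, which is \emph{strictly sharper} than the stated bound (no $\log(a+d/a)$ factor, and smaller constants), and the domination check $\log(a+d/a) \ge \log 2 > 1/16$ via AM--GM is correct. The concentration route buys you cleaner constants and a conceptually transparent argument in place of the paper's delicate threshold arithmetic; the paper's approach needs no probabilistic machinery beyond Mill's ratio. Both are sound.
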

\begin{proof}
Throughout this proof let
\[Z \defeq \int_0^\infty \exp\Par{(d - 1)\log r - \alpha r^2} \dd r = \frac{\Gamma(\frac d 2)}{2a^{\frac d 2}},\; \tau \defeq 7 + \frac 8 a + \sqrt{\frac {8d} a \log\Par{ a + \frac{d}{a}}}.\]
Next we split the numerator of the left-hand side into two integrals:
\begin{align*}
I_1 &\defeq \int_0^\tau \exp\Par{(d - 1)\log r + r -a r^2} \dd r,\\
I_2 &\defeq \int_\tau^\infty \exp\Par{(d - 1)\log r + r - a r^2} \dd r.
\end{align*}
It is immediate that $I_1 \le \exp(\tau) Z$. Further, we recognize that for $r \ge \tau$, 
\begin{align*}
\max\Par{r, (d - 1)\log r} \le \frac {a r^2} 4.
\end{align*}
The first piece in the maximum is clear from $\tau \ge \frac 4 a$. The second follows since $\frac{r^2}{\log r}$ is an increasing function for $r \ge 7$, and either $\frac{4d}{a} \le 10$ in which case we use $\frac{7^2}{\log 7} \ge 10$, or we let $C \defeq \frac{4d}{a}$ and use
\[\frac{r^2}{\log r} \ge C \text{ for } r \ge \sqrt{2C \log \frac C 4},\; C \ge 10.\]
Hence we may bound
\begin{align*}
I_2 \le \int_\tau^\infty \exp\Par{-\frac{a r^2}{2}} = \sqrt{\frac {2\pi} a} \Pr_{t\sim\Nor(0,  a^{-1})}[t \ge \tau] \le \frac{2}{a\tau} \exp\Par{-\frac {a \tau^2} 2}.
\end{align*}
Above, we used Mill's inequality
\[\Pr_{t \sim \Nor(0, \sigma^2)}\Brack{t \ge \tau} \le \sqrt{\frac 2 \pi} \frac \sigma \tau \exp\Par{-\frac{\tau^2}{2\sigma^2}}.\]
Further for our $\tau$, our upper bound on $I_1$ is larger than our upper bound on $I_2$. To see this, 
\begin{align*}
\tau\Par{1 + \frac {a\tau} 2} + \frac d 3 \log d \ge \frac d 2 \log a &\implies \exp\Par{\tau\Par{1 + \frac {a\tau} 2}}\Gamma\Par{\frac d 2} \ge a^{\frac d 2} \\
&\implies \frac{\exp\Par{\tau}\Gamma(\frac d 2)}{2a^{\frac d 2}} \ge \frac 4 {a\tau}\exp\Par{-\frac{a\tau^2}{2}}.
\end{align*}
The first inequality is because $a\tau^2 \ge d\log a$.
The first implication then follows by exponentiating and using $\log \Gamma(\frac d 2) \ge \frac d 3 \log d$ for sufficiently large $d$, and the second implication follows by rearranging and using $a\tau \ge 4$. Finally the conclusion follows from
\[\log\Par{\frac{\int_0^\infty \exp\Par{(d - 1)\log r + r - a r^2} \dd r}{\int_0^\infty \exp\Par{(d - 1)\log r - a r^2} \dd r}} \le \log\Par{\frac{2\exp(\tau)Z}{Z}} \le \tau + 1.\]
\end{proof}

\paragraph{Schatten-$p$ norms.} When the context is clearly about matrix spaces, we analogously define
\[\psi_{p, a}(\mx) \defeq \log\Par{\int \exp\Par{\inprod{\mx}{\my} - a \norm{\my}_q^2} \dd y}.\]
The proof of Lemma~\ref{lem:rangellt} implies the following analogous range bound in this setting.

\begin{corollary}\label{cor:rangelltmat}
Let $a > 0$ and let $d_1, d_2 \in \N$ be at least sufficiently large constants.  The additive range of $\psi_{p, a}$ over $\{\mx \in \R^{d_1 \times d_2} \mid \norm{\mx}_p \le 1\}$ is
\[O\Par{1 + \frac 1 a + \sqrt{\frac {d_1d_2} a \log\Par{a + \frac{d_1d_2}{a}}}}.\]
In particular, for $a \le \frac{1}{d_1d_2\log(d_1d_2)}$, the additive range is $O(\frac 1 a)$.
\end{corollary}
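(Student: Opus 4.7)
The plan is to adapt the proof of Lemma~\ref{lem:rangellt} essentially verbatim, identifying $\mathbb{R}^{d_1 \times d_2}$ with a $d_1 d_2$-dimensional Euclidean space under the trace inner product. The only two facts about the underlying geometry used in the vector proof were (i) H\"older's inequality in the dual pair $(\ell_p, \ell_q)$, and (ii) the fact that the surface area of an $\ell_q$ sphere of radius $r$ in $\R^d$ scales as $r^{d-1}$ by homogeneity of the norm. Both have direct Schatten analogues: H\"older's inequality for Schatten norms gives $\langle \mx, \my\rangle \le \|\mx\|_p \|\my\|_q$, and the Schatten-$q$ unit ball in $\R^{d_1 \times d_2}$ has volume invariant under scaling $\my \mapsto r\my$ up to the factor $r^{d_1 d_2}$, so its boundary has $(d_1 d_2 - 1)$-dimensional measure scaling as $r^{d_1 d_2 - 1}$.

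Concretely, I would first observe that by Lemma~\ref{lem:cumulant} and the symmetry $\my \mapsto -\my$ of the density $\exp(-a\|\my\|_q^2) d\my$, we have $\nabla\psi_{p,a}(0) = 0$, and $\psi_{p,a}$ is convex by Lemma~\ref{lem:psi_sc} (or directly from its cumulant-generating function structure). Thus it suffices to upper bound $\psi_{p,a}(\mx) - \psi_{p,a}(0)$ for $\|\mx\|_p \le 1$. Rewriting this difference as in \eqref{eq:psirange} yields
\begin{equation*}
\psi_{p,a}(\mx) - \psi_{p,a}(0) = \log\E_{\my \sim \Npsi_0}\Brack{\exp\Par{\inprod{\mx}{\my}}},
\end{equation*}
where $\Npsi_0$ is the density on $\R^{d_1 \times d_2}$ proportional to $\exp(-a\|\my\|_q^2)$. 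Applying Schatten H\"older gives $\inprod{\mx}{\my} \le \|\my\|_q$ whenever $\|\mx\|_p \le 1$, reducing the task to bounding $\log \E_{\my \sim \Npsi_0}[\exp(\|\my\|_q)]$.

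Next I would pass to the radial variable $r = \|\my\|_q$. By the homogeneity argument above, the pushforward of $\Npsi_0$ under $\my \mapsto \|\my\|_q$ is the density $\pi$ on $\R_{\ge 0}$ with $d\pi(r) \propto r^{d_1 d_2 - 1}\exp(-a r^2) dr$, exactly the form appearing in Lemma~\ref{lem:annoyingintegral} but with $d$ replaced by $d_1 d_2$. Invoking Lemma~\ref{lem:annoyingintegral} with dimension parameter $d_1 d_2$ then yields
\begin{equation*}
\log\E_{r \sim \pi}[\exp(r)] = O\Par{1 + \frac 1 a + \sqrt{\frac{d_1 d_2}{a} \log\Par{a + \frac{d_1 d_2}{a}}}},
\end{equation*}
which is exactly the claimed bound; the second statement follows from the same algebraic simplification as in Lemma~\ref{lem:rangellt}.

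The only step that warrants care is the radial-density calculation, i.e., verifying that the Schatten-$q$ ball's boundary behaves like a sphere of ``dimension'' $d_1 d_2 - 1$. This is pure positive homogeneity of $\|\cdot\|_q$: if $V(r) \defeq \Vol\{\my \in \R^{d_1\times d_2} : \|\my\|_q \le r\}$, then $V(r) = r^{d_1 d_2} V(1)$, and the coarea/polar-decomposition formula gives the pushforward density $\propto r^{d_1 d_2 - 1}\exp(-a r^2)$, after which the rest of the argument is a black-box application of Lemma~\ref{lem:annoyingintegral}. No new inequalities beyond Schatten H\"older and the homogeneity scaling are required, so I expect no substantive obstacle beyond these book-keeping steps.
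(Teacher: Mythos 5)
Your proposal is correct and follows exactly the approach the paper intends: the paper proves Corollary~\ref{cor:rangelltmat} simply by remarking that the proof of Lemma~\ref{lem:rangellt} carries over, with dimension $d$ replaced by $d_1 d_2$. You correctly identify the only two geometric ingredients that need checking — Schatten H\"older duality and the radial pushforward density $\propto r^{d_1 d_2-1}\exp(-ar^2)$ obtained from the positive homogeneity of $\|\cdot\|_q$ — after which Lemma~\ref{lem:annoyingintegral} applies as a black box.
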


\subsection{Zeroth-order private convex optimization}\label{ssec:privacy}

In this section, we consider a pair of closely-related problems in private convex optimization. Let $\sset$ be a domain, and let $n \in \N$. We say that a mechanism (randomized algorithm) $\mech: \sset^n \to \Omega$ satisfies $(\eps, \delta)$-differential privacy (DP) if for any event $S \subseteq \Omega$ where $\Omega$ is the output space, and any two datasets $\data, \data' \in \sset^n$ which differ in exactly one element,
\[\Pr[\mech(\data) \in S] \le \exp(\eps)\Pr[\mech(\data') \in S] + \delta.\]
We next define the private optimization problems we study.

\begin{problem}[DP-ERM and DP-SCO]\label{prob:dpco}
Let $n \in \N$, $\eps, \delta \in (0, 1)$, $D, G \ge 0$, and let $\xset \subset \R^d$ be compact and convex with diameter in a norm $\normx{\cdot}$ at most $D$. Let $\prob$ be a distribution over a set $\sset$ such that for any $s \in \sset$, there is a $f(\cdot; s): \xset \to \R$ which is convex and $G$-Lipschitz in $\normx{\cdot}$. Let $\data \defeq \{s_i\}_{i \in [n]}$ consist of $n$ independent draws from $\prob$, and let $f_i \defeq f(\cdot; s_i)$ for all $i \in [n]$. 

In the \emph{differentially private empirical risk minimization (DP-ERM)} problem, we receive $\data$ and wish to design a mechanism $\mech$ which satisfies $(\eps, \delta)$-DP and approximately minimizes
\[\Ferm(x) \defeq \frac 1 n \sum_{i \in [n]} f_i(x).\]
In the \emph{differentially private stochastic convex optimization (DP-SCO)} problem, we receive $\data$ and wish to design a mechanism $\mech$ which satisfies $(\eps, \delta)$-DP and approximately minimizes
\[\Fpop(x) \defeq \E_{s \sim \prob}\Brack{f(x; s)}.\]
\end{problem}

The following powerful general-purpose result was proven in \cite{GopiLLST23} reducing the DP-ERM and DP-SCO problems to logconcave sampling problems catered to the $\normx{\cdot}$ geometry. We slightly improve the parameter settings used by Theorem 4 of \cite{GopiLLST23} for DP-SCO by noting that a smaller value of $k$ also suffices (due to the larger error bound), as observed by \cite{GLL22}.

\begin{proposition}[Theorem 3, Theorem 4, \cite{GopiLLST23}, Theorem 6.9, \cite{GLL22}]\label{prop:gennormdp}
In the setting of Problem~\ref{prob:dpco}, let $k \ge 0$, and let $r: \xset \to \R$ be $1$-strongly convex with respect to $\normx{\cdot}$, with additive range at most $\Theta$. Let $\nu$ be the density on $\xset$ satisfying $\dd \nu(x) \propto \exp(-k(\Ferm(x) + \mu r(x))) \1_{\xset}(x) \dd x$. Then the algorithm which returns a sample from $\nu$ for
\[k =\frac{\sqrt d n\eps}{G\sqrt{2\Theta \log \frac 1 {2\delta}}},\;  \mu = \frac{2G^2k\log \frac 1 {2\delta}}{n^2\eps^2},\]
satisfies $(\eps, \delta)$-DP, and guarantees
\begin{align*}
\E_{x \sim \nu}\Brack{\Ferm(x)} - \min_{x \in \xset}\Ferm(x) \le O\Par{G\sqrt{\Theta} \cdot \frac{\sqrt{d \log \frac 1 \delta}}{n\eps}}.
\end{align*}
Further, the algorithm which returns a sample from $\nu$ for
\[k = \frac{1}{G\sqrt{\Theta}} \cdot \sqrt{\Par{\frac{d\log \frac 1 {2\delta}}{\eps^2 n^2} + \frac{1}{n}}} \cdot \min\Par{\frac{\eps^2 n^2}{\log \frac 1 {2\delta}}, nd},\; \mu = G^2 k \cdot \max\Par{\frac{\log \frac 1 {2\delta}}{n^2\eps^2}, \frac{1}{nd}}\]
satisfies $(\eps, \delta)$-DP, and guarantees
\[\E_{x \sim \nu}\Brack{\Fpop(x)} - \min_{x \in \xset}\Fpop(x) \le O\Par{G\sqrt{\Theta} \cdot \Par{\frac{\sqrt{d\log \frac 1 \delta}}{n\eps} + \frac 1 {\sqrt n}}}.\]
\end{proposition}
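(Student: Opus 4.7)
The proposition essentially restates results of \cite{GopiLLST23,GLL22}; the plan is to reconstruct the skeleton of their argument and isolate where the small improvement for DP-SCO enters. The mechanism is an exponential mechanism with a $\mu$-strongly convex regularizer $\mu r$ in $\normx{\cdot}$, so the proof decomposes into two largely independent parts: establishing $(\eps,\delta)$-differential privacy of the sampler, and bounding the expected excess risk of the sampled point under each of the two objectives.

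For privacy, I would fix neighboring datasets $\data,\data'$ differing in one sample, so that $\Ferm-\Ferm'$ is $\frac{2G}{n}$-Lipschitz in $\normx{\cdot}$, while both induced densities $\nu,\nu'$ are $k\mu$-strongly logconcave in $\normx{\cdot}$ because $k\mu r$ is $k\mu$-strongly convex. The key ingredient is a general-norm R\'enyi-shift inequality for a strongly logconcave measure perturbed by a Lipschitz potential, of the same flavor as the concentration bound Lemma~\ref{lem:sc_conc} and going back to \cite{Ledoux99,BobkovL00}, which yields for every $\alpha>1$
\[
D_\alpha(\nu\,\|\,\nu') \;\le\; O\!\Par{\frac{\alpha\,G^2\,k}{n^2\,\mu}}.
\]
Converting $\alpha$-R\'enyi DP to $(\eps,\delta)$-DP by optimizing over $\alpha$ then gives the stated relationship $\mu=\Theta\!\Par{G^2k\log(1/\delta)/(n^2\eps^2)}$, which is the privacy calibration used in both regimes.

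For the ERM excess risk, the template is a log-partition bound: for any density $\nu\propto\exp(-kH)$ with $H$ convex on $\xset$, a standard entropy / Jensen argument (integrating the heat flow, or directly using strong convexity and the differential entropy of a logconcave density) shows $\E_\nu[H]-\min_{\xset}H=O(d/k)$. Applying this to $H=\Ferm+\mu r$, then using $\min_{\xset}(\Ferm+\mu r)\le\min_{\xset}\Ferm+\mu\Theta$ and nonnegativity of $r$, gives $\E_\nu[\Ferm]-\min_{\xset}\Ferm=O(d/k+\mu\Theta)$. Substituting the privacy constraint linking $\mu$ to $k$ and optimizing over $k$ recovers $k\asymp \frac{n\eps}{G}\sqrt{d/(\Theta\log(1/\delta))}$ and the claimed excess risk $O\!\Par{G\sqrt{\Theta d\log(1/\delta)}/(n\eps)}$.

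For DP-SCO, I would add a generalization bound driven by the same R\'enyi-shift inequality: applied to single-sample substitutions, it implies that $\E_\nu[\ell]$ changes by $O(G^2 k/(n\mu))$ between $\data$ and $\data'$ for any $G$-Lipschitz $\ell$, so the uniform-stability framework gives $\E[\Fpop(\hx)-\Ferm(\hx)]=O(G^2k/(n\mu))$ for $\hx\sim\nu$. Summed with the ERM bound, the total error becomes $O(d/k+\mu\Theta+G^2k/(n\mu))$; because the SCO target $1/\sqrt n$ is looser than the ERM target $1/n$, $k$ may be taken smaller than in Theorem~4 of \cite{GopiLLST23} whenever the generalization term dominates, which is precisely the \cite{GLL22} observation producing the two-regime parameter choice stated. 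The genuinely technical step throughout is the non-Euclidean R\'enyi-shift inequality with the correct constants; once that is in hand, all the remaining work is routine parameter chasing.
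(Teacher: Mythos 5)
The paper does not prove Proposition~\ref{prop:gennormdp}; it is imported verbatim (modulo a minor parameter-tuning observation) from Theorems 3 and 4 of \cite{GopiLLST23} and Theorem 6.9 of \cite{GLL22}, so there is no ``paper proof'' to compare against. Your reconstruction of the underlying argument is structurally faithful to those sources: the privacy calculation via a R\'enyi-shift bound for a strongly logconcave measure perturbed by a Lipschitz potential, the ERM excess-risk bound via the log-partition estimate $\E_\nu[H]-\min_\xset H \le d/k$ for $\nu\propto\exp(-kH)$ (this is precisely Lemma~\ref{lem:sampling_error_d}, from \cite{DKL18}), and the additional stability/generalization term for SCO followed by optimizing $k$, are all the right ingredients.

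There is, however, a concrete technical error in the generalization step. You claim that the R\'enyi-shift inequality applied to single-sample substitutions gives a generalization gap of $O(G^2 k/(n\mu))$. That extra factor of $k$ is wrong, and with it the parameter optimization does not reproduce the stated SCO rate (your term $G^2 k/(n\mu)$ does not vanish at the claimed $k,\mu$). The correct bound is $O(G^2/(n\mu))$: writing both $\nu$ and $\nu'$ (for neighboring datasets) as $k\mu$-strongly logconcave densities whose log-densities differ by a $\tfrac{2Gk}{n}$-Lipschitz function, one gets $W_1(\nu,\nu') \le \tfrac{2Gk/n}{k\mu}=\tfrac{2G}{n\mu}$ — the $k$'s cancel — so a $G$-Lipschitz test function changes by at most $O(G^2/(n\mu))$ and uniform stability yields a generalization gap of the same order. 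Equivalently, one can route through the R\'enyi bound $D_\alpha(\nu\|\nu')=O(\alpha G^2 k/(n^2\mu))$, then Talagrand's transport inequality $W_2\le\sqrt{2D_{\mathrm{KL}}/(k\mu)}=O(G/(n\mu))$ — again the $k$ cancels. With the corrected term $O(d/k+\mu\Theta+G^2/(n\mu))$, the two-regime $k$ in the statement falls out as you describe, and your narrative for why \cite{GLL22}'s smaller $k$ suffices in the regime where the statistical $1/\sqrt n$ term dominates is otherwise on target.
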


Armed with Proposition~\ref{prop:gennormdp} and the sampler in Theorem~\ref{thm:mainllt}, we give our main results on Problem~\ref{prob:dpco}.

\begin{assumption}\label{assume:warmstart}
Fix $p \in [1, 2]$ and $k, a, \eta, \mu > 0$. In the setting of Problem~\ref{prob:dpco}, assume there is an algorithm $\alg$ which returns a point drawn from a $\beta$-warm start to the density $\nu$ satisfying
\[\dd\nu(x) \propto \exp\Par{-k\Par{\Ferm(x) + \eta\mu \psi_{p, a}(x)}}\1_{\xset}(x) \dd x.\]
\end{assumption}

\begin{restatable}{theorem}{restatedp}\label{thm:lp}
Let $p \in [1, 2]$, $\eps, \delta \in (0, 1)$. In the setting of Problem~\ref{prob:dpco} where $\normx{\cdot}$ is the $\ell_p$ norm on $\R^d$, there is an $(\eps, \delta)$-differentially private algorithm $\mech_{\textup{erm}}$ which produces $x \in \xset$ such that
\begin{align*}
\E_{\mech_{\textup{erm}}}\Brack{\Ferm(x)} - \min_{x \in \xset} \Ferm(x) &= O\Par{\frac{GD}{\sqrt{p - 1}} \cdot \frac{\sqrt{d \log \frac 1 \delta}}{n\eps}} \text{ for } p \in (1, 2], \\
\E_{\mech_{\textup{erm}}}\Brack{\Ferm(x)} - \min_{x \in \xset} \Ferm(x) &= O\Par{GD\sqrt{\log d} \cdot \frac{\sqrt{d \log \frac 1 \delta}}{n\eps}} \text{ for } p = 1.
\end{align*}
Further, there is an $(\eps, \delta)$-differentially private algorithm $\mech_{\textup{sco}}$ which produces $x \in \xset$ such that
\begin{align*}
	\E_{\mech_{\textup{sco}}}\Brack{\Fpop(x)} - \min_{x \in \xset} \Fpop(x) &= O\Par{\frac{GD}{\sqrt{p - 1}} \cdot \Par{\frac 1 {\sqrt n} + \frac{\sqrt{d \log \frac 1 \delta}}{n\eps}}} \text{ for } p \in (1, 2], \\
	\E_{\mech_{\textup{sco}}}\Brack{\Fpop(x)} - \min_{x \in \xset} \Fpop(x) &= O\Par{GD\sqrt{\log d} \cdot \Par{\frac 1 {\sqrt n} + \frac{\sqrt{d \log \frac 1 \delta}}{n\eps}}} \text{ for } p = 1.
\end{align*}
Both $\mech_{\textup{erm}}$ and $\mech_{\textup{sco}}$ call $\alg$ in Assumption~\ref{assume:warmstart}, appropriately parameterized, once. $\mech_{\textup{erm}}$ uses
\[O\Par{\Par{1 +\frac{n^2\eps^2}{\log \frac 1 \delta}}^2 \log^2\Par{\frac{(1 + n\eps)\log \beta}{\delta}} \log\frac \beta \delta}.\]
additional value queries to some $f(\cdot; s_i)$, and $\mech_{\textup{sco}}$ uses
\[O\Par{\min\Par{nd,\;1 +\frac{n^2\eps^2}{\log \frac 1 \delta}}^2 \log^2\Par{\frac{(1 + n\eps)\log \beta}{\delta}} \log\frac \beta \delta}\]
additional value queries to some $f(\cdot; s_i)$.
\end{restatable}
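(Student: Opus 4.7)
The strategy is to instantiate Proposition~\ref{prop:gennormdp} with the regularizer $r = \eta \psi_{p,a}$, then implement the required Gibbs sampler by feeding this choice into Theorem~\ref{thm:mainllt}. First, by Fact~\ref{fact:lqsquare}, $a \norm{\cdot}_q^2$ is $\frac{2a}{p-1}$-smooth in $\ell_q$, so Lemma~\ref{lem:scllt} gives that $\psi_{p,a}$ is $\frac{p-1}{2a}$-strongly convex in $\ell_p$. Setting $a = \tfrac{(p-1)\eta}{2}$ thus makes $r$ exactly $1$-strongly convex, as required. To bound the additive range $\Theta$ of $r$ on $\xset$ (which has $\ell_p$-diameter $D$), rescale by $\hat{x} = x/D$: a direct computation gives $\psi_{p,a}(D\hat x) = \psi_{p, a/D^2}(\hat x) + \text{const}$, so the range of $r$ on $\xset$ equals $\eta$ times the range of $\psi_{p,a/D^2}$ on the unit $\ell_p$ ball. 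Lemma~\ref{lem:rangellt} then yields $\Theta = O(D^2/(p-1))$, provided $a/D^2 \le 1/(d\log d)$, i.e.\ $\eta \le O\!\left(\tfrac{1}{(p-1) d D^2 \log d}\right)$. Plugging $\Theta$ directly into the two risk bounds of Proposition~\ref{prop:gennormdp} produces the claimed ERM and SCO excess risks (for $p > 1$).

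For the oracle complexity, I would invoke Theorem~\ref{thm:mainllt} to sample from $\exp(-k F_{\textup{erm}} - k\eta\mu \psi_{p,a})\1_{\xset}$. In the notation of Problem~\ref{prob:mainsetting}, the Lipschitz constant of the scaled objective $\tilde F = k F_{\textup{erm}}$ is $kG$, and $\psi_{p,a}$ is the LLT of the $\eta$-smooth function $a \norm{\cdot}_q^2$ (since $\eta = \tfrac{2a}{p-1}$). Thus Theorem~\ref{thm:mainllt} runs in $T = O\!\left(\tfrac{1}{k\eta\mu} \log\tfrac{\beta}{\delta}\right)$ iterations, subject to the inner-loop constraint $\eta^{-1} = \Omega((kG)^2 \log(\cdot))$. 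Using $k\mu = d/\Theta = \Omega(d(p-1)/D^2)$ from the parameter choice of Proposition~\ref{prop:gennormdp}, we have $\tfrac{1}{k\eta\mu} = O\!\left(\tfrac{D^2}{\eta d(p-1)}\right)$, so we want $\eta$ as large as possible. Setting $\eta$ to saturate the inner-loop bound and substituting the Proposition~\ref{prop:gennormdp} formulas gives $T = O\!\left(\tfrac{n^2 \eps^2}{\log(1/\delta)} \cdot \log(\cdot) \log\tfrac{\beta}{\delta}\right)$ for DP-ERM; the additive $1$ in the stated bound arises when this constraint is slack (and the range constraint instead binds or $n\eps$ is small). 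For DP-SCO the analogous computation with the second parameter setting of Proposition~\ref{prop:gennormdp} produces $k\mu$ that saturates at $d/\Theta$ or scales with $1/(nd)$, propagating to the $\min(nd, 1 + n^2\eps^2/\log(1/\delta))$ factor.

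For the boundary case $p = 1$, I would apply the standard trick of working with $p' = 1 + 1/\log d$. For any $x \in \R^d$, $\norm{x}_1$ and $\norm{x}_{p'}$ agree up to a universal constant since $d^{1-1/p'} = d^{1/(1+\log d)} = O(1)$, so both the Lipschitz and diameter hypotheses transfer at constant cost, while the regularizer factor $1/(p'-1) = \log d$ produces the advertised $\sqrt{\log d}$ in the excess risk. Finally, the warm start required by Theorem~\ref{thm:mainllt} is exactly what Assumption~\ref{assume:warmstart} provides, and the $\log((1+n\eps)\log\beta/\delta)$ factor in the oracle complexity tracks the $\log(\log\beta/(\delta\eta\mu))$ term inside the smoothness constraint of Theorem~\ref{thm:mainllt}.

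\textbf{Main obstacle.} The bookkeeping in the second paragraph is the delicate step: we must pick $\eta$ to be the minimum of the range constraint $\eta \lesssim 1/((p-1)dD^2\log d)$ and the inner-loop constraint $\eta \lesssim 1/(k^2 G^2 \log(\cdot))$, and then verify that $T = O(1/(k\eta\mu) \log(\beta/\delta))$ simplifies to precisely the advertised form, including identifying which of the two constraints is binding in each $(n,\eps,d)$ regime. Once this is done, the remaining steps (range bound, strong convexity, warmness reduction) are routine applications of the already-established lemmas.
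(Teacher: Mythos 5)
Your approach follows the paper's proof closely: instantiate Proposition~\ref{prop:gennormdp} with $r = \eta\psi_{p,a}$, use Fact~\ref{fact:lqsquare} and Lemma~\ref{lem:scllt} for $1$-strong convexity, Lemma~\ref{lem:rangellt} for the range bound, Theorem~\ref{thm:mainllt} for the sampler complexity, and the $p' = 1 + 1/\log d$ substitution for the $p = 1$ boundary. However, there are two issues in the bookkeeping step you flag as delicate.

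First, there is a sign error in your range constraint. Your rescaling identity $\psi_{p,a}(D\hat{x}) = \psi_{p,a/D^2}(\hat{x}) + \text{const}$ is correct (substitute $y \gets z/D$ in the integral), and combined with Lemma~\ref{lem:rangellt} this requires $a/D^2 \lesssim 1/(d\log d)$. With $a = (p-1)\eta/2$, this gives $\eta \lesssim D^2/((p-1)d\log d)$, with $D^2$ in the \emph{numerator}; you wrote $\eta \lesssim 1/((p-1)dD^2\log d)$. If you plugged your (incorrect) constraint into $T = O\bigl(\tfrac{D^2}{\eta d(p-1)}\log\tfrac{\beta}{\delta}\bigr)$, you would get a $D^4$-dependent iteration count, which cannot be right for a scale-invariant problem. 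Second, the two-constraint case-split you envision is unnecessary: the paper first rescales so $D = 1$ and then reduces, WLOG, to the regime $(n\eps)^2 \geq d\Theta\log\tfrac{1}{\delta}$ (since otherwise the stated excess-risk bound exceeds $GD$, and a uniformly random point already achieves it with zero queries; this is also what produces the additive ``$+1$'' in the query complexity). Under this reduction, $kG = \Omega(d)$, so the inner-loop constraint forces $\eta = O(1/d^2)$, which automatically implies $a = (p-1)\eta/2 \leq 1/(d\log d)$ and makes the range constraint non-binding. This also makes $\eta k\mu \leq 1$ automatic, which is needed to apply Theorem~\ref{thm:mainllt} and which you did not check. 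With these corrections your plan goes through.
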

\begin{proof}
First, we slightly simplify the setting of Problem~\ref{prob:dpco}. We may first assume that $D = 1$, i.e.\ $\xset$ has diameter at most $1$ in $\normx{\cdot}$. If the diameter is bounded by some $D \neq 1$, we can rescale the domain $\xset \gets \frac 1 D \xset$, and remap to the modified functions $f(x; s) \gets f(Dx; s)$ over this modified domain for all $s \in \sset$. It is clear the Lipschitz constant rescales as $G \gets GD$ as a result. Next, we assume $(n\eps)^2 \ge d\Theta \log \frac 1 \delta$ where $\Theta = \min(\frac 1 {p - 1}, \log d)$. In the other case, in light of the diameter bound on $\xset$ and the Lipschitz assumption, returning a random point in $\xset$ attains the error bound claimed. Finally, assume $p \in (1, 2]$, as otherwise we set $p \gets 1 + \frac 1 {\log d}$, which only affects bounds by constant factors, since $\norm{\cdot}_p$ is affected by $O(1)$ multplicatively everywhere under this change.

Under these simplifications, we choose the parameters $k$ and $\mu$ according to Proposition~\ref{prob:dpco} for each problem. Assume for now that $\Theta$ for the regularizer $r$ we choose is bounded by a universal constant times $\frac 1 {p - 1}$. Then the Lipschitz constant of $k\Ferm$ in either case of Proposition~\ref{prob:dpco} is 
\[kG = \Omega\Par{\min\Par{\frac{\sqrt{(p - 1)d} n\eps}{\sqrt{\log \frac 1 \delta}}, d\sqrt{n} } } = \Omega(d),\]
as implied by our earlier simplification. We hence may choose $\ind$ to be uniform over $[n]$, and
\[\eta = O\Par{\frac{1}{k^2G^2 \log \frac {(1 + n\eps)\log \beta} \delta}}\]
for a sufficiently small constant to use Theorem~\ref{thm:mainllt}. Under this setting we certainly have $\eta = O(\frac 1 {d^2})$, so letting $r \defeq \eta \psi_{p, a}$ for $a \defeq \frac{\eta(p - 1)}{2}$ shows that $r$ is $\eta$ times the LLT of an $\eta$-smooth function in $\ell_q$. By Lemma~\ref{lem:scllt}, $r$ is indeed $1$-strongly convex in $\ell_p$, and Lemma~\ref{lem:rangellt} bounds its range by $\Theta = O(\frac 1 {p -1})$ satisfying our earlier assumption, where we use $a = O(\frac 1 {d^2})$. The runtime finally follows by applying our choices of $k, \mu$ in Proposition~\ref{prop:gennormdp}, with our choice of $\eta$, in Theorem~\ref{thm:mainllt}, where we ensure that $\eta \cdot k\mu \le 1$ by choosing a smaller $\eta$ if this is not the case (so Theorem~\ref{thm:mainllt} applies). Finally, to account for total variation error in our sampler, it suffices to adjust the failure probability $\delta$ by a constant and take a union bound over the privacy definition and the failure of Theorem~\ref{thm:mainllt}.
\end{proof}

We briefly compare Theorem~\ref{thm:lp} to its counterpart result, Corollary 2, in \cite{GopiLLST23}. For simplicity, in this discussion we let $\eps = \Theta(1)$ and $p = 1$ (similar qualitative comparisions hold for any $p \in [1, 2)$). 
In general, the two resulting value oracle query complexities are incomparable: compared to the information-theoretic lower bound derived in Appendix~\ref{app:lb}, Theorem~\ref{thm:lp} has a near-quadratic overhead (and also requires a warm start), whereas Corollary 2 in \cite{GopiLLST23} suffers from an $\approx d$ factor overhead. For small sample sizes $n \lesssim \sqrt{d}$, Theorem~\ref{thm:lp} is preferable, whereas for large sample sizes or when a warm start is not available, Corollary 2 in \cite{GopiLLST23} is preferable. 

However, \cite{GopiLLST23} relies on norm comparisons between $\ell_p$ and $\ell_2$ norms (as it calls the Euclidean proximal sampler), and thus a $\textup{poly}(d)$-factor overhead appears inherent to their approach. On the other hand, if Theorem~\ref{thm:mainllt} can be qualitatively improved as discussed in Section~\ref{sec:future}, it could close the query complexity gap to the information-theoretic lower bound. We believe this distinction highlights the novelty of our framework, and opens the door to interesting future improvements.

Finally, by combining the proof strategy of Theorem~\ref{thm:lp} with Corollary~\ref{cor:rangelltmat} instead of Lemma~\ref{lem:rangellt}, we immediately obtain the following corollary in the case of Schatten norms.

\begin{corollary}\label{cor:schatten}
	Let $p \in [1, 2]$, $\eps, \delta \in (0, 1)$. In the setting of Problem~\ref{prob:dpco} where $\normx{\cdot}$ is the Schatten-$p$ norm on $\R^{d_1 \times d_2}$, there is an $(\eps, \delta)$-differentially private algorithm $\mech_{\textup{erm}}$ which produces $\mx \in \xset$ such that
	\begin{align*}
		\E_{\mech_{\textup{erm}}}\Brack{\Ferm(\mx)} - \min_{\mx \in \xset} \Ferm(\mx) &= O\Par{\frac{GD}{\sqrt{p - 1}} \cdot \frac{\sqrt{d_1d_2 \log \frac 1 \delta}}{n\eps}} \text{ for } p \in (1, 2], \\
		\E_{\mech_{\textup{erm}}}\Brack{\Ferm(\mx)} - \min_{\mx \in \xset} \Ferm(\mx) &= O\Par{GD\sqrt{\log (d_1d_2)} \cdot \frac{\sqrt{d_1d_2 \log \frac 1 \delta}}{n\eps}} \text{ for } p = 1.
	\end{align*}
	Further, there is an $(\eps, \delta)$-differentially private algorithm $\mech_{\textup{sco}}$ which produces $\mx \in \xset$ such that
	\begin{align*}
		\E_{\mech_{\textup{sco}}}\Brack{\Fpop(\mx)} - \min_{\mx \in \xset} \Fpop(\mx) &= O\Par{\frac{GD}{\sqrt{p - 1}} \cdot \Par{\frac 1 {\sqrt n} + \frac{\sqrt{d_1d_2 \log \frac 1 \delta}}{n\eps}}} \text{ for } p \in (1, 2], \\
		\E_{\mech_{\textup{sco}}}\Brack{\Fpop(\mx)} - \min_{\mx \in \xset} \Fpop(\mx) &= O\Par{GD\sqrt{\log (d_1d_2)} \cdot \Par{\frac 1 {\sqrt n} + \frac{\sqrt{d_1d_2 \log \frac 1 \delta}}{n\eps}}} \text{ for } p = 1.
	\end{align*}
	Both $\mech_{\textup{erm}}$ and $\mech_{\textup{sco}}$ call $\alg$ in Assumption~\ref{assume:warmstart}, appropriately parameterized, once. $\mech_{\textup{erm}}$ uses
	\[O\Par{\Par{1 +\frac{n^2\eps^2}{\log \frac 1 \delta}}^2 \log^2\Par{\frac{(1 + n\eps)\log \beta}{\delta}} \log\frac \beta \delta}.\]
	additional value queries to some $f(\cdot; s_i)$, and $\mech_{\textup{sco}}$ uses
	\[O\Par{\min\Par{nd_1d_2,\;1 +\frac{n^2\eps^2}{\log \frac 1 \delta}}^2 \log^2\Par{\frac{(1 + n\eps)\log \beta}{\delta}} \log\frac \beta \delta}\]
	additional value queries to some $f(\cdot; s_i)$.
\end{corollary}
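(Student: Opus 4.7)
The plan is to follow the proof of Theorem~\ref{thm:lp} essentially verbatim, with every appearance of the dimension $d$ replaced by $d_1 d_2$, every appearance of the vector $\ell_p$ norm replaced by the Schatten-$p$ norm on $\R^{d_1\times d_2}$, and the range bound Lemma~\ref{lem:rangellt} replaced by Corollary~\ref{cor:rangelltmat}. Conceptually the argument decouples into three modules: (i) a reduction from DP-ERM / DP-SCO to sampling from a regularized density, supplied by Proposition~\ref{prop:gennormdp}; (ii) a construction of a strongly convex regularizer $r$ in $\normx{\cdot}$ with small additive range; and (iii) a sampler for the resulting density, supplied by Theorem~\ref{thm:mainllt}. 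Modules (i) and (iii) are norm-agnostic in their statements, so the only thing to check carefully is (ii) in the matrix setting.

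For module (ii), I would first rescale so that $D=1$, and dispose of the regime $(n\eps)^2 < d_1 d_2 \Theta \log(1/\delta)$ (with $\Theta = \min(\frac{1}{p-1}, \log(d_1 d_2))$) by returning a uniformly random point in $\xset$, exactly as in the proof of Theorem~\ref{thm:lp}. In the remaining regime, set $p \gets 1 + \frac{1}{\log(d_1 d_2)}$ if $p=1$, which only affects the Schatten norms by $O(1)$ factors. Fix $\eta$ exactly as in the proof of Theorem~\ref{thm:lp}, set $a \defeq \eta(p-1)/2$, and take $r \defeq \eta \psi_{p,a}$ where $\psi_{p,a}$ is now the matrix LLT of $a\norm{\cdot}_q^2$. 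By the matrix case of Fact~\ref{fact:lqsquare}, $a\norm{\cdot}_q^2$ is $\frac{2a}{p-1} = \eta$-smooth in Schatten-$q$; Lemma~\ref{lem:scllt} is stated and proved for arbitrary norms on $\R^d$, so specializing to the Hilbert space $\R^{d_1\times d_2}$ equipped with the Schatten-$p$ norm immediately yields that $r$ is $1$-strongly convex in $\normx{\cdot}$. Corollary~\ref{cor:rangelltmat} then bounds the additive range of $r$ over the unit Schatten-$p$ ball by $O(\frac{1}{p-1})$, since our choice of $\eta$ will satisfy $a = O(\frac{1}{(d_1 d_2)^2})$.

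With $r$ in hand, feed it into Proposition~\ref{prop:gennormdp} using dimension $d_1 d_2$ in place of $d$ to obtain the stated excess risk bounds, and invoke Theorem~\ref{thm:mainllt} using Assumption~\ref{assume:warmstart} to realize the sampler; the parameter $\eta$ is chosen small enough to guarantee both $\eta \cdot k\mu \le 1$ and the inner-loop Lipschitz condition $\frac{1}{\eta} = \Omega(k^2 G^2 \log \frac{(1+n\eps)\log\beta}{\delta})$. The value oracle count then reads off directly from the $O(T) = O(\frac{1}{\eta\mu}\log\frac{\beta}{\delta})$ bound of Theorem~\ref{thm:mainllt} after substituting the $k,\mu$ from Proposition~\ref{prop:gennormdp}, producing exactly the two expressions in the corollary statement. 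A final union bound over the $\delta$ total-variation error of the sampler and the $(\eps,\delta)$-DP guarantee, absorbed into constants, closes the argument.

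The only place where a matrix-specific subtlety could in principle arise is in claiming that the Riemannian isoperimetry (Lemma~\ref{lem:iso}) and self-concordance (Lemma~\ref{lem:psi_sc}) of $\psi_{p,a}$ used inside Theorem~\ref{thm:mainllt} transfer to the matrix setting. But Section~\ref{sec:llt} develops all of these properties for an arbitrary convex $\vhi: \R^N \to \R$, with $N$ unspecified, so applying them with $N = d_1 d_2$ requires no modification. In that sense there is no real obstacle: the matrix case is a strict instance of the abstract framework, and the corollary is genuinely immediate once Corollary~\ref{cor:rangelltmat} is in place.
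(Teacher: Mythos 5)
Your proposal is correct and matches the paper's own (terse) argument exactly: the paper proves this corollary simply by stating that one combines the proof of Theorem~\ref{thm:lp} with Corollary~\ref{cor:rangelltmat} in place of Lemma~\ref{lem:rangellt}. Your elaboration — replacing $d$ by $d_1 d_2$, using the Schatten-$p$ norm and Fact~\ref{fact:lqsquare}'s matrix case, and observing that the LLT machinery of Section~\ref{sec:llt} is already dimension- and norm-agnostic — fills in precisely the details the paper leaves implicit, with no substantive deviation.
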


\subsection{Oracle access for $\psi_{p, a}$}\label{ssec:valueoracle}

In Theorem~\ref{thm:lp} and Corollary~\ref{cor:schatten}, we only bounded the value oracle complexity of our sampling algorithms. The remainder of the steps in Algorithm~\ref{alg:alternatesample} and its subroutine Algorithm~\ref{alg:inner} require samples from densities of the form $\dd\pi_x$ (for some $x \in \xset$) or $\dd\gamma_y$ (for some $y \in \R^d$), defined in \eqref{eq:marginaldef} and \eqref{eq:gammadef} respectively and reproduced here for convenience:
\begin{equation}\label{eq:reproduceoracles}
\begin{aligned}
\dd \pi_x(y) &= \exp\Par{\inprod{x}{y} - \psi(x) - \vhi(y)} \dd y, \\
\dd \gamma_y(x) &\propto \exp\Par{-\eta\mu\psi(x) - \Par{\psi(x) - \inprod x y}} \1_{\xset}(x) \dd x.
\end{aligned}
\end{equation}
These densities are independent of the function $F$ in Problem~\ref{prob:mainsetting} and hence do not require additional value oracle queries in the setting of Problem~\ref{prob:mainsetting}. In general, the complexity of these steps depends on the complexity of the functions $\vhi$ and $\psi$, and the set $\xset$. We now discuss strategies for sampling from $\pi_x$ and $\gamma_y$ in specific settings described by Section~\ref{ssec:lp}, which we first briefly summarize.

\begin{enumerate}
	\item We describe a method based on the inverse Laplace transform for sampling from $\pi_x$ and evaluating $\psi_{p, a}$ with complexity linear in the dimension $d$ in the vector setting.
	\item Under efficient value oracle access to $\psi_{p, a}$ and membership oracle access to $\xset$, general-purpose results \cite{LovaszV07, JiaLLV21, JambulapatiLV22} imply polynomial-time samplers for $\gamma_y$.
	\item We discuss generalizations of these methods to the matrix setting, and na\"ive sampling methods. We draw a loose connection to the HCIZ integral from harmonic analysis, and suggest how it may potentially help in the structured sampling task for LLTs in Schatten norms.
\end{enumerate}

\paragraph{$\ell_p$ setting.} We first discuss the case when $\xset \subset \R^d$ is a set on vectors equipped with the $\ell_p$ norm for some $p \in [1, 2]$, and we let $q \ge 2$ satisfy $\frac 1 p + \frac 1 q = 1$. We follow the notation \eqref{eq:llt_lp}. 

In order to sample from the density $\pi_x$, we use an \emph{inverse Laplace transform} decomposition. For a parameter $c \in [0, 1)$, we define the density $\mu_c$ supported on $\R_{\ge 0}$, such that for all $t \ge 0$,
\begin{equation}\label{eq:mulamdef}\exp(-t^c) = \int_0^\infty \exp\Par{-\lam t}\mu_c(\lam)\dd \lam.\end{equation}
Intuitively, the density $\mu_c(\lam)$ and the corresponding decomposition (inverse Laplace transform) \eqref{eq:mulamdef} aims to express the more heavy-tailed function $\exp(-t^c)$ as a distribution over the lighter-tailed functions $\exp(-\lam t)$. The inverse Laplace transform densities $\mu_c$ are well-studied in the probability theory literature, and correspond to \emph{stable count distributions} parameterized by $c$. For example, it is well-known that $\mu_{\half}$ is the L\'evy distribution
\[\dd\mu_{\half}(\lam) = \frac{1}{2\sqrt \pi \lam^{\frac 3 2}} \exp\Par{-\frac 1 {4\lam}}\dd \lam .\]
We refer the reader to references e.g.\ \cite{Mainardi07} on properties of the densities $\mu_c$, and for now assume we can access and sample from these one-dimensional distributions in closed form for simplicity. Given this decomposition, we can then write
\begin{equation}\label{eq:product1d}
\begin{aligned}
\exp(\psi_{p, a}(x)) &= \int \exp\Par{\inprod{x}{y} - a\norm{y}_q^2} \dd y \\
&= \int_0^\infty \Par{\int \exp\Par{\inprod{x}{y} - \lam a^{\frac q 2}\norm{y}_q^q}\dd y} \mu_{\frac 2 q}(\lam) \dd \lam \\
&= \int_0^\infty \prod_{i \in [d]} \Par{\int_{-\infty}^\infty \exp\Par{x_i y_i - \lam a^{\frac q 2} y_i^q} \dd y_i } \mu_{\frac 2 q}(\lam) \dd \lam .
\end{aligned}
\end{equation}
The decomposition \eqref{eq:product1d} reduces the problem of sampling from $\pi_x$ to $d$ one-dimensional problems. To sample $\propto \exp(\inprod{x}{y}-a\norm{y}_q^2)$, we can first sample $\lam$ from the density $\mu_c$ for $c = \frac 2 q$, and then sample each coordinate $y_i$ proportionally to $\exp(x_i y_i - \lam a^{\frac q 2}y_i^q)$ conditioned on the sampled $\lam$. 

This decomposition also gives us an efficient value oracle for $\psi_{p, a}$, by evaluating \eqref{eq:product1d} as a one-dimensional integral over $\lam$, where the integrand may be evaluated as a product of $d$ one-dimensional integrals. Under membership oracle access to $\xset$, the problem of sampling from $\gamma_y$ then falls under a generic logconcave sampling setup studied in a long line of work building upon \cite{DyerFK91}. The state-of-the-art general-purpose logconcave sampler, which combines the algorithms of \cite{LovaszV07, JiaLLV21} with the isoperimetric bound in \cite{JambulapatiLV22} (improving recent breakthroughs by \cite{Chen21, KlartagL22}), requires roughly $d^{3.5}$ value oracle calls to $\psi_{p, a}$ and membership oracle calls to $\xset$.

In principle, for structured sets $\xset$ (such as $\ell_p$ balls), the particular explicit structure of $\psi_{p, a}$ and $\xset$ may be exploited to design more efficient samplers for the densities $\gamma_y$, analogously to our custom linear-time sampler for $\pi_x$. However, it should be noted that the sampling problem for $\gamma_y$ appears to be quite a bit more challenging than the problem for $\pi_x$. We leave the investigation of explicit sampler design for $\gamma_y$ as an interesting open problem for future work.

\paragraph{Schatten-$p$ setting.} The situation is somewhat less straightforward in the matrix case. Here, the key computational problem in replicating the strategy suggested by \eqref{eq:product1d} is evaluating the integral
\begin{equation}\label{eq:badmatrixintegral}\int \exp\Par{\inprod{\mx}{\my} - C \norm{\my}_q^q} \dd \my,\end{equation}
where the integral is over $\my \in \R^{d_1 \times d_2}$, and $\mx \in \R^{d_1 \times d_2}$, $C > 0$ are fixed. The difficulty is $\inprod{\mx}{\my}$ decomposes coordinatewise, whereas $\norm{\my}_q^q$ decomposes spectrally.\footnote{Note that because $\norm{\cdot}_q$ is unitarially invariant, we may assume $\mx$ is diagonal.} At least superficially, this is similar to the challenge faced when evaluating the Harish-Chandra-Itzykson-Zuber (HCIZ) formula
\begin{equation}\label{eq:hciz}\int \exp\Par{\Tr\Par{\ma \mmu \mb \mmu^\dagger}}\dd \mmu,\end{equation}
where the integral is over the Haar measure on (complex) unitary matrices $\mmu$, and $\ma$, $\mb$ are Hermitian. By dropping the $-C\norm{\my}_q^q$ term in \eqref{eq:badmatrixintegral} and only integrating over unitary conjugations of a fixed matrix $\my$, we arrive at a generalization of \eqref{eq:hciz}. The difficulty in evaluating \eqref{eq:hciz} is also a sort of tension between the eigenspaces of $\ma$ and $\mb$. Nonetheless, \eqref{eq:hciz} has a (polynomial-time computable) exact formula, which was famously discovered independently by \cite{Harish-Chandra57, ItzyksonZ80}. Furthermore, \cite{LeakeMV21} recently obtained a polynomial-time sampler for the density induced by \eqref{eq:hciz}; while a sampler for \eqref{eq:badmatrixintegral} would follow from logconcavity and general-purpose results, it would be far from cheap, so ways of exploiting structure are fruitful to explore.

As a proof-of-concept, evaluating the integral \eqref{eq:badmatrixintegral} in (polynomial-time computable) closed form is a minimal requirement for implementing the $\mx$-oracles in \eqref{eq:reproduceoracles} used by our algorithm. Even this problem appears challenging, but (as summarized cleanly by \cite{Tao13, McSwiggen18}) a plethora of techniques exist for proving the HCIZ formula, some based on tools from stochastic processes. We pose the efficient computability of the integral \eqref{eq:badmatrixintegral} as another explicit open question. 	%

\section{Conclusion}
\label{sec:future}

We believe our work is a significant step towards developing the theory of LLTs and paving the way for their use in designing sampling algorithms. There are a number of important questions left open by our work, which we find interesting and potentially fruitful for the community to explore.

\paragraph{Stronger mixing time bounds.} Perhaps the most immediate open question regarding our alternating sampling framework in Section~\ref{sec:sampling} is to obtain a better understanding of its mixing time. 

The main qualitative shortcoming of our framework is that it yields a mixing time in Theorem~\ref{thm:mainllt} that has a inverse-quadratic dependence on the ``relative strong convexity'' parameter $\eta\mu$. This is in contrast with known mixing time results in the Euclidean setting, which more typically scale as $\approx \frac 1 {\eta\mu}$ \cite{lee2021structured, ChenCSW22, ChenE22}. We find it to be an interesting and important open problem to design a non-Euclidean framework for proximal sampling that closes this gap, giving a mixing time that scales linearly with $\frac 1 {\eta\mu}$, the natural analog to existing Euclidean counterparts.

Moreover, as discussed in Section~\ref{ssec:results}, Theorem~\ref{thm:mainllt}'s mixing time scales linearly in $\log \beta$, which as demonstrated by Lemma~\ref{lem:feasiblestart} (and related other settings, e.g.\ MALA \cite{ChewiLACGR21, lee2021structured}) can result in additional polynomial overhead in problem parameters: for what $\vhi, \psi$ is this avoidable? Notably, it is avoided for the Euclidean proximal sampler \cite{lee2021structured} by working directly with KL divergence (as opposed to the larger $\chi^2$ distance typically used by proofs using conductance bounds). Different proofs of this $\log\log\beta$ dependency for the Euclidean proximal sampler were then subsequently obtained by \cite{ChenCSW22, ChenE22}. We also mention that $\log\log\beta$ dependences may sometimes follow via average conductance techniques (e.g.\ \cite{LovaszK99}), which may apply to our Markov chain.

\paragraph{Samplers for explicit distributions.} Our results Theorem~\ref{thm:mainllt} and~\ref{thm:lp} mainly focused on bounding the query complexity to the function $F$, or samples $f_i$ from the distribution defining it. The total computational complexity of a practical implementation of Algorithm~\ref{alg:alternatesample} also includes the cost of sampling from the distributions \eqref{eq:reproduceoracles}, which are ``data-independent'' for this problem (only depending on explicit functions and sets instead of $F$). In Section~\ref{ssec:valueoracle}, we give a linear-time sampler for $\pi_x$ and a polynomial-time sampler for $\gamma_y$ under the $\ell_p$ geometry, but it is interesting to obtain faster samplers for particular structured choices of $(\vhi, \xset)$ of importance in applications.

\paragraph{LLT beyond proximal sampling.} More generally, we believe it is worthwhile to obtain a better understanding of specific choices of $(\vhi, \psi)$, e.g.\ the examples in Section~\ref{ssec:lp}, from an algorithmic perspective. LLTs satisfy appealing properties such as self-concordance, strong convexity, and isoperimetry making them well-suited for frameworks beyond Algorithm~\ref{alg:alternatesample}, such as discretized MLD \cite{AhnC21} and Metropolized sampling methods discussed in Section~\ref{sec:intro}. Bounding the complexity of their use in these applications necessitates an improved understanding of specific LLTs.

\paragraph{LLT as a dual object.} Finally, a tantalizing open question in the theory of well-conditioned sampling (even in the $\ell_2$ setting) is whether acceleration is achievable, i.e.\ mixing times scaling with the square root of the condition number (which is famously possible in optimization \cite{Nesterov83}). The duality of Fenchel conjugates appears to play a key role in acceleration, as made explicit by \cite{WangA18, CohenST21}, so a better understanding of duality may be helpful in the corresponding endeavor for sampling. The LLT is a natural candidate for a dual object in sampling, as it arises via joint densities on an extended space \eqref{eq:gen_joint}, and satisfies properties such as strong convexity-smoothness duality. Can we demystify this relationship, and use it to obtain faster samplers? 	
	\subsection*{Acknowledgments}
	We thank Sam Power for helpful comments, suggesting several references, noticing that an improved strong convexity-smoothness duality result follows from combining our smoothness-strong convexity duality result and a tool in \cite{chewipooladian2022caffarelli}, and letting us include his elegant observation. We also thank Vishwak Srinivasan and Yunbum Kook for bringing to our attention an error in Lemma~\ref{lem:iso}, which weakens the complexities in Theorems~\ref{thm:mainllt} and~\ref{thm:lp} by roughly a quadratic factor (cf.\ Section~\ref{ssec:erratum}).
	
	\bibliographystyle{alpha}	

\begin{thebibliography}{BFTGT19}

\bibitem[AC21]{AhnC21}
Kwangjun Ahn and Sinho Chewi.
\newblock Efficient constrained sampling via the mirror-langevin algorithm.
\newblock In {\em Advances in Neural Information Processing Systems 34: Annual
  Conference on Neural Information Processing Systems 2021}, pages
  28405--28418, 2021.

\bibitem[AFKT21]{afkt21}
Hilal Asi, Vitaly Feldman, Tomer Koren, and Kunal Talwar.
\newblock Private stochastic convex optimization: Optimal rates in $\ell_1$
  geometry.
\newblock {\em arXiv preprint arXiv:2103.01516}, 2021.

\bibitem[AHK12]{AroraHK12}
Sanjeev Arora, Elad Hazan, and Satyen Kale.
\newblock The multiplicative weights update method: a meta-algorithm and
  applications.
\newblock {\em Theory Comput.}, 8(1):121--164, 2012.

\bibitem[ANW10]{AgarwalNW10}
Alekh Agarwal, Sahand~N. Negahban, and Martin~J. Wainwright.
\newblock Fast global convergence rates of gradient methods for
  high-dimensional statistical recovery.
\newblock In {\em Advances in Neural Information Processing Systems 23: 24th
  Annual Conference on Neural Information Processing Systems 2010}, pages
  37--45. Curran Associates, Inc., 2010.

\bibitem[Bar20]{Barp20}
Alessandro~Andrea Barp.
\newblock {\em The bracket geometry of statistics}.
\newblock PhD thesis, Imperial College London, 2020.

\bibitem[BC12]{BubeckC12}
S{\'{e}}bastien Bubeck and Nicol{\`{o}} Cesa{-}Bianchi.
\newblock Regret analysis of stochastic and nonstochastic multi-armed bandit
  problems.
\newblock {\em Found. Trends Mach. Learn.}, 5(1):1--122, 2012.

\bibitem[BCL94]{BallCL94}
Keith Ball, Eric~A. Carlen, and Elliott~H. Lieb.
\newblock Sharp uniform convexity and smoothness estimates for trace norms.
\newblock {\em Inventiones mathematicae}, 115(1):463--482, 1994.

\bibitem[BDMP17]{brosse2017sampling}
Nicolas Brosse, Alain Durmus, {\'E}ric Moulines, and Marcelo Pereyra.
\newblock Sampling from a log-concave distribution with compact support with
  proximal langevin monte carlo.
\newblock In {\em Conference on learning theory}, pages 319--342. PMLR, 2017.

\bibitem[BE19]{bubeck2014entropic}
S{\'{e}}bastien Bubeck and Ronen Eldan.
\newblock The entropic barrier: Exponential families, log-concave geometry, and
  self-concordance.
\newblock {\em Math. Oper. Res.}, 44(1):264--276, 2019.

\bibitem[Ber18]{bernton2018langevin}
Espen Bernton.
\newblock Langevin monte carlo and jko splitting.
\newblock In {\em Conference on learning theory}, pages 1777--1798. PMLR, 2018.

\bibitem[Bes94]{Besag94}
Julian Besag.
\newblock Comments on ``representations of knowledge in complex systems'' by u.
  grenander and mi miller.
\newblock {\em Journal of the Royal Statistical Society, Series B},
  56:591--592, 1994.

\bibitem[BFTGT19]{BFTT19}
Raef Bassily, Vitaly Feldman, Kunal Talwar, and Abhradeep Guha~Thakurta.
\newblock Private stochastic convex optimization with optimal rates.
\newblock {\em Advances in neural information processing systems}, 32, 2019.

\bibitem[BGN21]{bgn21}
Raef Bassily, Crist{\'o}bal Guzm{\'a}n, and Anupama Nandi.
\newblock Non-euclidean differentially private stochastic convex optimization.
\newblock {\em arXiv preprint arXiv:2103.01278}, 2021.

\bibitem[BL76]{BrascampL76}
Herm~Jan Brascamp and Elliott~H Lieb.
\newblock On extensions of the brunn-minkowski and prékopa-leindler theorems,
  including inequalities for log concave functions, and with an application to
  the diffusion equation.
\newblock {\em Journal of Functional Analysis}, 22(4):366--389, 1976.

\bibitem[BL00]{BobkovL00}
Sergey~G Bobkov and Michel Ledoux.
\newblock From brunn-minkowski to brascamp-lieb and to logarithmic sobolev
  inequalities.
\newblock {\em GAFA, Geometric and Functional Analysis}, 10:1028--1052, 2000.

\bibitem[BRH12]{BouRabeeH12}
Nawaf Bou-Rabee and Martin Hairer.
\newblock Nonasymptotic mixing of the mala algorithm.
\newblock {\em IMA Journal of Numerical Analysis}, 33(1):80--110, 2012.

\bibitem[BST14]{BST14}
Raef Bassily, Adam Smith, and Abhradeep Thakurta.
\newblock Private empirical risk minimization: Efficient algorithms and tight
  error bounds.
\newblock In {\em 2014 IEEE 55th annual symposium on foundations of computer
  science}, pages 464--473. IEEE, 2014.

\bibitem[CCBJ18]{ChengCBJ18}
Xiang Cheng, Niladri~S. Chatterji, Peter~L. Bartlett, and Michael~I. Jordan.
\newblock Underdamped langevin {MCMC:} {A} non-asymptotic analysis.
\newblock In {\em Conference On Learning Theory}, volume~75 of {\em Proceedings
  of Machine Learning Research}, pages 300--323. {PMLR}, 2018.

\bibitem[CCSW22]{ChenCSW22}
Yongxin Chen, Sinho Chewi, Adil Salim, and Andre Wibisono.
\newblock Improved analysis for a proximal algorithm for sampling.
\newblock In {\em Conference on Learning Theory}, volume 178 of {\em
  Proceedings of Machine Learning Research}, pages 2984--3014. {PMLR}, 2022.

\bibitem[CDWY20]{ChenDW020}
Yuansi Chen, Raaz Dwivedi, Martin~J. Wainwright, and Bin Yu.
\newblock Fast mixing of metropolized hamiltonian monte carlo: Benefits of
  multi-step gradients.
\newblock {\em J. Mach. Learn. Res.}, 21:92:1--92:72, 2020.

\bibitem[CE22]{ChenE22}
Yuansi Chen and Ronen Eldan.
\newblock Localization schemes: {A} framework for proving mixing bounds for
  markov chains (extended abstract).
\newblock In {\em 63rd {IEEE} Annual Symposium on Foundations of Computer
  Science, {FOCS} 2022}, pages 110--122. {IEEE}, 2022.

\bibitem[Che21a]{Chen21}
Yuansi Chen.
\newblock An almost constant lower bound of the isoperimetric coefficient in
  the kls conjecture.
\newblock {\em Geometric and Functional Analysis}, 31(1):34--61, 2021.

\bibitem[Che21b]{Chewi21}
Sinho Chewi.
\newblock The entropic barrier is n-self-concordant.
\newblock {\em CoRR}, abs/2112.10947, 2021.

\bibitem[Che23]{Chewi23}
Sinho Chewi.
\newblock {\em Log-Concave Sampling}.
\newblock 2023.

\bibitem[CLA{\etalchar{+}}21]{ChewiLACGR21}
Sinho Chewi, Chen Lu, Kwangjun Ahn, Xiang Cheng, Thibaut~Le Gouic, and Philippe
  Rigollet.
\newblock Optimal dimension dependence of the metropolis-adjusted langevin
  algorithm.
\newblock In {\em Conference on Learning Theory, {COLT} 2021}, volume 134 of
  {\em Proceedings of Machine Learning Research}, pages 1260--1300. {PMLR},
  2021.

\bibitem[CM08]{CM08}
Kamalika Chaudhuri and Claire Monteleoni.
\newblock Privacy-preserving logistic regression.
\newblock {\em Advances in neural information processing systems}, 21, 2008.

\bibitem[CMS11]{CMS11}
Kamalika Chaudhuri, Claire Monteleoni, and Anand~D Sarwate.
\newblock Differentially private empirical risk minimization.
\newblock {\em Journal of Machine Learning Research}, 12(3), 2011.

\bibitem[CP22]{chewipooladian2022caffarelli}
Sinho Chewi and Aram-Alexandre Pooladian.
\newblock An entropic generalization of {C}affarelli's contraction theorem via
  covariance inequalities.
\newblock {\em arXiv e-prints}, 2022.

\bibitem[Cra38]{Cramer38}
Harald Cram\'er.
\newblock Sur un nouveau th\'eor\`eme-limite de la th\'eorie des
  probabilit\'es.
\newblock {\em Act. Sci. et Ind.}, 736, 1938.

\bibitem[CRT06]{CandesRT06}
Emmanuel~J. Cand{\`{e}}s, Justin~K. Romberg, and Terence Tao.
\newblock Robust uncertainty principles: exact signal reconstruction from
  highly incomplete frequency information.
\newblock {\em {IEEE} Trans. Inf. Theory}, 52(2):489--509, 2006.

\bibitem[CST21]{CohenST21}
Michael~B. Cohen, Aaron Sidford, and Kevin Tian.
\newblock Relative lipschitzness in extragradient methods and a direct recipe
  for acceleration.
\newblock In {\em 12th Innovations in Theoretical Computer Science Conference,
  {ITCS} 2021}, volume 185 of {\em LIPIcs}, pages 62:1--62:18, 2021.

\bibitem[CV19]{ChenV19}
Zongchen Chen and Santosh~S. Vempala.
\newblock Optimal convergence rate of hamiltonian monte carlo for strongly
  logconcave distributions.
\newblock In {\em Approximation, Randomization, and Combinatorial Optimization.
  Algorithms and Techniques, {APPROX/RANDOM} 2019}, volume 145 of {\em LIPIcs},
  pages 64:1--64:12. Schloss Dagstuhl - Leibniz-Zentrum f{\"{u}}r Informatik,
  2019.

\bibitem[Dal17a]{Dalalyan17}
Arnak~S. Dalalyan.
\newblock Further and stronger analogy between sampling and optimization:
  Langevin monte carlo and gradient descent.
\newblock In {\em Proceedings of the 30th Conference on Learning Theory, {COLT}
  2017}, pages 678--689, 2017.

\bibitem[Dal17b]{dalalyan2017theoretical}
Arnak~S Dalalyan.
\newblock Theoretical guarantees for approximate sampling from smooth and
  log-concave densities.
\newblock {\em Journal of the Royal Statistical Society: Series B (Statistical
  Methodology)}, 79(3):651--676, 2017.

\bibitem[DCWY19]{DwivediCW019}
Raaz Dwivedi, Yuansi Chen, Martin~J. Wainwright, and Bin Yu.
\newblock Log-concave sampling: Metropolis-hastings algorithms are fast.
\newblock {\em J. Mach. Learn. Res.}, 20:183:1--183:42, 2019.

\bibitem[DFK91]{DyerFK91}
Martin~E. Dyer, Alan~M. Frieze, and Ravi Kannan.
\newblock A random polynomial time algorithm for approximating the volume of
  convex bodies.
\newblock {\em J. {ACM}}, 38(1):1--17, 1991.

\bibitem[DFO20]{DiakonikolasFO20}
Jelena Diakonikolas, Maryam Fazel, and Lorenzo Orecchia.
\newblock Fair packing and covering on a relative scale.
\newblock {\em {SIAM} J. Optim.}, 30(4):3284--3314, 2020.

\bibitem[DJWW15]{DJWW15}
John~C Duchi, Michael~I Jordan, Martin~J Wainwright, and Andre Wibisono.
\newblock Optimal rates for zero-order convex optimization: The power of two
  function evaluations.
\newblock {\em IEEE Transactions on Information Theory}, 61(5):2788--2806,
  2015.

\bibitem[DKL18]{DKL18}
Etienne De~Klerk and Monique Laurent.
\newblock Comparison of lasserre’s measure-based bounds for polynomial
  optimization to bounds obtained by simulated annealing.
\newblock {\em Mathematics of Operations Research}, 43(4):1317--1325, 2018.

\bibitem[DM19]{durmus2019high}
Alain Durmus and Eric Moulines.
\newblock High-dimensional bayesian inference via the unadjusted langevin
  algorithm.
\newblock {\em Bernoulli}, 25(4A):2854--2882, 2019.

\bibitem[DMM19]{DurmusMM19}
Alain Durmus, Szymon Majewski, and Blazej Miasojedow.
\newblock Analysis of langevin monte carlo via convex optimization.
\newblock {\em J. Mach. Learn. Res.}, 20:73:1--73:46, 2019.

\bibitem[EK11]{EldanK11}
Ronen Eldan and Bo''az Klartag.
\newblock Approximately gaussian marginals and the hyperplane conjecture.
\newblock {\em Contemporary Math.}, 545:44--68, 2011.

\bibitem[FKT20]{FKT20}
Vitaly Feldman, Tomer Koren, and Kunal Talwar.
\newblock Private stochastic convex optimization: optimal rates in linear time.
\newblock In {\em Proceedings of the 52nd Annual ACM SIGACT Symposium on Theory
  of Computing}, pages 439--449, 2020.

\bibitem[GC11]{GirolamiC11}
Mark Girolami and Ben Calderhead.
\newblock Riemann manifold langevin and hamiltonian monte carlo methods.
\newblock {\em Journal of the Royal Statistical Society, Series B},
  73(2):123--214, 2011.

\bibitem[GLL22]{GLL22}
Sivakanth Gopi, Yin~Tat Lee, and Daogao Liu.
\newblock Private convex optimization via exponential mechanism.
\newblock {\em arXiv preprint arXiv:2203.00263}, 2022.

\bibitem[GLL{\etalchar{+}}23]{GopiLLST23}
Sivakanth Gopi, Yin~Tat Lee, Daogao Liu, Ruoqi Shen, and Kevin Tian.
\newblock Private convex optimization in general norms.
\newblock In {\em Proceedings of the 2023 {ACM-SIAM} Symposium on Discrete
  Algorithms, {SODA} 2023}. {SIAM}, 2023.

\bibitem[GV22]{gatmiry2022convergence}
Khashayar Gatmiry and Santosh~S Vempala.
\newblock Convergence of the riemannian langevin algorithm.
\newblock {\em arXiv preprint arXiv:2204.10818}, 2022.

\bibitem[HC57]{Harish-Chandra57}
Harish-Chandra.
\newblock Differential operators on a semisimple lie groups.
\newblock {\em American Journal of Mathematics}, 79:87--120, 1957.

\bibitem[HKRC18]{HsiehKRC18}
Ya{-}Ping Hsieh, Ali Kavis, Paul Rolland, and Volkan Cevher.
\newblock Mirrored langevin dynamics.
\newblock In {\em Advances in Neural Information Processing Systems 31: Annual
  Conference on Neural Information Processing Systems 2018}, pages 2883--2892,
  2018.

\bibitem[HLL{\etalchar{+}}22]{HLL+22}
Yuxuan Han, Zhicong Liang, Zhipeng Liang, Yang Wang, Yuan Yao, and Jiheng
  Zhang.
\newblock Private streaming sco in $\ell\_p$ geometry with applications in high
  dimensional online decision making.
\newblock In {\em International Conference on Machine Learning}, pages
  8249--8279. PMLR, 2022.

\bibitem[IZ80]{ItzyksonZ80}
C.~Itzykson and J.-. Zuber.
\newblock The planar approximation. ii.
\newblock {\em Journal of Mathematical Physics}, 21:411--421, 1980.

\bibitem[Jia21]{Jiang21}
Qijia Jiang.
\newblock Mirror langevin monte carlo: the case under isoperimetry.
\newblock In {\em Advances in Neural Information Processing Systems 34: Annual
  Conference on Neural Information Processing Systems 2021}, pages 715--725,
  2021.

\bibitem[JKO98]{JordanKO98}
Richard Jordan, David Kinderlehrer, and Felix Otto.
\newblock The variational formulation of the fokker-planck equation.
\newblock {\em SIAM Journal on Mathematical Analysis}, 29(1):1--17, 1998.

\bibitem[JLLV21]{JiaLLV21}
He~Jia, Aditi Laddha, Yin~Tat Lee, and Santosh~S. Vempala.
\newblock Reducing isotropy and volume to {KLS:} an
  \emph{o}*(\emph{n}\({}^{\mbox{3}}\)\emph{{\(\psi\)}}\({}^{\mbox{2}}\)) volume
  algorithm.
\newblock In {\em {STOC} '21: 53rd Annual {ACM} {SIGACT} Symposium on Theory of
  Computing}, pages 961--974. {ACM}, 2021.

\bibitem[JLT20]{Jambulapati0T20}
Arun Jambulapati, Jerry Li, and Kevin Tian.
\newblock Robust sub-gaussian principal component analysis and
  width-independent schatten packing.
\newblock In {\em Advances in Neural Information Processing Systems 33: Annual
  Conference on Neural Information Processing Systems 2020}, 2020.

\bibitem[JLV22]{JambulapatiLV22}
Arun Jambulapati, Yin~Tat Lee, and Santosh~S. Vempala.
\newblock A slightly improved bound for the {KLS} constant.
\newblock {\em CoRR}, abs/2208.11644, 2022.

\bibitem[KL22]{KlartagL22}
Bo''az Klartag and Joseph Lehec.
\newblock Bourgain's slicing problem and kls isoperimetry up to polylog.
\newblock {\em CoRR}, abs/2203.15551, 2022.

\bibitem[Kla06]{Klartag06}
Bo''az Klartag.
\newblock On convex perturbations with a bounded isotropic constant.
\newblock {\em Geometric and Functional Analysis}, 16(6):1274--1290, 2006.

\bibitem[KLS95]{kannan1995isoperimetric}
Ravi Kannan, L{\'a}szl{\'o} Lov{\'a}sz, and Mikl{\'o}s Simonovits.
\newblock Isoperimetric problems for convex bodies and a localization lemma.
\newblock {\em Discrete \& Computational Geometry}, 13(3):541--559, 1995.

\bibitem[KLSV22]{kook2022condition}
Yunbum Kook, Yin~Tat Lee, Ruoqi Shen, and Santosh~S Vempala.
\newblock Condition-number-independent convergence rate of riemannian
  hamiltonian monte carlo with numerical integrators.
\newblock {\em arXiv preprint arXiv:2210.07219}, 2022.

\bibitem[KM12]{KlartagM12}
Bo''az Klartag and Emanuel Milman.
\newblock Centroid bodies and the logarithmic laplace transform: a unified
  approach.
\newblock {\em Journal of Functional Analysis}, 262(1):10--34, 2012.

\bibitem[KST09]{KakadeST09}
Sham~M. Kakade, Shai Shalev{-}Shwartz, and Ambuj Tewari.
\newblock Applications of strong convexity--strong smoothness duality to
  learning with matrices.
\newblock {\em arXiv e-prints}, abs/0910.0610, 2009.

\bibitem[KST12]{KST12}
Daniel Kifer, Adam Smith, and Abhradeep Thakurta.
\newblock Private convex empirical risk minimization and high-dimensional
  regression.
\newblock In {\em Conference on Learning Theory}, pages 25--1. JMLR Workshop
  and Conference Proceedings, 2012.

\bibitem[LC22]{liang2022}
Jiaming Liang and Yongxin Chen.
\newblock A proximal algorithm for sampling from non-smooth potentials.
\newblock In {\em 2022 Winter Simulation Conference (WSC)}, pages 3229--3240,
  2022.

\bibitem[Led99]{Ledoux99}
Michel Ledoux.
\newblock {\em Concentration of measure and logarithmic Sobolev inequalities}.
\newblock Seminaire de probabilities XXXIII, 1999.

\bibitem[LK99]{LovaszK99}
L{\'{a}}szl{\'{o}} Lov{\'{a}}sz and Ravi Kannan.
\newblock Faster mixing via average conductance.
\newblock In {\em Proceedings of the Thirty-First Annual {ACM} Symposium on
  Theory of Computing, May 1-4, 1999, Atlanta, Georgia, {USA}}, pages 282--287,
  1999.

\bibitem[LMV21]{LeakeMV21}
Jonathan Leake, Colin~S. McSwiggen, and Nisheeth~K. Vishnoi.
\newblock Sampling matrices from harish-chandra-itzykson-zuber densities with
  applications to quantum inference and differential privacy.
\newblock In {\em {STOC} '21: 53rd Annual {ACM} {SIGACT} Symposium on Theory of
  Computing}, pages 1384--1397. {ACM}, 2021.

\bibitem[LS93]{lovasz1993random}
L{\'a}szl{\'o} Lov{\'a}sz and Mikl{\'o}s Simonovits.
\newblock Random walks in a convex body and an improved volume algorithm.
\newblock {\em Random structures \& algorithms}, 4(4):359--412, 1993.

\bibitem[LST20]{LeeST20}
Yin~Tat Lee, Ruoqi Shen, and Kevin Tian.
\newblock Logsmooth gradient concentration and tighter runtimes for
  metropolized hamiltonian monte carlo.
\newblock In {\em Conference on Learning Theory, {COLT} 2020}, volume 125 of
  {\em Proceedings of Machine Learning Research}, pages 2565--2597. {PMLR},
  2020.

\bibitem[LST21a]{LeeST21}
Yin~Tat Lee, Ruoqi Shen, and Kevin Tian.
\newblock Lower bounds on metropolized sampling methods for well-conditioned
  distributions.
\newblock In {\em Advances in Neural Information Processing Systems 34: Annual
  Conference on Neural Information Processing Systems 2021}, pages
  18812--18824, 2021.

\bibitem[LST21b]{lee2021structured}
Yin~Tat Lee, Ruoqi Shen, and Kevin Tian.
\newblock Structured logconcave sampling with a restricted gaussian oracle.
\newblock In {\em Conference on Learning Theory}, pages 2993--3050. PMLR, 2021.

\bibitem[LTVW22]{LiTVW22}
Ruilin Li, Molei Tao, Santosh~S. Vempala, and Andre Wibisono.
\newblock The mirror langevin algorithm converges with vanishing bias.
\newblock In {\em International Conference on Algorithmic Learning Theory},
  volume 167 of {\em Proceedings of Machine Learning Research}, pages 718--742.
  {PMLR}, 2022.

\bibitem[LV07]{LovaszV07}
L{\'{a}}szl{\'{o}} Lov{\'{a}}sz and Santosh~S. Vempala.
\newblock The geometry of logconcave functions and sampling algorithms.
\newblock {\em Random Struct. Algorithms}, 30(3):307--358, 2007.

\bibitem[LV18]{lee2018convergence}
Yin~Tat Lee and Santosh~S Vempala.
\newblock Convergence rate of riemannian hamiltonian monte carlo and faster
  polytope volume computation.
\newblock In {\em Proceedings of the 50th Annual ACM SIGACT Symposium on Theory
  of Computing}, pages 1115--1121, 2018.

\bibitem[Mai07]{Mainardi07}
Francesco Mainardi.
\newblock L\'evy stable distributions in the theory of probability.
\newblock {\em Lecture Notes on Mathematical Physics}, 2007.

\bibitem[McS21]{McSwiggen18}
Colin McSwiggen.
\newblock The harish-chandra integral: An introduction with examples.
\newblock {\em L’Enseignement Math{\'e}matique}, 67(3):229--299, 2021.

\bibitem[MFWB22]{mou2022efficient}
Wenlong Mou, Nicolas Flammarion, Martin~J Wainwright, and Peter~L Bartlett.
\newblock An efficient sampling algorithm for non-smooth composite potentials.
\newblock {\em Journal of Machine Learning Research}, 23(233):1--50, 2022.

\bibitem[Nem04]{nemirovski2004interior}
Arkadi Nemirovski.
\newblock Interior point polynomial time methods in convex programming.
\newblock {\em Lecture notes}, 42(16):3215--3224, 2004.

\bibitem[Nes83]{Nesterov83}
Yurii Nesterov.
\newblock A method for solving a convex programming problem with convergence
  rate $o(1/k^2)$.
\newblock {\em Doklady AN SSSR}, 269:543--547, 1983.

\bibitem[NT02]{nesterov2002riemannian}
Yurii~E Nesterov and Michael~J Todd.
\newblock On the riemannian geometry defined by self-concordant barriers and
  interior-point methods.
\newblock {\em Foundations of Computational Mathematics}, 2(4):333--361, 2002.

\bibitem[NY83]{NemirovskiY83}
A.\ Nemirovski and D.\~B.\ Yudin.
\newblock {\em Problem Complexity and Method Efficiency in Optimization}.
\newblock Wiley, 1983.

\bibitem[Per16]{pereyra2016proximal}
Marcelo Pereyra.
\newblock Proximal markov chain monte carlo algorithms.
\newblock {\em Statistics and Computing}, 26:745--760, 2016.

\bibitem[RT96]{RobertsT96}
Gareth~O Roberts and Richard~L Tweedie.
\newblock Exponential convergence of langevin distributions and their discrete
  approximations.
\newblock {\em Bernoulli}, 2(4):341--363, 1996.

\bibitem[Sha07]{Shalev-Shwartz07}
Shai Shalev{-}Shwartz.
\newblock Online learning: Theory, algorithms, and applications.
\newblock PhD thesis, Hebrew University, 2007.

\bibitem[SL19]{ShenL19}
Ruoqi Shen and Yin~Tat Lee.
\newblock The randomized midpoint method for log-concave sampling.
\newblock In {\em Advances in Neural Information Processing Systems 32: Annual
  Conference on Neural Information Processing Systems 2019}, pages 2098--2109,
  2019.

\bibitem[SWW25]{SrinivasanWW25}
Vishwak Srinivasan, Andre Wibisono, and Ashia Wilson.
\newblock High-accuracy sampling from constrained spaces with the
  metropolis-adjusted preconditioned langevin algorithm.
\newblock In {\em International Conference on Algorithmic Learning Theory},
  volume 272 of {\em Proceedings of Machine Learning Research}, pages
  1169--1220. {PMLR}, 2025.

\bibitem[Tao13]{Tao13}
Terence Tao.
\newblock The harish-chandra-itzykson-zuber integral formula.
\newblock
  https://terrytao.wordpress.com/2013/02/08/the-harish-chandra-itzykson-zuber-integral-formula/,
  2013.
\newblock Accessed: 2023-02-05.

\bibitem[TTZ15]{ttz15}
Kunal Talwar, Abhradeep Thakurta, and Li~Zhang.
\newblock Nearly-optimal private lasso.
\newblock In {\em Proceedings of the 28th International Conference on Neural
  Information Processing Systems-Volume 2}, pages 3025--3033, 2015.

\bibitem[WA18]{WangA18}
Jun{-}Kun Wang and Jacob~D. Abernethy.
\newblock Acceleration through optimistic no-regret dynamics.
\newblock In {\em Advances in Neural Information Processing Systems 31: Annual
  Conference on Neural Information Processing Systems 2018}, pages 3828--3838,
  2018.

\bibitem[Wib19]{wibisono2019proximal}
Andre Wibisono.
\newblock Proximal langevin algorithm: Rapid convergence under isoperimetry.
\newblock {\em arXiv preprint arXiv:1911.01469}, 2019.

\bibitem[WYX17]{WYS17}
Di~Wang, Minwei Ye, and Jinhui Xu.
\newblock Differentially private empirical risk minimization revisited: Faster
  and more general.
\newblock {\em Advances in Neural Information Processing Systems}, 30, 2017.

\bibitem[ZPFP20]{ZhangPFP20}
Kelvin~Shuangjian Zhang, Gabriel Peyr{\'{e}}, Jalal Fadili, and Marcelo
  Pereyra.
\newblock Wasserstein control of mirror langevin monte carlo.
\newblock In {\em Conference on Learning Theory}, volume 125 of {\em
  Proceedings of Machine Learning Research}, pages 3814--3841. {PMLR}, 2020.

\end{thebibliography}
\newcommand{\etalchar}[1]{$^{#1}$}

	\newpage
	\begin{appendix}
\section{Information-theoretic lower bound}\label{app:lb}

In this section, we show that prior information-theoretic lower bounds from \cite{DJWW15} and \cite{GLL22} can be straightforwardly extended to the settings studied by this paper to show that the value oracle complexities used by our algorithms in Section~\ref{sec:apps} are near-optimal, up to a quadratic factor. 

The main implication of these results is that if the gaps suggested by the first set of open problems in Section~\ref{sec:future} can be closed (i.e., linear rather than quadratic dependence on the inverse relative strong convexity, and mixing in relative entropy rather than relative variance), then we would unconditionally obtain near-optimal tradeoffs between zeroth-order query complexity and excess risk for private convex optimization in $\ell_p$ norms for $p \in (1, 2)$.

We first recall some notation from prior work and summarize previous results we will leverage.

\paragraph{Setup.} We consider the setting of stochastic optimization where there is a distribution over distributions $\{\prob_v\}_v$ indexed by $v$. An index $v$ is randomly selected, and we consider algorithms interacting with $\prob_v$ in one of two different ways. Letting $k \in \N$ and $\xset \subset \R^d$, \cite{DJWW15} defined a family of algorithms $\bbA_k$ such that $\alg \in \bbA_k$ can (adaptively) query a sequence of $k$ values $f(x; s)$ where $x \in \xset$ and $s$ is a fresh random sample from $\prob_v$. The follow-up work \cite{GLL22} defined another family of algorithms $\bbB_k$ which takes as input a dataset $\data = \{s_i\}_{i \in [n]}$ and can (adaptively) query a sequence of $k$ values $f(x; s)$ where $x \in \xset$ and $s \in \data$. These algorithm families model the SCO and ERM problems stated in Problem~\ref{prob:dpco}, without the privacy requirement. In a slight abuse of notation, we denote the output of an algorithm $\alg \in \bbA_k \cup \bbB_k$ in a SCO or ERM problem corresponding to a distribution $\prob$ by $\alg(\prob)$, where $\alg \in \bbB_k$ also depends on the dataset received.

Both \cite{DJWW15, GLL22} let $v$ be drawn uniformly at random from $\cV \defeq \{-1, 1\}^d$ and let
\[\prob_v \defeq \Nor\Par{\kappa v, \sigma^2 \id_d},\; f(x; s) \defeq \inprod{s}{x}\]
for parameters $\kappa, \sigma$ to be chosen. We fix this notation throughout this section. For any algorithm $\alg \in \bbA_k \cup \bbB_k$ corresponding to a set $\xset$ and a distribution $\prob$, we define the optimality gap
\[\eps_k(\alg, \xset, \prob) \defeq \E\Brack{\E_{s \sim \prob} f(\alg(\prob); s)} - \min_{x \in \xset} \E_{s \sim \prob} f(x; s), \]
where the first outer expectation is over any randomness in $\alg$, as well as in the samples used. We also define the minimax risk over a family of distributions $P$,
\[\eps^\star_k(\bbA_k \cup \bbB_k, P, \xset) \defeq \inf_{\alg \in \bbA_k \cup \bbB_k} \sup_{\prob \in P} \eps_k\Par{\alg, \prob, \xset}. \]
For $p \in [1, 2]$, we let $P_{G, p}$ denote the family of distributions $\prob$ over vectors $s$ such that \[\E_{s \sim \prob}\norm{s}_q^2 \le G^2,\text{ where } \frac 1 p + \frac 1 q = 1.\]
Our lower bounds in this section will be on $\eps^\star_k(\bbA_k \cup \bbB_k, P_{G, p}, \xset)$, where $\xset$ is a scaled $\ell_p$ ball. The family $P_{G, p}$ induces random linear functions $\inprod{s}{\cdot}$ with gradient $s$, and hence $\prob \in P_{G, p}$ implies that the induced function $\E_{s \sim \prob}\inprod{s}{\cdot}$ has a bounded-variance gradient oracle in the $\ell_p$ norm via queries to $\prob$. We use the following facts from prior work in our proofs.

\begin{lemma}[Section 5.1, \cite{DJWW15}]\label{lem:boundbyham}
Let $\xset$ be the $\ell_p$ ball of diameter $D$ for $p \in [1, 2]$. For any $v \in \cV$ and $x \in \xset$, letting $x^\star_v \defeq \min_{x \in \xset} \E_{s \sim \prob_v} f(x; s)$, and letting $\1(\sign(a)= \sign(b))$ be the $0$-$1$ function which is $1$ if and only if the signs of $a$ and $b$ agree,
\[\E_{s \sim \prob_v}\Brack{f(x; s)} - \E_{s \sim \prob_v}\Brack{f(x^\star_v; s)} \ge \frac{(1 - \frac 1 p)\kappa D}{2d^{\frac 1 p}} \sum_{j \in [d]} \1\Par{\sign(x_j) = \sign(v_j)}.\]
\end{lemma}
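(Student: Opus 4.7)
The plan is to reduce the target inequality to a one-variable scalar estimate in two steps: identifying the two expectations in closed form, and then absorbing the sign pattern of $v$ by a diagonal change of variables. First, since $f(x;s) = \inprod{s}{x}$ is linear in $s$ and $s \sim \prob_v$ has mean $\kappa v$, I would write $\E_{s \sim \prob_v}[f(x;s)] = \kappa \inprod{v}{x}$. By H\"older duality, the minimum of the linear functional $\inprod{v}{\cdot}$ over the $\ell_p$ ball of diameter $D$ equals $-\tfrac{D}{2}\norm{v}_q$, and since $v \in \{-1,1\}^d$ we have $\norm{v}_q = d^{1/q}$. Thus $\E_{s\sim\prob_v}[f(x^\star_v;s)] = -\tfrac{1}{2}\kappa D d^{1/q}$, and using $1-\tfrac1p = \tfrac1q$ the inequality I must prove becomes
\[\inprod{v}{x} + \tfrac{1}{2} D d^{1/q} \ge \tfrac{D}{2q\, d^{1/p}} \Abs{S}, \qquad S \defeq \{j \in [d] : \sign(x_j) = \sign(v_j)\}.\]

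Next I would substitute $y_j \defeq v_j x_j$, which preserves the $\ell_p$ norm (so $\norm{y}_p \le D/2$) and identifies $S$ with the coordinates where $y_j > 0$; coordinates with $x_j = 0$ are harmlessly absorbed since they contribute to neither side. Under this substitution $\inprod{v}{x} = \sum_j y_j$, and I would split the sum into its nonnegative and negative parts: $\sum_{j \in S} y_j \ge 0$ trivially, while by H\"older's inequality applied to the complementary coordinates,
\[\sum_{j \notin S} y_j \ge -\norm{y_{S^c}}_1 \ge -(d - \Abs{S})^{1/q}\norm{y_{S^c}}_p \ge -(d - \Abs{S})^{1/q} \cdot \tfrac{D}{2}.\]
Substituting this bound back, the claim reduces to the purely scalar estimate
\[d^{1/q} - (d - \Abs{S})^{1/q} \ge \tfrac{\Abs{S}}{q\, d^{1/p}}.\]

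Setting $s = \Abs{S}/d \in [0,1]$ and dividing by $d^{1/q}$, this is equivalent to $1 - (1-s)^{1/q} \ge s/q$. I would finish by invoking concavity of $t \mapsto t^{1/q}$: its tangent line at $t = 1$ yields $(1-s)^{1/q} \le 1 - s/q$, giving the inequality immediately. I do not anticipate a serious obstacle; the only care required is keeping the $p$-versus-$q$ exponent bookkeeping straight and verifying that H\"older on the complementary coordinates is the right (and tight enough) estimate for the negative part of $\sum_j y_j$ to be absorbed by the constant $\tfrac{D d^{1/q}}{2}$ coming from $x^\star_v$.
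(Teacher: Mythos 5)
Your proof is correct. Each step checks out: the reduction to $\inprod{v}{x} + \tfrac{D}{2}d^{1/q} \ge \tfrac{D}{2q\,d^{1/p}}\Abs{S}$ is right (the minimum value over the ball of radius $D/2$ is indeed $-\kappa\tfrac{D}{2}d^{1/q}$), the change of variables $y_j = v_j x_j$ correctly identifies $S$ with $\{j : y_j > 0\}$ while leaving $\norm{\cdot}_p$ invariant, H\"older on $S^c$ gives $\sum_{j\notin S}y_j \ge -(d-\Abs{S})^{1/q}\tfrac{D}{2}$, and the scalar inequality $1-(1-s)^{1/q}\ge s/q$ follows from the tangent-line bound for the concave function $t\mapsto t^{1/q}$; the exponent bookkeeping ($d^{1/p}d^{1/q}=d$) is handled correctly.

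For context: the paper states this lemma as a citation to Section 5.1 of \cite{DJWW15} and does not reproduce a proof, so there is no in-paper argument to compare against. Your derivation is a valid self-contained proof. One stylistic alternative (closer to how such lemmas are often proved in that literature) is to work coordinatewise using the explicit minimizer $(x^\star_v)_j = -\tfrac{D}{2d^{1/p}}v_j$ and bound $v_j(x_j-(x^\star_v)_j)$ per coordinate, but that route needs extra care precisely because coordinates with $j\notin S$ can contribute negatively; your global split via H\"older plus a single concavity bound sidesteps that issue cleanly. The only cosmetic nit: the lemma statement in the paper uses $\min$ where $\argmin$ is meant for $x^\star_v$, which you correctly read through.
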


Lemma~\ref{lem:boundbyham} shows that it suffices to lower bound the expected Hamming distance between the signs of an estimate $x$ and a randomly sampled $-v$. Such a lower bound was given in \cite{DJWW15, GLL22} for estimates returned by $\alg \in \bbA_k \cup \bbB_k$ via information-theoretic arguments.

\begin{lemma}[Section 5.1, \cite{DJWW15}, Lemma 7.4, \cite{GLL22}]\label{lem:hamlb}
Let $\xset$ be the $\ell_p$ ball of diameter $D$, and let $\alg \in \bbA_k \cup \bbB_k$ be parameterized by $\xset$ and $\prob_v$. Then
\[\E_{v \sim_{\textup{unif.}} \cV}\Brack{\sum_{j \in [d]}\1(\sign(\alg(\prob_v)_j) = \sign(v_j)) } \ge \frac d 2 \Par{1 - \frac{\kappa \sqrt k}{\sigma \sqrt d} }.\]
\end{lemma}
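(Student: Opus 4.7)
My plan is a standard Assouad-style information-theoretic argument. I view $\hat v_j \defeq \sign(\alg(\prob_v)_j)$ as an estimator of $v_j$, and let $Q_v$ denote the joint distribution over $\alg$'s query/answer transcript under parameter $v$. For each $j \in [d]$, define the mixture $Q_\pm^{(j)} \defeq \E_{v_{-j} \sim_{\textup{unif.}} \{\pm 1\}^{d-1}}[Q_{(v_{-j}, \pm 1)}]$, the transcript distribution conditional on $v_j = \pm 1$. First I would establish a per-coordinate Le Cam identity
\[
\Pr_{v \sim_{\textup{unif.}} \cV}\Brack{\hat v_j = v_j} = \half + \half\Par{Q_+^{(j)}[\hat v_j = 1] - Q_-^{(j)}[\hat v_j = 1]} \ge \half - \half\tvd{Q_+^{(j)}}{Q_-^{(j)}},
\]
where the inequality uses the data processing inequality (since $\hat v_j$ is a measurable function of the transcript) to pass from the TV of $\alg$'s output to the TV of its transcript. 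Summing over $j \in [d]$ reduces the lemma to upper bounding $\sum_j \tvd{Q_+^{(j)}}{Q_-^{(j)}}$.

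Next I would control this sum through KL divergence. For $v, v'$ differing only in coordinate $j$, each query's conditional answer under $v$ is Gaussian with mean $\kappa\inprods{v}{x_i}$ and variance $\sigma^2\norm{x_i}_2^2$, so the chain rule of KL applied to the adaptive transcript yields
\[
D_{KL}(Q_v \| Q_{v'}) = \E_{Q_v}\Brack{\sum_{i=1}^k \frac{\kappa^2 \inprods{v-v'}{x_i}^2}{2\sigma^2 \norm{x_i}_2^2}} = \E_{Q_v}\Brack{\sum_{i=1}^k \frac{2\kappa^2 x_{i,j}^2}{\sigma^2 \norm{x_i}_2^2}},
\]
since $v - v' = \pm 2e_j$. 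The critical observation is the telescoping identity $\sum_{j \in [d]} D_{KL}(Q_v\|Q_{v^{(j)}}) = 2k\kappa^2/\sigma^2$, where $v^{(j)}$ denotes $v$ with coordinate $j$ flipped, which follows because $\sum_j x_{i,j}^2 = \norm{x_i}_2^2$ for every query. Joint convexity of KL lifts this to $\sum_j D_{KL}(Q_+^{(j)}\|Q_-^{(j)}) \le 2k\kappa^2/\sigma^2$, and then Pinsker followed by Cauchy--Schwarz gives
\[
\sum_{j \in [d]} \tvd{Q_+^{(j)}}{Q_-^{(j)}} \le \sqrt{d \sum_j \tvd{Q_+^{(j)}}{Q_-^{(j)}}^2} \le \sqrt{\half d \sum_j D_{KL}(Q_+^{(j)} \| Q_-^{(j)})} \le \frac{\kappa\sqrt{dk}}{\sigma},
\]
which upon substitution into the first display yields the claimed bound.

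The main obstacle I anticipate is bookkeeping the adaptivity of queries in the chain-rule step, since $x_i$ depends on the history and is a random variable under $Q_v$; the per-query KL must be written carefully as a conditional expectation over the history rather than for a fixed query sequence. Handling $\bbB_k$ requires slight additional care when the algorithm revisits the same $s_j$ across multiple queries, since then the conditional distribution of $y_i$ given history is a projected Gaussian with reduced variance; however, a direct calculation shows that the per-query contribution summed over $j \in [d]$ is still exactly $2\kappa^2/\sigma^2$ (the orthogonal projection onto the complement of past query directions preserves the coordinate-telescoping $\sum_j \tilde x_{i,j}^2 = \norm{\tilde x_i}_2^2$), so the Assouad bound applies uniformly to $\bbA_k \cup \bbB_k$.
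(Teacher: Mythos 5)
This lemma is cited from \cite{DJWW15} and \cite{GLL22}; the paper does not reprove it. Your reconstruction is the standard Assouad argument, which is exactly what those references use: per-coordinate Le Cam reduction, KL chain rule over the adaptive transcript, the telescoping identity $\sum_{j} x_{i,j}^2 = \norm{x_i}_2^2$, Pinsker, and Cauchy--Schwarz, and the final constant matches. Your treatment of $\bbB_k$ via conditional Gaussian projections is also the right idea: if $\tilde x_i = (I - P_{i-1})x_i$ is the component of the query orthogonal to the directions already observed for that sample, then the conditional observation is $\Nor(\cdot, \sigma^2\norm{\tilde x_i}_2^2)$ with mean shift $2\kappa\tilde x_{i,j}$ under a flip of $v_j$, and $\sum_j \tilde x_{i,j}^2 = \norm{\tilde x_i}_2^2$ still telescopes, so the per-query contribution summed over $j$ remains $2\kappa^2/\sigma^2$.

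One step is stated too quickly. The claim that ``joint convexity of KL lifts this to $\sum_j \dkl(Q_+^{(j)}\|Q_-^{(j)}) \le 2k\kappa^2/\sigma^2$'' does not follow directly, because $\dkl$ is asymmetric: joint convexity gives only $\dkl(Q_+^{(j)}\|Q_-^{(j)}) \le \E_{v_{-j}}\dkl(Q_{(v_{-j},1)}\|Q_{(v_{-j},-1)})$, and summing this over $j$ mixes different conditioning sets (for each $j$, the average is over $v$ with $v_j = +1$ only), so it is not obviously bounded by the per-$v$ identity $\sum_j \dkl(Q_v\|Q_{v^{(j)}}) = 2k\kappa^2/\sigma^2$. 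The clean fix exploits symmetry of total variation, e.g.\ $\tvd{P}{Q}^2 \le \tfrac14\Par{\dkl(P\|Q)+\dkl(Q\|P)}$, from which
\[\sum_j \tvd{Q_+^{(j)}}{Q_-^{(j)}}^2 \le \sum_j \E_{v_{-j}}\tvd{Q_{(v_{-j},1)}}{Q_{(v_{-j},-1)}}^2 \le \tfrac12\,\E_{v}\sum_j \dkl\Par{Q_v\|Q_{v^{(j)}}} = \frac{k\kappa^2}{\sigma^2},\]
and Cauchy--Schwarz then gives $\sum_j \tvd{Q_+^{(j)}}{Q_-^{(j)}} \le \kappa\sqrt{dk}/\sigma$, yielding the same final bound. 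This is a minor patch, not a change in approach.
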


To lower bound the oracle query complexity of our sampler we use the following standard result.

\begin{lemma}[\cite{DKL18}, Corollary 1]
	\label{lem:sampling_error_d}
	Let $\xset \subset \R^d$ be compact and convex, $f: \xset \to \R$ be convex, $k > 0$, and $\pi$ be the density over $\xset$ proportional to $\exp(-kf)$. Then,
	\begin{align*}
		\E_{x\sim\pi}[f(x)]-\min_{x\in\xset}f(x)\le \frac d k.
	\end{align*}
\end{lemma}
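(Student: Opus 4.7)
Without loss of generality I would first normalize so that $\min_{x \in \xset} f(x) = 0$, attained at some $x^\star \in \xset$. Let $F(t) \defeq \Vol(\{x \in \xset : f(x) \le t\})$ denote the sublevel-set volume, and let $Z(k) \defeq \int_{\xset} \exp(-kf(x))\,\mathrm{d}x$ be the partition function. The first key observation would be the following convex scaling lemma: for any $\alpha \in (0,1]$ and $t \ge 0$, the affine map $x \mapsto x^\star + \alpha(x - x^\star)$ sends the $t$-sublevel set of $f$ into the $\alpha t$-sublevel set (by convexity and $f(x^\star)=0$), and has Jacobian $\alpha^d$. Hence $\Vol(\{f \le \alpha t\}) \ge \alpha^d \Vol(\{f \le t\})$, which, after setting $s = \alpha t$, says exactly that $t \mapsto F(t)/t^d$ is non-increasing on $(0, \infty)$.

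The plan from here is to reformulate the target bound $\E_{x \sim \pi}[f(x)] \le d/k$ as a monotonicity statement for $Z(k)$. Since $\E_{x \sim \pi}[f(x)] = -\frac{\mathrm{d}}{\mathrm{d}k} \log Z(k)$, the inequality is equivalent to $\frac{\mathrm{d}}{\mathrm{d}k}\left(\log Z(k) + d \log k\right) \ge 0$, i.e.\ that $k \mapsto k^d Z(k)$ is non-decreasing. To see this, use the layer-cake identity $Z(k) = k \int_0^\infty \exp(-kt) F(t)\,\mathrm{d}t$ and the substitution $s = kt$ to get
\begin{equation*}
k^d Z(k) = \int_0^\infty \exp(-s)\,\frac{k^d F(s/k)}{1}\,\mathrm{d}s = \int_0^\infty \exp(-s)\,s^d \cdot \frac{F(s/k)}{(s/k)^d}\,\mathrm{d}s.
\end{equation*}
For fixed $s > 0$, as $k$ increases the argument $t = s/k$ decreases, so by the scaling lemma $F(t)/t^d$ only grows; thus the integrand is pointwise non-decreasing in $k$, proving $k^d Z(k)$ is non-decreasing and yielding the claim.

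An equivalent, slightly more hands-on route (which I would include if the monotonicity argument above needs elaboration) is to integrate by parts to obtain
\begin{equation*}
k \cdot \E_{x \sim \pi}[f(x)] = \frac{\int_0^\infty (s-1)\exp(-s)\,F(s/k)\,\mathrm{d}s}{\int_0^\infty \exp(-s)\,F(s/k)\,\mathrm{d}s},
\end{equation*}
then write $F(s/k) = (s/k)^d \cdot G(s)$ where $G(s) \defeq F(s/k)/(s/k)^d$ is non-increasing in $s$ by the scaling lemma. Weighting the Gamma$(d+1,1)$ reference measure $\exp(-s)s^d \mathrm{d}s$ by $G$ can only shift mass toward smaller $s$, so by Chebyshev's sum inequality the expectation of the increasing function $s \mapsto s-1$ under the re-weighted measure is bounded by its expectation under Gamma$(d+1,1)$, namely $d$. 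This directly gives $\E_{x \sim \pi}[f(x)] \le d/k$.

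The only nontrivial step is the scaling lemma; once $F(t)/t^d$ is identified as non-increasing, everything else is routine Laplace-transform bookkeeping. Subtleties I would handle carefully: the behavior of $F$ near $t = 0$ (which is fine since the $\exp(-s)s^d$ weight vanishes there), the boundary term in the integration by parts (it vanishes because $F$ is bounded by $\Vol(\xset)$ and $t \exp(-kt) \to 0$ as $t \to \infty$), and the case where $x^\star$ lies on the boundary of $\xset$ (which only shrinks $\xset$ after the contraction $x \mapsto x^\star + \alpha(x - x^\star)$, so the scaling inequality is unaffected).
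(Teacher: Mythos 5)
The paper does not prove this lemma; it cites it verbatim from [DKL18] (Corollary 1), so there is no in-paper argument to compare against. Your proof is correct and follows the standard route for this type of bound. The scaling lemma is the right key step: the contraction $x\mapsto (1-\alpha)x^\star+\alpha x$ maps $\{f\le t\}\cap\xset$ into $\{f\le \alpha t\}\cap\xset$ (using convexity of both $f$ and $\xset$) with Jacobian $\alpha^d$, giving $F(\alpha t)\ge \alpha^d F(t)$ and hence monotonicity of $t\mapsto F(t)/t^d$. Reducing the target to monotonicity of $k\mapsto k^d Z(k)$ via $\E_{x\sim\pi}[f(x)]=-(\log Z)'(k)$ and the layer-cake identity $Z(k)=\int_0^\infty e^{-s}F(s/k)\,\mathrm{d}s$ is clean, and the pointwise-in-$s$ monotonicity of the integrand in $k$ is exactly what is needed. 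Your alternative via the Chebyshev correlation inequality on the Gamma$(d+1,1)$ reference measure also checks out: the weight $G(s)=F(s/k)/(s/k)^d$ is non-increasing in $s$, so the $G$-reweighted mean of the increasing function $s\mapsto s-1$ is at most its unweighted mean, namely $d$. One cosmetic remark: the identity
\begin{equation*}
k\,\E_{x\sim\pi}[f(x)] \;=\; \frac{\int_0^\infty (s-1)e^{-s}F(s/k)\,\mathrm{d}s}{\int_0^\infty e^{-s}F(s/k)\,\mathrm{d}s}
\end{equation*}
does not really require integration by parts; it follows directly by differentiating $Z(k)=k\int_0^\infty e^{-kt}F(t)\,\mathrm{d}t$ under the integral sign and substituting $s=kt$, which also sidesteps any concern about whether the merely monotone function $F$ is differentiable.
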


\paragraph{Lower bounds.} We now state three lower bounds generalizing results from \cite{DJWW15, GLL22}. Our results follow straightforwardly from Lemmas~\ref{lem:boundbyham},~\ref{lem:hamlb}, and~\ref{lem:sampling_error_d} with appropriate parameters. 

\begin{proposition}[Minimax risk lower bound, $P_{G, p}$]\label{prop:mmrisk}
Let $G, D > 0$, and let $p \in [1, 2]$, $q \ge 2$ satisfy $\frac 1 p + \frac 1 q = 1$. Let $\xset$ be the $\ell_p$ ball of diameter $D$. Then,
\[\eps_k^\star \Par{\bbA_k \cup \bbB_k, P_{G, p}, \xset} = \Omega\Par{GD\max\Par{1 - \frac 1 p, \frac{1}{\log d}}\min\Par{1, \sqrt{\frac d {k\log d}}}}.\]
\end{proposition}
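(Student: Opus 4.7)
The plan is to instantiate the Gaussian hard-instance framework captured by Lemmas~\ref{lem:boundbyham} and~\ref{lem:hamlb}. Draw $v$ uniformly from $\cV = \{-1, 1\}^d$, let $\prob_v = \Nor(\kappa v, \sigma^2 \id)$, and use the linear loss $f(x; s) = \inprod{s}{x}$, with $\kappa, \sigma > 0$ to be tuned. The main technical prerequisite is verifying $\prob_v \in P_{G, p}$, i.e., $\E_{s \sim \prob_v}\norm{s}_q^2 \le G^2$. Since $t \mapsto t^{q/2}$ is convex on $\R_{\ge 0}$ for $q \ge 2$, Jensen's inequality gives $(\E\norm{s}_q^2)^{q/2} \le \E\norm{s}_q^q = \sum_i \E|s_i|^q$; combined with the elementary $\E|s_i|^q = O(\kappa^q + (\sigma\sqrt q)^q)$ (using $\E|Z|^q = O(q^{q/2})$ for $Z \sim \Nor(0,1)$), this yields
\[\E_{s \sim \prob_v}\norm{s}_q^2 = O\Par{d^{2/q}\Par{\sigma^2 q + \kappa^2}}.\]
Taking $\sigma = cG/(d^{1/q}\sqrt{q})$ and $\kappa = c'\sigma\sqrt{d/k}$ for sufficiently small absolute constants $c, c'$ then bounds the right-hand side by $G^2$ in the non-trivial regime $k \gtrsim d/q$, and simultaneously enforces the hypothesis $\kappa\sqrt{k}/(\sigma\sqrt{d}) \le 1/2$ of Lemma~\ref{lem:hamlb}.

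Under these parameter choices, Lemma~\ref{lem:hamlb} implies expected sign-agreement at least $d/4$, and plugging into Lemma~\ref{lem:boundbyham} gives, for any $\alg \in \bbA_k \cup \bbB_k$,
\[\sup_{\prob \in P_{G, p}} \eps_k(\alg, \xset, \prob) \ge \E_v\Brack{\eps_k(\alg, \xset, \prob_v)} = \Omega\Par{(1-1/p)\kappa D d^{1/q}} = \Omega\Par{(1-1/p)GD\sqrt{d/(qk)}}.\]
For $p \ge 1 + 1/\log d$, we have $q = p/(p-1) = O(\log d)$ and $1 - 1/p = \Theta(\max(1-1/p, 1/\log d))$, so $\sqrt{d/(qk)} = \Omega(\sqrt{d/(k\log d)})$ and the bound matches the claim up to constants. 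The $\min(1,\cdot)$ factor comes from the trivial estimate $\eps_k^\star \le O(GD)$, which follows from $|\inprod{s}{x}| \le \norm{s}_q\norm{x}_p$ and Jensen.

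The remaining range $p \in [1, 1+1/\log d]$ is handled by a norm-equivalence reduction to $p' \defeq 1 + 1/\log d$. Setting $c \defeq d^{1/p - 1/p'} \in [1, e]$, a H\"older-based computation shows $\min_{x \in B_{p'}(D/(2c))} \kappa\inprod{v}{x} = \min_{x \in B_p(D/2)} \kappa\inprod{v}{x}$, because both equal $-\kappa D d^{1/q}/2$ once one uses the identity $(D/(2c)) d^{1/q'} = (D/2)d^{1/q}$. Combining $B_{p'}(D/(2c)) \subseteq \xset$ with $P_{G, p'} \subseteq P_{G, p}$ (which follows from $\norm{\cdot}_{q'} \ge \norm{\cdot}_q$ for $q' \le q$), the $p'$-lower bound $\Omega(GD/\log d \cdot \sqrt{d/(k\log d)})$ transfers to give the claim for $p$. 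The main technical obstacle is this transfer: direct application of Lemma~\ref{lem:boundbyham} at $p = 1$ gives a vacuous bound, so one must pair the hard instance from $P_{G, p'}$ with the matching radius $D/(2c)$ and either project an arbitrary $\alg$ for $\xset$ onto $B_{p'}(D/(2c))$ without degrading the risk, or reanalyze the linear-loss lower bound on $\xset$ with the $p'$-calibrated instance, in both cases carefully balancing the vanishing $(1-1/p)$ factor against the growing $d^{1/q}$ factor.
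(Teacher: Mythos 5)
Your core argument (Gaussian hard instance $\prob_v = \Nor(\kappa v, \sigma^2\id)$ with linear loss, combined via Lemmas~\ref{lem:boundbyham} and~\ref{lem:hamlb}) is the same as the paper's. Your moment bound via Jensen and $q$-th Gaussian moments gives $\E\norm{s}_q^2 = O(d^{2/q}(\sigma^2 q + \kappa^2))$, which in the range $q \lesssim \log d$ is comparable to the paper's $\norm{\cdot}_\infty$-concentration bound $\E\norm{s}_q^2 = O(d^{2/q}(\sigma^2\log d + \kappa^2))$, so the main calculation is fine. However, there are two places where your write-up does not close:

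First, your choice $\sigma = cG/(d^{1/q}\sqrt q)$ only enforces $\prob_v\in P_{G,p}$ in the regime $k\gtrsim d/q$, and your sentence ``the $\min(1,\cdot)$ factor comes from the trivial estimate $\eps_k^\star \le O(GD)$'' is backwards: an \emph{upper} bound on $\eps_k^\star$ cannot justify the lower-bound floor. What you actually need is either the monotonicity $\eps_k^\star \ge \eps_{k'}^\star$ for $k\le k'$ applied at $k' = \Theta(d/q)$, or (what the paper does) to take $\sigma = Gd^{-1/q}/\sqrt{d/k + 4\log d}$ so that both terms in the moment constraint are handled simultaneously and the $\min(1,\sqrt{d/(k\log d)})$ falls out in one formula valid for all $k$.

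Second, and more substantively, your handling of $p\in[1,1+1/\log d]$ is not a proof. You correctly observe the ball containment $B_{p'}(D/(2c)) \subseteq \xset$ and that the linear minima match, but then explicitly defer the key step — transferring a lower bound stated for algorithms on $B_{p'}(D/(2c))$ to algorithms whose output can range over the strictly larger $\xset$. This is exactly where the difficulty lies: applying Lemma~\ref{lem:boundbyham} at exponent $p'$ to a point $x\in\xset\subseteq B_{p'}(D/2)$ gives a lower bound against the $\ell_{p'}$-reference $-\kappa(D/2)d^{1/q'}$, whereas the $\ell_p$-excess risk uses the reference $-\kappa(D/2)d^{1/q}$, and the shift $\kappa(D/2)(d^{1/q'}-d^{1/q})$ is comparable to the gain $(1-1/p')\kappa D d^{1/q'}$ you are trying to extract, so the naive subtraction yields something nonpositive. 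The paper's own one-line remark (``choosing a larger $p$ only affects problem parameters by constant factors'') also elides this, so you are not losing ground relative to the source, but your proposal as written neither completes the reduction nor indicates a workable route (e.g.\ a different hard instance for $p$ near $1$).
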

\begin{proof}
Throughout the proof, let $\kappa = \frac{\sigma \sqrt d}{2 \sqrt k}$, and let
\begin{equation}\label{eq:sigmachoice}\sigma = \frac{G d^{- \frac 1 q}}{\sqrt{\frac d k + 4\log d}}.\end{equation}
By well-known bounds on the expected maximum of $d$ standard Gaussians, we have
\begin{align*}
	\E_{s \sim \prob_v}\Brack{\norm{s}_q^2} &\le 2\kappa^2 \norm{v}_q^2 + 2\E_{u \sim \Nor(0, \sigma^2 \id_d)}\Brack{\norm{u}_q^2} \\
	&\le 2\kappa^2 d^{\frac 2 q} + 2d^{\frac 2 q} \E_{u \sim \Nor(0, \sigma^2 \id_d)}\Brack{\norm{u}_\infty^2} \\
	&\le \sigma^2 d^{\frac 2 q} \Par{\frac d k + 4\log d} \le G^2.
\end{align*}
Hence, $\prob_v \in P_{G, p}$ for all $v \in \cV$, so it suffices to lower bound $\eps_k(\alg, \prob_v, \xset)$. Combining Lemmas~\ref{lem:boundbyham} and~\ref{lem:hamlb} with our choices of parameters,
\[\eps_k(\alg, \prob_v, \xset) \ge \frac{(1 - \frac 1 p)\kappa D d^{1 - \frac 1 p}}{8} = \Omega\Par{GD\Par{1 - \frac 1 p}\min\Par{1, \sqrt{\frac d {k\log d}}}}.\]
The conclusion then follows because for $p \le 1 + \frac 1 {\log d}$, choosing a larger value of $p$ only affects problem parameters by constant factors by norm conversions.
\end{proof}
We give a slight extension of Proposition~\ref{prop:mmrisk} for the family $\bP_{G, p}$ of distributions over linear functions $\inprod{s}{\cdot}$, where $s$ is required to satisfy $\norm{s}_q \le G$ with probability $1$, by simply truncating a draw from $\prob_v$. This family is compatible with the setting in Problem~\ref{prob:dpco}.

\begin{corollary}[Minimax risk lower bound, $\bP_{G, p}$]\label{cor:mmrisk_truncate}
In the setting of Proposition~\ref{prop:mmrisk}, 
\[\eps_k^\star \Par{\bbA_k \cup \bbB_k, \bP_{G, p}, \xset} = \Omega\Par{GD\max\Par{1 - \frac 1 p, \frac 1 {\log d}}\min\Par{1, \sqrt{\frac d {k\log (dk)}}}}.\]
\end{corollary}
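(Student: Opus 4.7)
The plan is to reduce the problem to Proposition~\ref{prop:mmrisk} by showing that the hard family $\{\prob_v\}_{v \in \cV}$ used there can be replaced by a truncated variant $\{\bP_v\}_{v \in \cV} \subset \bP_{G,p}$ such that the joint distribution of $k$ adaptive samples is essentially unchanged. Concretely, for each $v$, define $\bP_v$ to be the law of $s \sim \Nor(\kappa v, \sigma^2 \id_d)$ conditioned on $\norm{s}_q \le G$, which trivially lies in $\bP_{G,p}$. I would then argue that for the parameter choices below,
\begin{equation*}
\tvd{\prob_v^{\otimes k}}{\bP_v^{\otimes k}} \le \tfrac{1}{10},
\end{equation*}
so that for any algorithm $\alg \in \bbA_k \cup \bbB_k$ achieving minimax risk $\eps$ on $\bP_{G,p}$, the same $\alg$ (treated as interacting with $\prob_v$) achieves risk $\eps + O(GD) \cdot \tfrac 1 {10}$ on the Gaussian family. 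Plugging this into the conclusion of Proposition~\ref{prop:mmrisk} with its lower bound translated via Lemmas~\ref{lem:boundbyham} and~\ref{lem:hamlb} then yields the claimed bound, modulo tracking the $\log(dk)$ factor.

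The key technical step is the coupling bound, which I would establish by controlling $\Pr_{s \sim \prob_v}[\norm{s}_q > G]$ and taking a union bound over $k$ queries. Here I would use that for $q \ge 2$ the map $s \mapsto \norm{s}_q$ is $1$-Lipschitz in the $\ell_2$ norm (since $\norm{s}_q \le \norm{s}_2$), so by Gaussian concentration for $s \sim \Nor(\kappa v, \sigma^2 \id_d)$,
\begin{equation*}
\Pr\Brack{\norm{s}_q > \E\norm{s}_q + t} \le \exp\Par{-\frac{t^2}{2\sigma^2}},
\end{equation*}
while $\E \norm{s}_q \le \kappa d^{1/q} + O(\sigma d^{1/q} \sqrt{\log d})$ by the sub-Gaussian maximum bound. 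Choosing $\sigma = \Theta(G d^{-1/q}/\sqrt{d/k + \log(dk)})$ and $\kappa = \sigma \sqrt d/(2\sqrt k)$ as in Proposition~\ref{prop:mmrisk} but with $\log d$ replaced by $\log(dk)$, the tail beyond $G$ contributes at most $1/(100k)$, so a union bound over $k$ adaptive queries gives the desired total variation estimate.

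Having established the coupling, the remainder of the argument mirrors Proposition~\ref{prop:mmrisk}: Lemma~\ref{lem:hamlb} gives a Hamming-distance lower bound on any algorithm's output when fed samples from $\prob_v$, Lemma~\ref{lem:boundbyham} converts this to an excess-risk lower bound over the $\ell_p$ ball, and the small TV perturbation only degrades the bound by a constant factor. The main obstacle is ensuring the concentration estimate is tight enough so that only a single extra $\log k$ factor appears (as opposed to $\log(dk)$ appearing in additional places), which requires being careful in separating the contribution of $\E \norm{s}_q$ (a $\log d$ term) from the concentration slack (a $\log k$ term) when optimizing $\sigma$.
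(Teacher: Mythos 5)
Your overall strategy — replace the Gaussian family $\{\prob_v\}$ with a truncated or conditioned variant living in $\bP_{G,p}$, couple the $k$ samples, and translate Proposition~\ref{prop:mmrisk} across — is the same one the paper uses, and the parameter adjustment ($\log d \to \log(dk)$ in the choice of $\sigma$) is correct. The minor variant of conditioning on $\norm{s}_q \le G$ rather than (as in the paper) replacing $s$ with $0$ when $\norm{s}_q \ge G$ is harmless, since both produce a distribution supported in the $\ell_q$ ball of radius $G$.

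However, the quantitative target $\tvd{\prob_v^{\otimes k}}{\bP_v^{\otimes k}} \le \tfrac 1{10}$ is a genuine gap: a constant TV bound is not small enough. As you correctly argue, the algorithm's excess risk on the Gaussian family is within $O(GD)\cdot \tvd{\prob_v^{\otimes k}}{\bP_v^{\otimes k}}$ of its excess risk on the truncated family. But the lower bound you want to prove has magnitude $\Theta\bigl(GD\max(1-1/p,\,1/\log d)\min(1,\sqrt{d/(k\log(dk))})\bigr)$, which in relevant regimes (say $p$ near $1$ and $k\gg d$) is as small as $\Theta(GD/(\log d\,\sqrt{k\log(dk)}))$ — far below $GD/10$. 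So an $O(GD/10)$ perturbation would swamp the bound you are trying to establish, and you would conclude nothing. You need the coupling failure probability to be $\operatorname{poly}((dk)^{-1})$ (as the paper claims), not $1/10$; this is achievable from your Gaussian concentration step by taking the constant in the $\Theta(\log(dk))$ slack sufficiently large, so that the per-sample tail $\Pr[\norm{s}_q > G]$ is $(dk)^{-c}$ for a large enough $c$, and the union bound over $k$ draws then yields a negligibly small TV. With that repair, your argument matches the paper's.
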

\begin{proof}
We define a distribution $\overline{\prob}_v$ as follows: first $s \sim \prob_v$, and then if $\norm{s}_q \ge G$, we set $s \gets 0$. By adjusting the logarithmic term in \eqref{eq:sigmachoice} to be $O(\log(dk))$, with probability at most $\text{poly}((dk)^{-1})$, all $k$ draws from $\prob_v$ and $\overline{\prob}_v$ used are identical by a union bound. Further, due to problem constraints the function error is always at most $GD$. So, the risk is affected by at most $GD \cdot \text{poly}((dk)^{-1})$.
\end{proof}

Corollary~\ref{cor:mmrisk_truncate} shows that when $\beta$ in Assumption~\ref{assume:warmstart} is polynomially bounded, the value oracle complexities used by Theorem~\ref{thm:lp} for both DP-SCO and DP-ERM are near-optimal up to a quadratic factor overhead for the expected excess risk bounds they produce, even without requiring privacy. We believe this result shows the promise of our new framework, and provides strong motivation for closing the gaps in our mixing time bounds stated in Section~\ref{sec:future}.

Finally, we show that the value oracle complexity of our sampler in Theorem~\ref{thm:mainllt} is also near-optimal up to a quadratic factor. Again, removing this quadratic overhead would yield an optimal framework for non-Euclidean proximal sampling up to logarithmic factors, in light of Corollary~\ref{cor:sample_lowerbound}.

\begin{corollary}\label{cor:sample_lowerbound}
In the setting of Proposition~\ref{prop:mmrisk}, let $r: \xset \to \R$ be $1$-strongly convex in $\norm{\cdot}_p$ with additive range $O(D^2 \min(\log d, \frac 1 {p - 1}))$. Let $\ind$ be a distribution over $i$ such that all $f_i: \xset \to \R$ are $G$-Lipschitz in $\norm{\cdot}_p$, and let $F \defeq \E_{i \sim \ind} f_i$. No algorithm using $o(\frac{G^2}{\mu} \log^{-4} d)$ value oracle queries to some $f_i$ samples within total variation \[o\Par{\min\Par{\frac 1{\log d}, \sqrt{\frac{d}{k\log^3(dk)}}}}\]
of the density proportional to $\exp(-F - \mu r(x)) \1_{\xset}(x)$.
\end{corollary}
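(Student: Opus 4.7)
The plan is to derive this lower bound by a reduction to stochastic convex optimization (SCO) and then invoking Corollary~\ref{cor:mmrisk_truncate}. First I would rescale the domain so that $D = 1$, absorbing the diameter into $G$, $\mu$, and the range bound for $r$, which remains $O(\log d)$. I then instantiate the hard family from Corollary~\ref{cor:mmrisk_truncate}: draw $v$ uniformly from $\cV = \{\pm 1\}^d$, let each $f_i(x) = \inprod{s_i}{x}$ with $s_i$ sampled from the truncated Gaussian $\overline{\prob}_v \in \bP_{G,p}$ so that every $f_i$ is $G$-Lipschitz in $\norm{\cdot}_p$ with probability one, and set $F_v \defeq \E_{s \sim \overline{\prob}_v}[\inprod{s}{\cdot}]$ and $\pi_v \propto \exp(-F_v - \mu r)\1_{\xset}$.

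Suppose, for contradiction, that an algorithm $\alg$ uses $k = o(G^2/(\mu \log^4 d))$ value queries and samples within total variation $o(\min(1/\log d, \sqrt{d/(k\log^3(dk))}))$ of $\pi_v$. I view $\alg$ as an element of $\bbA_k \cup \bbB_k$ for the SCO problem associated with $\overline{\prob}_v$, and bound its expected excess risk in two parts. Applying Lemma~\ref{lem:sampling_error_d} at inverse temperature $1$ to the potential $F_v + \mu r$, together with the $O(\log d)$ range bound on $r$, controls the intrinsic excess risk under $\pi_v$; the total-variation slack contributes an additional $\delta \cdot \mathrm{range}_{\xset}(F_v) \le O(\delta G)$. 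The net bound is
\begin{equation*}
\E_{\alg}\Brack{F_v(\tilde x)} - \min_{\xset} F_v \;\le\; d + O(\mu \log d) + O(\delta G).
\end{equation*}

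On the other hand, Corollary~\ref{cor:mmrisk_truncate} forces this same quantity to be $\Omega((G/\log d)\min(1, \sqrt{d/(k \log(dk))}))$, where I invoke $\max(1 - \tfrac{1}{p}, \tfrac{1}{\log d}) \ge \tfrac{1}{\log d}$. Rearranging these two bounds gives
\begin{equation*}
\delta \;\ge\; \Omega\Par{\tfrac{1}{\log d}\min\Par{1, \sqrt{\tfrac{d}{k \log(dk)}}}} \;-\; \tfrac{d + O(\mu \log d)}{G}.
\end{equation*}
A short calculation using $\log^2 d \le \log^2(dk)$ shows that the first term dominates the claimed threshold $\min(1/\log d, \sqrt{d/(k \log^3(dk))})$ up to constants, so what remains is to verify that the additive correction $(d + \mu \log d)/G$ is negligible against the same threshold.

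The step I expect to be the main obstacle is this last bookkeeping, which splits into the ``small-$k$'' regime where the $\min$ is attained by $1/\log d$ and the ``large-$k$'' regime where it is attained by $\sqrt{d/(k \log^3(dk))}$. In each regime I would tune the Gaussian scale $\sigma$ (and hence $G$) in $\overline{\prob}_v$ to saturate the Lipschitz constraint $\overline{\prob}_v \in \bP_{G,p}$, which absorbs the additive correction. This casework is parallel in spirit to the parameter tuning performed in the proofs of Proposition~\ref{prop:mmrisk} and Corollary~\ref{cor:mmrisk_truncate}, and I do not expect any substantively new ideas beyond careful constant chasing.
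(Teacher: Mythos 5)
Your high-level plan matches the paper's: instantiate the hard SCO family behind Corollary~\ref{cor:mmrisk_truncate}, pass the hypothetical sampler through Lemma~\ref{lem:sampling_error_d} (at inverse temperature one) plus a total-variation slack to get an excess-risk upper bound, and contradict the SCO lower bound. However, there is a genuine gap in the step you flag as ``careful constant chasing.'' Your excess-risk bound is $d + O(\mu \log d) + O(\delta G)$ with the term $O(\mu \log d)$ left floating, and for this to be dominated by the SCO lower bound one needs $\mu \cdot \mathrm{range}(r)$ to itself be $O(d)$. The paper closes this by an explicit parameter choice you never make: it sets $\mu = \frac{d}{D^2 \min(\log d, \frac{1}{p-1})}$, so that $\mu \cdot \mathrm{range}(r) = O(d)$ and the entire non-TV contribution collapses to $O(d)$ in a single line. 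Without pinning $\mu$ this way, the regularization term is uncontrolled relative to $d$ and relative to the $\Omega\bigl(\frac{GD}{\log d}\min(1, \sqrt{d/(k\log(dk))})\bigr)$ lower bound, and no contradiction follows.

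Your proposed fix --- ``tune the Gaussian scale $\sigma$ (and hence $G$)'' --- is the wrong lever. Adjusting $\sigma$ only rescales the Lipschitz constant $G'$ of the adversarial distribution, and $G' \le G$ is forced by the Lipschitz constraint; making $G'$ smaller only \emph{weakens} the SCO lower bound. It does nothing to tame the $\mu \log d$ term, which is independent of $\sigma$. So this is not bookkeeping but a missing structural choice: the corollary is really a statement for the $\mu$-regime where $\mu \cdot \mathrm{range}(r) \asymp d$ (the regime Proposition~\ref{prop:gennormdp} places us in), and identifying and substituting that $\mu$ is the step that makes the contradiction close. Once you fix $\mu$ as the paper does, the rest of your derivation, including the $\log^3(dk) \ge \log^2 d \cdot \log(dk)$ comparison, goes through essentially as written.
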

\begin{proof}
Assume for contradiction that $\alg$ is an algorithm satisfying the stated criterion using $k = o(\frac{G^2}{\mu} \log^{-4} d)$ value oracle queries, and let $F$ be minimized by $x^\star \in \xset$. We choose 
\[\mu = \frac{d}{D^2\min(\log d, \frac 1 {p - 1})}.\]
 Lemma~\ref{lem:sampling_error_d} then shows that the sampled $x$ satisfies
\begin{align*}\E_{x \sim \alg}\Brack{F(x)} - F(x^\star) &\le \mu\Par{r(x^\star) - r(x)} + d + GD \cdot o\Par{\min\Par{\frac 1{\log d}, \sqrt{\frac{d}{k\log^3(dk)}}}} \\
	&= O(d) + o\Par{\frac{GD}{\log d}\min\Par{1, \sqrt{\frac{d}{k\log(dk)}}}}.\end{align*}
For the given values of $k$ and $\mu$, this contradicts Corollary~\ref{cor:mmrisk_truncate}.
\end{proof}

Corollary~\ref{cor:sample_lowerbound} implies that for samplers with value query complexity depending polylogarithmically on the total variation distance, $\frac{G^2}{\mu}$ queries are required (up to polylogarithmic factors). This applies to the setting of our sampler in Theorem~\ref{thm:mainllt}; we also note that the LLT-based regularizers we use in our $\ell_p$ applications (Section~\ref{ssec:privacy}) satisfy the additive range bound in Corollary~\ref{cor:sample_lowerbound}.

%
\section{Lower bound on the range of $\psi_{1, 1}$}\label{app:sizebound}
In this section, we provide a lower bound on the range of $\psi_{1, 1}$ \eqref{eq:llt_lp} which grows with the dimension $d$, demonstrating non-scale invariance of our family of LLTs. Recall that $\psi_{1, 1}(x)$ is defined by 
\[\psi_{1, 1}(x) \defeq \log\Par{\int \exp\Par{\inprod{x}{y} - \norm{y}_\infty^2}\dd y}. \]

\begin{lemma}
	The additive range of $\psi_{1, 1}$ over $\{x\in \R^d\mid \norm{x}_1\leq 1\}$ is $\Omega(\sqrt d)$.
\end{lemma}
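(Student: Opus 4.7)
The plan is to lower bound $\psi_{1,1}(e_1) - \psi_{1,1}(0)$, which suffices since this is a lower bound on the additive range. By Lemma~\ref{lem:cumulant} and the $y \leftrightarrow -y$ symmetry of the density $\Npsi_0 \propto \exp(-\norm{y}_\infty^2)$, we have $\nabla \psi_{1,1}(0) = \E_{y \sim \Npsi_0}[y] = 0$. Since $\psi_{1,1}$ is convex (its Hessian is the covariance of $\Npsi_x$ by Lemma~\ref{lem:cumulant}), $0$ is a global minimizer, so the additive range over $\Brace{\norm{x}_1 \le 1}$ is at least $\psi_{1,1}(e_1) - \psi_{1,1}(0)$. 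Following the identity in the proof of Lemma~\ref{lem:rangellt}, this gap equals $\log \E_{y \sim \Npsi_0}[\exp(y_1)]$, so it suffices to show this is $\Omega(\sqrt d)$.

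I would restrict the expectation to the event $E \defeq \Brace{y_1 \ge \sqrt{d/2}} \cap \Brace{y_1 = \norm{y}_\infty}$, on which $\exp(y_1) \ge \exp(\sqrt{d/2})$. On $E$, the coordinates $y_2, \ldots, y_d$ lie in $[-y_1, y_1]$, so integrating them out gives
\[
\Pr_{y \sim \Npsi_0}[E] \;=\; \frac{\int_{\sqrt{d/2}}^\infty (2y_1)^{d-1} \exp(-y_1^2)\,dy_1}{\int_{\R^d} \exp(-\norm{y}_\infty^2)\,dy} \;=\; \frac{\int_{\sqrt{d/2}}^\infty y^{d-1}\exp(-y^2)\,dy}{d \int_0^\infty y^{d-1}\exp(-y^2)\,dy},
\]
where the denominator evaluation uses $\Vol(\Brace{\norm{y}_\infty \le r}) = (2r)^d$ to compute the $\ell_\infty$ shell measure, yielding a total mass of $2^{d-1} d\, \Gamma(d/2)$.

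The integrand $f(y) = y^{d-1}\exp(-y^2)$ is log-concave with mode at $y^\star = \sqrt{(d-1)/2}$, and a direct computation gives $(\log f)''(y^\star) = -4$, so its bulk has $O(1)$-scale width. Since $\sqrt{d/2} - \sqrt{(d-1)/2} = O(1/\sqrt d)$ is negligible compared to this width for large $d$, a constant fraction of the total mass of $f$ lies above $\sqrt{d/2}$. Hence $\Pr_{y \sim \Npsi_0}[E] = \Omega(1/d)$, giving
\[
\log \E_{y \sim \Npsi_0}[\exp(y_1)] \;\ge\; \sqrt{d/2} - O(\log d) \;=\; \Omega(\sqrt d),
\]
as desired.

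The only mildly technical step is the constant-fraction mass bound on $\int_{\sqrt{d/2}}^\infty y^{d-1}\exp(-y^2)\,dy$; I would obtain this either by a Gaussian approximation of $f$ near $y^\star$ (the local rescaling makes $f$ look like $\Nor(y^\star, \tfrac 1 4)$, so mass within $O(1/\sqrt d)$ of the mode is $O(1/\sqrt d)$, while mass above $y^\star$ is essentially $\half$), or more elementarily by changing variables $u = y^2$ to turn the integral into a Gamma$(d/2)$ tail probability past its mode, which is a well-known $\Omega(1)$ quantity. I do not expect this to be a serious obstacle.
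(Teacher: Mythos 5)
Your proof is correct and follows essentially the same approach as the paper's: both lower bound $\log\E_{y\sim\Npsi_0}[\exp(y_1)]$ by restricting to the event that $y_1$ is the maximal coordinate and $\norm{y}_\infty$ is large, then exploiting the radial density $\propto r^{d-1}\exp(-r^2)$ of $\norm{y}_\infty$. Two small notes: your displayed probability is off by a constant (the shell measure is $d\,2^d r^{d-1}\,\dd r$, so the denominator should read $2d\int_0^\infty y^{d-1}\exp(-y^2)\,\dd y$), which is harmless; and your threshold $\sqrt{d/2}$ sits essentially at the mode $\sqrt{(d-1)/2}$ of that radial density, so the final ``constant fraction of mass above the threshold'' step genuinely needs the Gamma-tail or Gaussian-approximation estimate you sketch, whereas the paper sidesteps this entirely by choosing the threshold $\tfrac12\sqrt{(d-1)/2}$, where the mass bound follows at once from monotonicity of $r^{d-1}\exp(-r^2)$ on $[0,\sqrt{(d-1)/2}]$, at the cost of only a constant in the exponent.
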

\begin{proof}
Throughout the proof denote for simplicity $\psi := \psi_{1,1}$ and let 
\[\Npsi_x(y) \propto \exp\Par{\inprod{x}{y} - \norm{y}_\infty^2}.\]
Then, following \eqref{eq:psirange}, we can write $\psi(x)- \psi(0)$ as 
\[\psi(x) - \psi(0) = \log \Brack{\E_{y\sim \Npsi_0}\exp (\inprod x y)},\]
where $\Npsi_0 \propto \exp(-\norm{y}_\infty^2)$.
Let $\pi$ be the probability density on $\R_{\ge 0}$ such that 
\[\dd \pi(r) \propto r^{d-1}\exp(-r^2)\dd r.\]
Here, $\dd \pi(r)$ is the density of the scalar quantity $r = \norm{y}_\infty$ for $y\sim \Npsi_0$. Note that the distribution of $y$ conditioned on $\norm{y}_\infty = r$ is uniform over the surface of the $\ell_\infty$ ball, where one random coordinate is set to $\pm r$, and the remaining coordinates are uniform on a $d-1$ dimensional hypercube with side length $r$. We denote this distribution as $\prob_r$, and write
\begin{align*}
\E_{y\sim \Npsi_0}\exp (\inprod x y) &= \E_{r\sim \pi} \Brack{\E_{y\sim \prob_r} \exp (\inprod x y)} \\
& = \E_{r\sim \pi} \Brack{\frac 1 d \sum_{i^\star \in [d]} \frac 1 2\sum_{y_{i^\star}\in \{-r, r\}} \exp(x_{i^\star} y_{i^\star}) \prod_{i\neq i^*}\int_{-r}^r  \frac 1 {2r}\exp ( {x_i} {y_i})\dd y_i} .
\end{align*}
Let $x =e_1$ and $g_{i^\star}^{(r)}=  \exp(x_{i^\star} r) \prod_{i\neq i^\star}\int_{-r}^r  \frac 1 {2r} \exp ( {x_i} {y_i})\dd y_i$. Then, 
\[\E_{y\sim \Npsi_0}\exp (\inprod x y)\geq \frac 1 {2d} \sum_{i^\star \in [d]} \E_{r\sim \pi(r)} g_{i^\star}^{(r)}\]
since this drops terms where $y_{i^\star} = -r$. When $i^\star= 1$, we have $g_{i^\star}^{(r)} = \exp(r)$.  When $i^\star \neq 1$, we have 
\begin{align*}
	g_{i^\star}^{(r)} = \int_{-r}^r \frac 1 {2r}\exp (y_{1})\dd y_1 = \frac {1}{2r}\Par{\exp(r) - \exp(-r)} .
\end{align*}

Now, consider $r_1 = \sqrt {\frac {d-1}{2}}$. For any $r \leq r_1$, $\frac {\dd}{\dd r}[ (d-1)\log r - r^2] = \frac {d-1}{r}-2r \geq 0$. Thus, we have
\begin{equation}\label{eq:rangelb_eq}
I:= \int_{0}^{\frac 1 2 r_1}{\exp((d-1)\log r  -r^2)\dd r} \leq \int_{\frac 1 2 r_1}^{r_1}{\exp((d-1)\log r  -r^2)\dd r}.\end{equation}
Letting $Z \defeq \int_{0}^{\infty}{\exp((d-1)\log r  -r^2)\dd r} $, \eqref{eq:rangelb_eq} shows that $$\int_{\frac 1 2 r_1}^{\infty}{\exp((d-1)\log r  -r^2)\dd r}= Z-I \geq Z-\frac 1 2 Z = \frac 1 2 Z.$$
Then,  for all $i^\star \in [d]$,
\begin{align*}
	\E_{r\sim \pi} g_{i^\star} &  =  \frac{\int_{0}^{\infty}{\exp((d-1)\log r -r^2 ) g_{i^\star}^{(r)}} \dd r} {Z}\\	&  \geq \frac{\int_{\frac 1 2 r_1}^{\infty}{\exp((d-1)\log r -r^2)g_{i^\star}^{(r)}}  \dd r}{Z}\\
	& \geq \frac{2\int_{\frac 1 2 r_1}^{\infty}{\exp((d-1)\log r -r^2)g_{i^\star}^{(r)}}  \dd r}{\int_{\frac 1 2 r_1}^{\infty}{\exp((d-1)\log r  -r^2)\dd r} } \\
	& \geq 2\min_{r\geq r_1}\exp(r-\log (4r)) = 2\exp(r_1 - \log(4r_1)).
\end{align*}
The fourth step follows from $g_{i^\star}^{(r)} \geq \frac 1 {4r} \exp( r)$ for $r \geq r_1$. The last step follows from $r - \log 4r$ increases on $r\geq r_1$. 
Combining with $\E_{y\sim \prob_0}\exp (\inprod x y)\geq \frac 1 {2d} \sum_{i^\star \in [d]} \E_{r\sim \pi(r)} g_{i^\star}$,
\begin{align*}
	\psi(x) - \psi(0)  = \log\E_{y\sim \prob_0}\exp (\inprod x y)  \geq \log \Par{\frac{d - 1}{d} \exp(r_1 - \log(4r_1))} = \Omega(\sqrt d).
\end{align*}
\end{proof}

\section{Deferred proofs from Section~\ref{sec:sampling}}\label{app:rejecthelper}

\restatelargebound*

\begin{proof}
	Clearly, it suffices to show $\E |\lam| \le \frac \delta 4 $. 
	Define random variables,
	\[\Delta_i \defeq |f_i(x_2) - f_i(x_1)|,\; \Delta \defeq \E_{i \sim \ind} \Delta_i,\]\
	whose randomness comes from $x_1,x_2 \sim \gamma_y$.
	By definition,
	\[\E |\lam| = \sum_{b > H} \frac 1 {b!} \E_{x_1, x_2 \sim \gamma}[\Delta]^B.\]
	
	Define $\Phi(t):=\sum_{b>H}\frac{t^b}{b!}$.
	For $H=\lceil 10\log \frac 1 \delta\rceil$, it is straightforward to check $\Phi(t)\le\frac{\delta}{16}$ for any $|t|\le 1$, and for all nonnegative $t$, $\Phi(t) \le \exp(t)$.
	Hence, letting $p_\Delta$ be the density of $\Delta$,
	\begin{equation}\label{eq:splitbyceiling}
		\begin{aligned}
			\E|\lambda| &\le \frac{\delta}{16}+\E[ \1_{\Delta>1}e^{\Delta}] \le \frac{\delta}{16} + \int_1^\infty \exp\Par{\lceil \Delta \rceil} p_\Delta(\Delta) \dd \Delta \\
			&\le \frac \delta {16} + \sum_{k\ge 1}\exp(k+1)\Pr_{x_1,x_2\sim\gamma}[\Delta\ge k].
		\end{aligned}
	\end{equation}
	It now suffices to bound on $\Pr[\Delta\ge k]$.
	Define a function $h_{x_1,x_2}(k):=\Pr_{i\sim \ind}[|f_i(x_1)-f_i(x_2)|\ge k]$.
	Since each $f_i$ is $G$-Lipschitz, and $\gamma_y$ is $\frac{1}{12\eta}$-strongly logconcave in by Lemma~\ref{lem:scllt}, by Lemma~\ref{lem:sc_conc}:
	\begin{align*}
		\E_{x_1,x_2}[h_{x_1,x_2}(k)]=\Pr_{x_1,x_2,i \sim \ind}[|f_i(x_1)-f_i(x_2)|\ge k]\le 4\exp\Par{-\frac{k^2}{96\eta G^2}},
	\end{align*}
	and so by Markov's inequality we have
	\begin{equation}\label{eq:boundh}
		\Pr_{x_1,x_2}[h_{x_1,x_2}(k)\ge e^{-t}]\le 4\exp\Par{t-\frac{k^2}{96\eta G^2}}.
	\end{equation}
	For fixed $x_1,x_2$, as each $f_i$ is $G$-Lipschitz in $\normx{\cdot}$, $|f_i(x_1)-f_i(x_2)|\le G\normx{x_1-x_2}$, and hence
	\begin{align*}
		\E_{i\sim \ind}[|f_i(x_1)-f_i(x_2)|]&\le \min_{k \ge 0}  k+ h_{x_1,x_2}(k) \cdot G\normx{x_1-x_2}.
	\end{align*}
	This then shows that if for some $k$, $h_{x_1,x_2}(k)\le \exp(-\frac{k^2}{192\eta G^2})$,
	\[
	\E_{i\sim \ind}[|f_i(x_1)-f_i(x_2)|]\le k+ \exp\Par{-\frac{k^2}{192\eta G^2}}\cdot G\normx{x_1-x_2},
	\]
	which implies via \eqref{eq:boundh} that
	\begin{equation}\label{eq:smallnorm}
		\begin{aligned}
			&~\Pr_{x_1,x_2}\Brack{\Delta \ge k+ \exp\Par{-\frac{k^2}{192\eta G^2}} \cdot G\normx{x_1-x_2}}\\
			\le&~ \Pr_{x_1,x_2}\Brack{h_{x_1,x_2}(k)\ge \exp\Par{-\frac{k^2}{192\eta G^2}}} \le 4\exp\Par{-\frac{k^2}{192\eta G^2}}.
		\end{aligned}
	\end{equation}
	Further, since $\normx{x_1-\E x_1}$ is a $1$-Lipschitz function in $x_1$ with a nonnegative mean, by Lemma~\ref{lem:sc_conc},
	\begin{equation}\label{eq:bignorm}
		\begin{aligned}
			\Pr\Brack{\normx{x_1-x_2}\ge k} \le 2\Pr\Brack{\normx{x_1-\E x_1}\ge k} \le 2\exp\Par{-\frac{k^2}{96\eta G^2}}.
		\end{aligned}
	\end{equation}
	Combining \eqref{eq:smallnorm} and \eqref{eq:bignorm}, 
	\begin{equation}\label{eq:boundbyceiling}
		\begin{aligned}
			\Pr_{x_1,x_2}[\Delta\ge 2k]
			&= \Pr_{x_1,x_2}\Brack{\Delta \ge 2k\wedge \normx{x_1-x_2}\ge \frac k G}+\Pr_{x_1,x_2}\Brack{\Delta \ge 2k\wedge \normx{x_1-x_2}\le \frac k G} \\
			&\le  2\exp\Par{-\frac{k^2}{96\eta G^2}} +\Pr_{x_1,x_2}\Brack{\Delta \ge k+\exp\Par{-\frac{k^2}{192\eta G^2}}G\normx{x_1-x_2}} \\
			&\le 6\exp\Par{-\frac{k^2}{192\eta G^2}}.
		\end{aligned}
	\end{equation}
	Plugging \eqref{eq:boundbyceiling} into \eqref{eq:splitbyceiling}, and using $\eta^{-1} \ge 10^4 G^2 \log \frac 1 \delta$, we have the desired
	\begin{align*}
		\E(|\lambda|\1_{\rho\notin[0,2]})\le \frac{\delta}{16}+\sum_{k= 1}^{\infty}6\exp\Par{k-\frac{k^2}{768\eta G^2}}\le \frac \delta 4.
	\end{align*}

\end{proof}

\restatesmallbound*

\begin{proof}
	We begin by bounding, analogously to \eqref{eq:splitbyceiling},
	\begin{align}
		\label{eq:decomp_sigma}
		\E[|\sigma|\1_{\rho\notin[0,2]}]\le 2^H\Pr[\rho\notin[0,2]]+\sum_{k \ge 1} \Pr\Brack{|\sigma| > 2^{kH}}2^{(k+1)H}.
	\end{align}
	
	Recall when $a \le H$, $|\jset| \le \half H^2$. By a union bound over Lemma~\ref{lem:sc_conc}, 
	\begin{align*}
		\Pr_{x_1,x_2}\Brack{|f_i(x_1)-f_i(x_2)|\ge \frac{2^k} 3\;\forall i \in \jset}\le H^2\exp\Par{-\frac{4^k}{864\eta G^2}}.
	\end{align*}
	If for each $i\in \jset$, $|f_i(x_1)-f_i(x_2)|\le \frac{2^k} 3$, we have for $k \ge 1$
	\begin{align*}
		|\sigma|=\sum_{b = 0}^H \1_{a \ge b} \prod_{i \in [b]} (f_{\jia}(x_2) - f_{\jia}(x_1))\le 1+\sum_{b=1}^{H}\Par{\frac{2^k}{3}}^{b}\le 2^{kH},
	\end{align*}
	which implies that $\Pr[|\sigma|\ge 2^{kH}]\le H^2\exp(-\frac{4^k}{864\eta G^2})$ and hence using our choice of $\eta\le \frac{1}{500G^2H}$,
	\begin{equation}\label{eq:large_sigma}
		\begin{aligned}
			\sum_{k=1}^{\infty}2^{(k+1)H}\Pr\Brack{|\sigma|>2^{kH}}&\le \sum_{k=1}^{\infty}2^{(k+1)H}H^2\exp\Par{-\frac{4^k}{864\eta G^2}} \\ 
			&\le\sum_{k=1}^{\infty}2^{4kH}\exp(-2\cdot 4^kH)\le \sum_{k=1}^{\infty}2^{-kH}\le \frac \delta 8.
		\end{aligned}
	\end{equation}
	It remains to bound $\Pr[\rho\notin[0,2]]$.
	Recall $\Pr[a > H] \le \frac 1 {H!}$ so since $a \le H \implies \sigma = \rho$, $\Pr[\rho\notin[0,2]]\le\frac{1}{H!}+\Pr[\sigma\notin[0,2]]$.
	Next, by a union bound over Lemma~\ref{lem:sc_conc} and $\half H^2$ indices in $\jset$,
	\begin{align*}
		\Pr_{x_1,x_2}\Brack{|f_i(x_1)-f_i(x_2)|\ge \half\; \forall i \in \ind}\le 2H^2\exp\Par{-\frac{1}{384\eta G^2}}.
	\end{align*}
	Under the event that $|f_i(x_1)-f_i(x_2)|\le \half$ for all $i \in \ind$, $0\le \sigma\le 2$ by definition. Hence we know $\Pr[\sigma\notin[0,2]]\le 2H^2\exp(-\frac{1}{384\eta G^2})$ and by our setting that $H>10\log \frac 1 \delta$, we have
	\begin{align}
		\label{eq:large_rho}
		\Pr[\rho\notin[0,2]]\cdot 2^H\le 2^H\Par{2H^2\exp\Par{-\frac{1}{384\eta G^2}}+\frac{1}{H!}}\le \frac \delta 8.
	\end{align}
	Combining~\eqref{eq:decomp_sigma},~\eqref{eq:large_sigma} and~\eqref{eq:large_rho} completes the proof.
\end{proof} 	\end{appendix}

\end{document}